%
%
%


\documentclass{amsart}

\usepackage{amssymb}

\usepackage{graphicx}


\usepackage{}

\copyrightinfo{2013}{American Mathematical Society}

\newtheorem{theorem}{Theorem}[section]
\newtheorem{lemma}[theorem]{Lemma}

\theoremstyle{definition}
\newtheorem{definition}[theorem]{Definition}

\newtheorem{proposition}[theorem]{Proposition}
\newtheorem{cor}[theorem]{Corollary}

\theoremstyle{remark}
\newtheorem{remark}[theorem]{Remark}

\numberwithin{equation}{section}

\newcommand{\op}{\mbox{Op}}
\newcommand{\heis}{{\bf H}}
\newcommand{\jacob}{{\mathcal J}(\Sigma_g)}
\newcommand{\spacetheta}{{\bf \Theta}_N^\Pi}
\newcommand{\sign}{\mbox{sign}}
\newcommand{\hh}{h}

\begin{document}

\title[From theta functions to TQFT]{From classical
theta functions to topological quantum field theory}


\author{R{\u{a}}zvan Gelca}
\address{Department of Mathematics and Statistics,
Texas Tech University, Lubbock, TX 79409}
\email{rgelca@gmail.com}
\thanks{Research of the first author supported by
the NSF, award No. DMS 0604694}

\author{Alejandro Uribe}
\address{Department of Mathematics,
University of Michigan, Ann Arbor, MI 48109}
\email{uribe@math.lsa.umich.edu}
\thanks{Research of the second author supported by the NSF, award No. DMS
 0805878}
\subjclass[2010]{14K25, 57R56, 57M25, 81S10, 81T45}

\date{}

\begin{abstract}
Abelian Chern-Simons theory relates classical theta functions
to the topological quantum field theory of
the linking number of knots. In this paper we explain how
to derive the constructs of abelian Chern-Simons theory
directly from the theory of classical theta functions.
It turns out that
the theory of theta functions, from the representation
theoretic point of view of A. Weil, is just an instance of
Chern-Simons theory. The group algebra of the
 finite Heisenberg group is described
as an algebra of curves on a surface, and its Schr\"{o}dinger
representation  is obtained as an action on  curves in a
handlebody.  A careful analysis of the discrete
Fourier transform yields the Murakami-Ohtsuki-Okada formula  for
invariants of 3-dimensional manifolds. In this context, we
give an explanation of why the composition of discrete
Fourier transforms
and the non-additivity of the signature of 4-dimensional
manifolds under gluings obey the same formula.
\end{abstract}

\maketitle

\section{Introduction}\label{sec:1}

In this paper we construct the abelian Chern-Simons topological quantum
field theory directly from the theory of classical theta functions, without
the insights of quantum field theory.

It has been known for years, within abelian Chern-Simons
theory,  that classical theta functions
relate to low dimensional topology  \cite{atiyah1}, \cite{witten}.
Abelian Chern-Simons theory is considerably simpler than its
non-abelian counterparts, and has been studied thoroughly (see e.g. \cite{manoliu1},
 \cite{manoliu2}).
Here we do not start with abelian Chern-Simons
theory, but instead  give a direct construction of the associated topological
quantum field theory based on
the theory of  theta functions, and arrive at skein modules
from representation theoretical considerations.

We consider
theta functions in the representation theoretic point
of view of Andr\'e Weil \cite{weil}. As such, the
space of theta functions is endowed with an action of a finite Heisenberg
group (the Schr\"{o}dinger representation), which induces, via a
Stone-von Neumann theorem, the Hermite-Jacobi action of the modular group.
All this structure is what we shall mean by the theory of theta functions.

We show how the group algebra of the  finite Heisenberg group and its Schr\"{o}dinger
representation on the space of theta functions, lead to
algebras of curves on surfaces and their actions on  spaces of curves
in  handlebodies. These notions are formalized using skein modules.

The Hermite-Jacobi representation of the modular group on theta functions
is a discrete analogue
of the metaplectic representation. The modular group  acts by
automorphisms that can be interpreted as discrete Fourier transforms.
We  show that these discrete Fourier transforms can be expressed as
linear combinations of curves. A careful analysis of their structure
and of their relationship to the Schr\"{o}dinger representation
yields the Murakami-Ohtsuki-Okada formula \cite{moo}
of invariants of 3-manifolds.

As a corollary of our discussion we obtain an explanation of why the
composition of discrete Fourier transforms and the non-additivity of the
signature of 4-dimensional manifolds obey the same formula.

The paper uses results and terminology from the theory of theta functions,
quantum mechanics, and low dimensional topology. To make it accessible
to different audiences we include a fair amount of detail.  A more
detailed discussion of these ideas can be found in \cite{gelca}.

Section 2  reviews the theory of  theta functions
on the Jacobian variety of a surface.
The action of the finite Heisenberg group on theta functions is
 defined via Weyl quantization of the Jacobian variety
in a K\"{a}hler polarization. In fact it has
been found recently that Chern-Simons theory is related
to Weyl quantization \cite{gelcauribe1}, \cite{andersen}, and this
was the starting point of our paper.  
The next section exhibits the representation theoretical model
for theta functions. 
In Section~4 we show that this  model for
 theta functions
is topological in nature, and reformulate it using
algebras of curves on surfaces,
together with their action on skeins of curves in handlebodies which are associated to the linking number.

In Section~5 we derive a formula for the  discrete Fourier transform as
a skein. This formula is interpreted in terms of surgery in the
cylinder over the surface.  Section~6 analizes
the exact Egorov identity which relates the Hermite-Jacobi action to
the Schr\"{o}dinger representation. This analysis shows that the topological operation of
handle slides
is allowed over the skeins that represent discrete Fourier transforms, and
this yields in the next section  the abelian Chern-Simons
invariants of 3-manifolds defined by Murakami, Ohtsuki, and Okada. 
 We point out that the
above-mentioned formula was introduced in an ad-hoc manner by its authors
\cite{moo}, our paper  derives it naturally.

Section~8 shows how to associate to
the discrete Fourier transform a 4-dimensional manifold, and
 explains why the
cocycle of the Hermite-Jacobi action is related to that
governing the non-additivity of
the signature of 4-manifolds  \cite{wall}.
Section~9 should be taken as a conclusion; it puts everything
in the context of Chern-Simons theory.

\section{Theta functions}\label{sec:2}
We start with a closed genus $g$ Riemann surface  $\Sigma_g$, and
consider a canonical basis $a_1,a_2,\ldots, a_g,b_1,b_2,\ldots, b_g$
of $H_1(\Sigma_g, {\mathbb R})$, like the one
  in  Figure~\ref{homology}. To it we associate
a basis in the space of holomorphic differential $1$-forms
$\zeta_1,\zeta_2,\ldots, \zeta_g$, defined by the conditions $\int_{a_k}\zeta_j=
\delta_{jk}$, $j,k=1,2,\ldots, g$. The matrix $\Pi$ with entries
$\pi_{jk}=\int_{b_k}\zeta_j$, $ j,k=1,\ldots, g,$
is symmetric with positive definite imaginary part. This means that if
$\Pi=X+iY$, then $X=X^T$, $Y=Y^T$ and $Y>0$.
The  $g\times 2g$ matrix $(I_g,\Pi)$ is called the period matrix of $\Sigma_g$,
its columns $\lambda_1,\lambda_2, \ldots, \lambda_{2g}$,
called periods, generate a
lattice $L(\Sigma_g)$ in
${\mathbb C}^g={\mathbb R}^{2g}$. The complex torus
\begin{eqnarray*}
\jacob ={\mathbb C}^g/L(\Sigma_g)
\end{eqnarray*}
is  the {\em Jacobian variety} of $\Sigma_g$.
 The map
\begin{eqnarray*}
\sum_j\alpha_ja_j+\sum_j\beta_jb_j\mapsto (\alpha_1,\ldots, \alpha_n,\beta_1, \ldots, \beta_n)
\end{eqnarray*}
induces a homeomorphism $H_1(\Sigma_g,{\mathbb R})/H_1(\Sigma_g,
{\mathbb Z})\rightarrow \jacob$.

\begin{figure}[ht]
\centering
\scalebox{.22}{\includegraphics{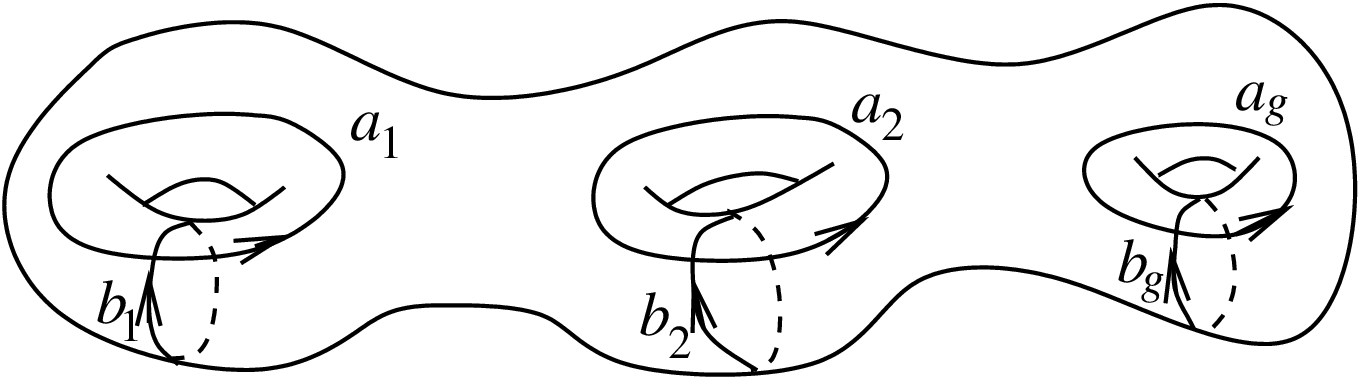}}
\caption{}
\label{homology}
\end{figure}

The complex coordinates $z=(z_1,z_2,\ldots, z_g)$ on $\jacob$
are inherited from ${\mathbb C}^g$. We introduce  real
coordinates $(x,y)=(x_1,x_2,\ldots,x_g,y_1,y_2,\ldots, y_g)$ by
imposing $z=x+\Pi y$. A fundamental domain for the period lattice
in terms of the $(x,y)$ coordinates is
$\{(x,y)\in [0,1]^{2g}\}$.
 $\jacob$  has the canonical
symplectic form 
\begin{eqnarray*}
\omega=2\pi\sum_{j=1}^gdx_j\wedge dy_j.
\end{eqnarray*}
$\jacob$ with the complex structure and  symplectic form
is a K\"{a}hler manifold.
The symplectic form  induces a Poisson bracket on $C^\infty(\jacob)$,
given by $\{f,g\}=\omega(X_f,X_g)$,
where $X_f$ is the Hamiltonian vector field defined by
$df(\cdot)=\omega(X_f,\cdot)$.

{\em Classical theta functions} arise when quantizing $\jacob$
in a K\"{a}hler polarization in the direction
of this Poisson bracket. In this paper
we perform  the quantization in the case where
Planck's constant is the reciprocal of an {\em even} positive
integer: $\hh=\frac{1}{N}$ where $N=2r$, $r\in {\mathbb N}$.
The Hilbert space of the quantization consists of the holomorphic
sections of a line bundle obtained as the tensor product of a line
bundle with curvature $N\omega$ and the square root of the canonical
line bundle. The latter is trivial for the complex torus
and we ignore it. The line bundle with curvature $N\omega$ is the
tensor product of a flat line bundle and the line bundle defined by
the cocycle $\Lambda:{\mathbb C}^g\times L(\Sigma_g)
\rightarrow {\mathbb C}^*$,
\begin{eqnarray*}
\Lambda(z,\lambda_j)=1, \quad \Lambda(z,\lambda_{g+j})=e^{-2\pi i Nz_j-\pi i N\pi_{jj}},
\end{eqnarray*}
$j=1,2,\ldots, g$.
(See e.g.~ \S 4.1.2 of \cite{bgpu} for a discussion of how this cocycle gives rise to a line bundle with
curvature $N\omega$.)
We choose the trivial
flat bundle to tensor with. Then the Hilbert space can
be identified with the space of entire functions on ${\mathbb C}^g$
satisfying the periodicity conditions
\begin{eqnarray*}
f(z+\lambda_j)=f(z), \quad
f(z+\lambda_{g+j})=e^{-2\pi iNz_j-\pi i N\pi_{jj}}f(z).
\end{eqnarray*}
We denote this space by $\spacetheta(\Sigma_g)$; its elements are called
classical theta functions.\footnote{The precise terminology
is canonical theta functions, classical theta functions being
defined by a slight alteration of the periodicity condition. We
use the name classical theta functions
 to emphasize the distinction with the non-abelian theta functions.} A basis of
$\spacetheta(\Sigma_g)$
consists of  the theta series
\begin{eqnarray*}
\theta_\mu^{\Pi}(z)=\sum_{n\in {\mathbb Z}^g}e^{2\pi i N[
\frac{1}{2}\left(\frac{\mu}{N}+n\right)^T \Pi \left(\frac{\mu}{N}+n\right)+
\left(\frac{\mu}{N}+n\right)^Tz]}, \quad \mu\in \{0,1\ldots, N-1\}^g.
\end{eqnarray*}
The  definition of theta series will be extended for convenience to all $\mu\in {\mathbb Z}^g$,
by $\theta_{\mu+N\mu'}=\theta_\mu$ for any $\mu'\in {\mathbb Z}^g$.
Hence the index $\mu$ is taken in ${\mathbb Z}_N^g$.

The inner product 
that makes the theta series into an orthonormal basis is
\begin{eqnarray}\label{innerproduct}
\left<f,g\right>=\left(2 N\right)^{g/2}\det(Y)^{1/2}\int_{[0,1]^{2g}}
f(x,y)\overline{g(x,y)}
e^{-2\pi N y^TYy}dxdy .
\end{eqnarray}
That the theta series form an orthonormal basis is a corollary of the proof of
Proposition~\ref{weylquantization} below.

To define the operators of the quantization, we use the Weyl quantization
method. This quantization method can be defined only on complex
vector spaces, the Jacobian variety is the quotient of such a
space by a discrete group, and the quantization method goes through.
As such, the operator $\op(f)$ associated to a function $f$
 on $\jacob$ is the Toeplitz operator with symbol $e^{\frac{-\hh\Delta _\Pi}{4}}f$ (\cite{folland}
Proposition 2.97)\footnote{The variable of $f$ is not conjugated because we work in the
momentum representation.},
where $\Delta_\Pi$ is the Laplacian on functions,
\[
\Delta_\Pi =  -d^*\circ d, \quad d:C^\infty(\jacob)\to\Omega^1(\jacob).
\]
On a general Riemannian manifold this operator is given in
local coordinates by the formula
\begin{eqnarray*}
\Delta_{\Pi} f=\frac{1}{\sqrt{\det ({\bf g})}}\frac{\partial}{ \partial
    x^j}\left({\bf g}^{jk}
\sqrt{\det ({\bf g})}\frac{\partial f}{\partial x^k}\right),
\end{eqnarray*}
where ${\bf g}=({\bf g}_{jk})$ is the metric and ${\bf g}^{-1}=({\bf
  g}^{jk})$.
In the K\"ahler case, if the K\"ahler form is given in holomorphic coordinates by
\[
\omega = \frac{i}{2}\, \sum_{j,k} {\bf h}_{j k}\ dz_j\wedge d\bar{z}_k,
\]
then
\[
\Delta_\Pi = 4\sum_{j,k} {\bf h}^{j k}\ \frac{\partial ^2\ }{\partial z_j \partial \bar{z}_k},
\]
where $({\bf h}^{j k}) = ({\bf h}_{j k})^{-1}$.
In our situation, in the coordinates $z_j,\bar{z}_j$, $j=1,2,\ldots, g$, one
computes that $({\bf h}_{j k})^{-1}=Y^{-1}$
and therefore
$ ({\bf h}^{j k}) = Y$
(recall that $Y$ is the imaginary part of the matrix $\Pi$).
For Weyl quantization one introduces
a factor of $\frac{1}{2\pi}$ in front of the operator.  As such,
 the Laplace (or rather Laplace-Beltrami)  operator $\Delta_{\Pi}$ is equal to
\[
 \sum_{j,\,k=1}^g Y_{jk}
\left[(I_g+iY^{-1}X)\nabla_x-iY^{-1}\nabla_y\right]_j
\left[(I_g-iY^{-1}X)\nabla_x+iY^{-1}\nabla_y\right]_k.
\]
(A word about the notation being used:  $\nabla$ represents the usual (column) vector of partial
derivatives in the indicated variables, so that each object in the square brackets
is a column vector of partial derivatives.  The subindices $j, k$ are the corresponding
components of those vectors.)  A tedious calculation that we omit results in the
following formula for the Laplacian in the $(x,y)$ coordinates:
\begin{eqnarray*}
\Delta_\Pi = \sum (Y+ XY^{-1}X)_{jk}\,  &\frac{\partial^2\ }{\partial x_j \partial x_k}
-2 (XY^{-1})_{jk} \frac{\partial^2\ }{\partial x_j \partial y_k} + Y^{jk}
 \frac{\partial^2\ }{\partial y_j \partial y_k} .
\end{eqnarray*}
We will only need to apply $\Delta_\Pi$ explicitly to exponentials, as part of the
proof of the following basic proposition.  Note that the exponential function
\[
e^{2\pi i(p^Tx+q^Ty)}
\]
defines a function on the Jacobian provided $p, q\in {\mathbb Z}^g$.

\begin{proposition}\label{weylquantization}
The Weyl quantization of the exponentials is given by
\begin{eqnarray*}
\op\left(e^{2\pi i(p^Tx+q^Ty)}\right)\theta_\mu^\Pi(z)=
e^{-\frac{\pi i}{N}p^Tq-\frac{2\pi i}{N}\mu^Tq}\theta_{\mu+p}^\Pi(z).
\end{eqnarray*}
\end{proposition}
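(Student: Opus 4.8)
The plan is to peel off the definition of Weyl quantization and reduce the statement to a single Gaussian integral. By definition $\op(f)$ is the Toeplitz operator with symbol $\tilde{f}:=e^{-\frac{1}{4N}\Delta_\Pi}f$, so $\op(f)\theta_\mu^\Pi=P\bigl(\tilde{f}\,\theta_\mu^\Pi\bigr)$, where $P$ is the orthogonal projection of $L^2$-sections onto $\spacetheta(\Sigma_g)$ (the product makes sense because, as noted, $e^{2\pi i(p^Tx+q^Ty)}$ descends to $\jacob$ when $p,q\in{\mathbb Z}^g$). The first step is to compute $\tilde{f}$ for $f=e^{2\pi i(p^Tx+q^Ty)}$. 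Since the coefficients of $\Delta_\Pi$ in the $(x,y)$-coordinates are constant (a feature special to the torus), this exponential is an eigenfunction of $\Delta_\Pi$: substituting it into the displayed formula for $2\pi\Delta_\Pi$ and using that each second-order derivative $\partial^2/\partial x_j\partial x_k$, $\partial^2/\partial x_j\partial y_k$, $\partial^2/\partial y_j\partial y_k$ brings down $-4\pi^2$ times the corresponding product of components of $(p,q)$, one finds, after the cross terms assemble into a perfect square (using $X=X^T$, $Y=Y^T$),
\[
\Delta_\Pi\, e^{2\pi i(p^Tx+q^Ty)}=-2\pi\bigl(p^TYp+(q-Xp)^TY^{-1}(q-Xp)\bigr)\,e^{2\pi i(p^Tx+q^Ty)}.
\]
Hence $\tilde{f}=c\,e^{2\pi i(p^Tx+q^Ty)}$ with $c=e^{\frac{\pi}{2N}(p^TYp+(q-Xp)^TY^{-1}(q-Xp))}$ a positive scalar, and therefore $\op(f)\theta_\mu^\Pi=c\,P\bigl(e^{2\pi i(p^Tx+q^Ty)}\theta_\mu^\Pi\bigr)$.

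Next, since the $\theta_{\mu'}^\Pi$ form an orthonormal basis, $P\bigl(e^{2\pi i(p^Tx+q^Ty)}\theta_\mu^\Pi\bigr)=\sum_{\mu'}\bigl\langle e^{2\pi i(p^Tx+q^Ty)}\theta_\mu^\Pi,\theta_{\mu'}^\Pi\bigr\rangle\,\theta_{\mu'}^\Pi$, and I would compute these matrix elements directly from \eqref{innerproduct}. Rewriting each theta series in the real coordinates through $z=x+\Pi y$ and $\Pi=X+iY$, one sees that $\theta_\mu^\Pi(x,y)$ equals $e^{\pi N y^TYy}$ times $\sum_{n\in{\mathbb Z}^g}e^{2\pi iN[\frac12 a_n^TXa_n+a_n^T(x+Xy)]}e^{-\pi N(a_n+y)^TY(a_n+y)}$, where $a_n=\frac{\mu}{N}+n$, and similarly for $\overline{\theta_{\mu'}^\Pi}$ with $a'_m=\frac{\mu'}{N}+m$. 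The two factors $e^{\pi N y^TYy}$ cancel exactly against the weight $e^{-2\pi N y^TYy}$ in \eqref{innerproduct}, leaving a genuinely convergent Gaussian under the integral sign.

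Now carry out the integral. Integrating first over $x\in[0,1]^g$, the only $x$-dependence is the pure phase $e^{2\pi i(p+\mu-\mu'+N(n-m))^Tx}$, whose integral over the cube is $1$ when $p+\mu-\mu'+N(n-m)=0$ and $0$ otherwise. This annihilates every matrix element unless $\mu'\equiv\mu+p\pmod N$; taking the representative $\mu'=\mu+p$ (reading $\theta_{\mu+p}^\Pi$ via $\theta_{\mu+N\mu''}^\Pi=\theta_\mu^\Pi$) it also forces $m=n$, so the double sum collapses to a single sum over $n$. Completing the square in $y$, the Gaussian factor becomes $e^{-2\pi N(c_n+y)^TY(c_n+y)}\cdot e^{-\frac{\pi}{2N}p^TYp}$ with $c_n=\frac{2\mu+p}{2N}+n$; the remaining oscillatory factors telescope, the $X$-terms into $e^{-2\pi i p^TX(c_n+y)}$ and $e^{2\pi iq^Ty}$ into $e^{2\pi iq^T(c_n+y)}$ times the $n$-independent constant $e^{-2\pi iq^Tc_n}=e^{-\frac{2\pi i}{N}\mu^Tq-\frac{\pi i}{N}p^Tq}$ (using $q^Tn\in{\mathbb Z}$). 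Thus the summand depends on $n$ and $y$ only through $t:=c_n+y$, so $\sum_{n\in{\mathbb Z}^g}\int_{[0,1]^g}$ turns into $\int_{{\mathbb R}^g}$, and one is reduced to the Gaussian integral $\int_{{\mathbb R}^g}e^{2\pi i(q-Xp)^Tt}e^{-2\pi N t^TYt}\,dt=(2N)^{-g/2}\det(Y)^{-1/2}e^{-\frac{\pi}{2N}(q-Xp)^TY^{-1}(q-Xp)}$. Together with the scalar $e^{-\frac{\pi}{2N}p^TYp}$ this cancels the prefactor $(2N)^{g/2}\det(Y)^{1/2}$ in \eqref{innerproduct}, so the matrix element equals $c^{-1}e^{-\frac{\pi i}{N}p^Tq-\frac{2\pi i}{N}\mu^Tq}$; multiplying by the scalar $c$ from the first step produces exactly $\op\bigl(e^{2\pi i(p^Tx+q^Ty)}\bigr)\theta_\mu^\Pi=e^{-\frac{\pi i}{N}p^Tq-\frac{2\pi i}{N}\mu^Tq}\theta_{\mu+p}^\Pi$.

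The Gaussian integration is routine; the point that requires care is the bookkeeping in the last step. One must check that the $n$-dependent phases produced by the quadratic cross-terms $\frac{1}{2}a_n^TXa_n-\frac{1}{2}(a'_n)^TX(a'_n)$ and by the linear term $(a_n-a'_n)^TXy$ recombine — discarding integer multiples of $2\pi i$, which is exactly where integrality of $p^Tn$ and $q^Tn$ is used — so that the whole summand becomes a function of $c_n+y$ alone; this is what licenses replacing $\sum_n\int_{[0,1]^g}$ by $\int_{{\mathbb R}^g}$. Second, the two non-oscillatory scalars, $e^{-\frac{\pi}{2N}p^TYp}$ from completing the $y$-square and $e^{-\frac{\pi}{2N}(q-Xp)^TY^{-1}(q-Xp)}$ from the value of the Gaussian integral, multiply to exactly $c^{-1}$, so the heat-operator scalar $c$ is consumed and only the asserted phase survives. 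Finally, one uses $\theta_{\mu+N\mu''}^\Pi=\theta_\mu^\Pi$ to make sense of $\theta_{\mu+p}^\Pi$ when $\mu+p$ lies outside $\{0,\dots,N-1\}^g$, which is needed since $p$ need not lie in that range.
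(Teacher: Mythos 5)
Your proposal is correct and follows essentially the same route as the paper's proof: both reduce the claim to (i) the observation that $e^{2\pi i(p^Tx+q^Ty)}$ is an eigenfunction of $\Delta_\Pi$ with eigenvalue $-2\pi\bigl(p^TYp+(q-Xp)^TY^{-1}(q-Xp)\bigr)$, and (ii) a direct computation of the Toeplitz matrix coefficients $\langle E_{p,q}\theta_\mu,\theta_\nu\rangle$ in the $(x,y)$-coordinates, where the $x$-integral enforces $\nu\equiv\mu+p$ and the sum over $n$ combined with the $y$-integral over $[0,1]^g$ unfolds into a single Gaussian integral over ${\mathbb R}^g$ whose value cancels the heat-operator scalar. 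The only differences are cosmetic (you apply the heat operator first and use the symmetric midpoint $c_n$ for the change of variables, whereas the paper substitutes $w=y+n_\mu$ and applies the heat-operator correction at the end).
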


\begin{proof}
Let us introduce some useful notation local to the proof.  Note that $N$ and $\Pi$ are fixed throughout.
\begin{enumerate}
\item $e(t):= \exp(2\pi i Nt)$,
\item For $n\in{\mathbb Z}^g$ and $\mu\in\{0, 1, \ldots N-1\}^g$,
$n_\mu := n+\frac{\mu}{N}$.
\item $Q(n_\mu) := \frac{1}{2}(n_\mu^T\Pi n_\mu)$
\item $E_{p,q}(x,y) = e^{2\pi i(p^Tx+q^Ty)} = e(\frac{1}{N}(p^Tx+q^Ty))$.
\end{enumerate}
With these notations, in the $(x, y)$ coordinates
\[
\theta_\mu(x,y) = \sum_{n\in{\mathbb Z}^g} e(Q(n_\mu))\, e(n_\mu^T(x+\Pi y)).
\]
We first compute the matrix coefficients of the Toeplitz operator with symbol $E_{p,q}$, namely
$\langle E_{p,q}\theta_\mu\,,\,\theta_\nu\rangle$, which is
\begin{eqnarray*}
 (2N)^{g/2}\det(Y)^{1/2}\int_{[0,1]^{2g}} E_{p,q}(x,y)\theta_\mu(x,y)
\overline{\theta_\nu(x,y)}\, e^{-2\pi Ny^TYy}\, dxdy.
\end{eqnarray*}
Then a calculation shows that
\begin{eqnarray*}
&& E_{p,q}(x,y)\theta_\mu(x,y)
\overline{\theta_\nu(x,y)}\\
&& =\sum_{m,n\in {\mathbb Z}^g}e\Bigl[ Q(n_\mu) - \overline{Q(m_\nu)}+ (n_{\mu+p}-m_\nu)^Tx
+\bigl(\frac{q^T}{N}+n_\mu^T\Pi -m^T_\nu\overline{\Pi}\bigr)y\Bigr].
\end{eqnarray*}
The integral over $x\in [0,1]^g$ of the $(m,n)$ term will be non-zero iff
\[
N\Bigl(n_{\mu+p}-m_\nu\Bigr) = \mu+p-\nu + N(n-m)= 0,
\]
in which case the integral will be equal to one.  Therefore $\langle E_{p,q}\theta_\mu\,,\,\theta_\nu\rangle=0$
unless %
$[\nu] = [\mu +p]$,
where the brackets represent equivalence classes in ${\mathbb Z}^g_N$.  This shows that the
Toeplitz operator with multiplier $E_{p,q}$ maps $\theta_\mu$ to a scalar times $\theta_{\mu+p}$.  We now compute the scalar.

Taking $\mu$ in the fundamental domain $\{0,1,\cdots,N-1\}^g$ for ${\mathbb Z}^g_N$, there is a unique
representative, $\nu$, of $[\mu+p]$ in the same domain.  This $\nu$ is of the form
\[
\nu = \mu+p + N \kappa
\]
for a unique $\kappa\in {\mathbb Z}^g$.  With respect to the previous notation, $\kappa = n-m$.

It follows that
\begin{eqnarray*}
&&\langle E_{p,q}\theta_\mu\,,\,\theta_\nu\rangle
=(2N)^{g/2}\det(Y)^{1/2}\sum_{n\in{\mathbb Z}^g} \int_{[0,1]^{g}}
e\left[ Q(n_\mu) - \overline{Q(m_\nu)} \right.\\
&&\quad +\left.\bigl(\frac{q^T}{N}+n_\mu^T\Pi -m^T_\nu\overline{\Pi}\bigr)y+iy^TYy\right]\,dy,
\end{eqnarray*}
where $m  =n-\kappa$ in the $n$th term.
Using that $m_\nu = n_\mu + \frac{1}{N}p$, one obtains
\[
Q(n_\mu) - \overline{Q(m_\nu)} = i n_\mu^T Y n_\mu - \frac{1}{N}p^T \overline{\Pi}n_\mu
-\frac{1}{N^2}\overline{Q(p)}
\mbox{ and }
n_\mu^T\Pi -m^T_\nu\overline{\Pi} = 2in_\mu^TY -\frac{1}{N} p^T \overline{\Pi},
\]
and so we can write
\[
\begin{split}
\langle E_{p,q}\theta_\mu\,,\,\theta_\nu\rangle& =  (2N)^{g/2}\det(Y)^{1/2}\, e\Bigl[-\frac{1}{N^2}\overline{Q(p)}\Bigr]
\sum_{n\in{\mathbb Z}^g} \int_{[0,1]^{g}}\,dy \\ &
e\Bigl[ i n_\mu^T Y n_\mu - \frac{1}{N}p^T \overline{\Pi}n_\mu
 + \bigl(\frac{1}{N}
q^T+2in_\mu^TY -\frac{1}{N} p^T \overline{\Pi} \bigr) y + iy^TYy
 \Bigr].
\end{split}
\]
Making the change of variables $w:= y + n_\mu$ in the summand $n$, the argument of the function $e$  can be seen to be equal to
\[
iw^TYw + \frac{1}{N}\bigl(q^T-p^T\overline{\Pi}\bigr) w - \frac{1}{N}q^Tn_\mu.
\]
Since $q$ and $n$ are integer vectors,
\[
e\Bigl(\frac{1}{N} q^Tn_\mu\Bigr) = e^{-2\pi i q^T\mu/N}.
\]
The dependence on $n$ of the integrand is a common factor that comes
out of the summation sign. The series now is of integral over the translates
of $[0,1]^n$ that tile the whole space. Therefore $\langle E_{p,q}\theta_\mu\,,\,\theta_\nu\rangle$ is equal to
\[
(2N)^{g/2}\det(Y)^{1/2}\,e\Bigl[-\frac{1}{N^2}\overline{Q(p)}\Bigr]
e^{-2\pi i q^T\mu/N} \int_{{\mathbb R}^g} e^{-2\pi Nw^TYw + 2\pi i\bigl(q^T-p^T\overline{\Pi}\bigr) w}\, dw.
\]
A calculation of the integral\footnote{$\int_{{\mathbb R}^g} e^{-x^TAx + b^Tx}\,dx =
\Bigl(\frac{\pi^g}{\det A}\Bigr)^{1/2}\, e^{\frac{1}{4}b^TA^{-1}b}$}
yields that it is equal to
\[
\Bigl(\frac{1}{2N}\Bigr)^{g/2}\; \det (Y)^{-1/2}\;
e^{-\frac{\pi}{2 N}(q^T -p^T\overline{\Pi})Y^{-1}(q -\overline{\Pi}p)}.
\]
and so
\[
\langle E_{p,q}\theta_\mu\,,\,\theta_\nu\rangle =
  e^{-\frac{\pi i}{N} p^T\overline{\Pi}p}\;
e^{-2\pi i q^T\mu/N}\;e^{-\frac{\pi}{2N}(q^T -p^T\overline{\Pi})Y^{-1}(q -\overline{\Pi}p)}.
\]
The exponent on the right-hand side is $(-\pi/N)$ times
\begin{eqnarray*}
&&2iq^T\mu +ip^T(X-iY)p + \frac{1}{2}\Bigl([q^T - p^T(X-iY)]Y^{-1}[q-(X-iY)p]
\Bigr) \\
&&=2iq^T\mu +ip^T(X-iY)p + \frac{1}{2}\Bigl([ q^TY^{-1}-p^TXY^{-1}+ip^T][q-Xp+iYp]\Bigr)\\
&&=2iq^T\mu +ip^T(X-iY)p + \frac{1}{2}\Bigl(q^TY^{-1}q-2q^TY^{-1}Xp +2iq^Tp  \\
&& \quad + p^TXY^{-1}Xp - 2ip^TXp - p^TYp\Bigr) =2iq^T\mu + iq^Tp+ \frac{1}{2}{\mathcal R}
\end{eqnarray*}
where
\[
{\mathcal R} := q^TY^{-1}q-2q^TY^{-1}Xp+p^T( XY^{-1}X+Y )p.
\]
That is,
\begin{equation}\label{matcoeff}
\langle E_{p,q}\theta_\mu\,,\,\theta_\nu\rangle = e^{-\frac{2\pi i}{N}q^t\mu - \frac{\pi i}{N} q^Tp - \frac{\pi}{2N}{\mathcal R}}.
\end{equation}
\bigskip
On the other hand, it is easy to check that
$
\Delta_\Pi (E_{p,q}) = -(2\pi)^2 {\mathcal R} E_{p,q} ,
$
and therefore
\[
e^{-\frac{\Delta_\Pi}{4N}} (E_{p,q}) = e^{\frac{\pi}{2N}{\mathcal R}}\,E_{p,q},
\]
so that, by (\ref{matcoeff})
\[
\langle e^{-\frac{\Delta_\Pi}{4N}} (E_{p,q})\theta_\mu\,,\,\theta_\nu\rangle = e^{-\frac{2\pi i}{N}q^t\mu - \frac{\pi i}{N} q^Tp},
\]
as desired.
\end{proof}

\medskip
Let us focus on the group
of quantized exponentials. First note that the symplectic form $\omega$ induces
a nondegenerate bilinear form on ${\mathbb R}^{2g}$, which we denote also by
$\omega$, given by
\begin{eqnarray}\label{bilinform}
\omega((p,q),(p',q'))=\sum_{j=1}^g\left|
\begin{array}{cc}
p_j&q_j\\
p'_j&q'_j
\end{array}
\right|.
\end{eqnarray}

As a corollary of Proposition~\ref{weylquantization} we obtain the following
result.

\begin{proposition}\label{multofops}
Quantized exponentials satisfy the multiplication rule
\begin{eqnarray*}
&&\op\left(e^{2\pi i (p^Tx+q^Ty)}\right)\op \left(e^{2\pi i
    (p'^Tx+q'^Ty)}\right)\\
&& \quad =
e^{\frac{\pi i}{N}\omega((p,q),(p',q'))}\op(e^{2\pi i ((p+p')^Tx+(q+q')^Ty)}).
\end{eqnarray*}
\end{proposition}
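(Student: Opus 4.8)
The plan is to verify this operator identity by evaluating both sides on the orthonormal basis $\{\theta_\mu^\Pi\}_{\mu\in{\mathbb Z}_N^g}$ of $\spacetheta(\Sigma_g)$. Since the theta series form a basis, it suffices to check that the two operators agree on each $\theta_\mu^\Pi$, and each can be computed explicitly using Proposition~\ref{weylquantization}.

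First I would apply the rightmost factor on the left-hand side, $\op\left(e^{2\pi i(p'^Tx+q'^Ty)}\right)$, to $\theta_\mu^\Pi$; by Proposition~\ref{weylquantization} this yields $e^{-\frac{\pi i}{N}p'^Tq'-\frac{2\pi i}{N}\mu^Tq'}\,\theta_{\mu+p'}^\Pi$. Then I would apply $\op\left(e^{2\pi i(p^Tx+q^Ty)}\right)$ to the result, again by Proposition~\ref{weylquantization}, but now with the index $\mu+p'$ in place of $\mu$. The crucial observation is that the phase acquired at this second step is governed by $(\mu+p')^Tq$ rather than $\mu^Tq$, so besides the terms one would naively expect there appears the asymmetric term $p'^Tq$, and this is exactly what will generate the symplectic cocycle. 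The net effect is that the left-hand side sends $\theta_\mu^\Pi$ to $e^{-\frac{\pi i}{N}(p^Tq+p'^Tq')-\frac{2\pi i}{N}(\mu^Tq+\mu^Tq'+p'^Tq)}\,\theta_{\mu+p+p'}^\Pi$.

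Next I would compute the right-hand side directly from Proposition~\ref{weylquantization} with $p\mapsto p+p'$ and $q\mapsto q+q'$, obtaining $e^{\frac{\pi i}{N}\omega((p,q),(p',q'))}\,e^{-\frac{\pi i}{N}(p+p')^T(q+q')-\frac{2\pi i}{N}\mu^T(q+q')}\,\theta_{\mu+p+p'}^\Pi$. Expanding $(p+p')^T(q+q')=p^Tq+p^Tq'+p'^Tq+p'^Tq'$ and $\mu^T(q+q')=\mu^Tq+\mu^Tq'$, and comparing exponents with the expression from the previous paragraph, the discrepancy is precisely $\frac{\pi i}{N}\bigl(p^Tq'-p'^Tq\bigr)$, where I use that $q^Tp'=p'^Tq$ since these are scalars. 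By the formula (\ref{bilinform}), $p^Tq'-p'^Tq=\omega((p,q),(p',q'))$, so the discrepancy is exactly the claimed prefactor $e^{\frac{\pi i}{N}\omega((p,q),(p',q'))}$, and the two sides agree on every $\theta_\mu^\Pi$.

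I do not anticipate any real obstacle: the argument is a two-line reduction to Proposition~\ref{weylquantization} followed by elementary manipulation of quadratic exponents. The only point requiring care is bookkeeping — tracking which index feeds into the phase at each of the two applications of Proposition~\ref{weylquantization}, and recognizing that the linking-type term $p'^Tq$ is not symmetric under interchanging the two exponentials, so it does not cancel and survives as the nontrivial $2$-cocycle $\frac{1}{N}\omega$.
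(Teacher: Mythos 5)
Your computation is correct and is exactly the intended argument: the paper states this proposition as a direct corollary of Proposition~\ref{weylquantization} without writing out the details, and your two-step application of that proposition to the basis $\theta_\mu^\Pi$, with the cross term $p'^Tq$ producing the cocycle $\omega((p,q),(p',q'))=p^Tq'-p'^Tq$, is precisely the verification the authors have in mind.
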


This prompts us to define the Heisenberg group
\begin{eqnarray*}
\heis ({\mathbb Z}^g)=\{(p,q,k), \; p,q\in{\mathbb Z}^g, k\in {\mathbb Z}\}
\end{eqnarray*}
with multiplication
\begin{eqnarray*}
(p,q,k)(p',q',k')=(p+p',q+q',k+k'+\omega((p,q),(p',q'))).
\end{eqnarray*}
This group is a ${\mathbb Z}$-extension of $H_1(\Sigma_g,{\mathbb Z})$,
with the standard inclusion of $H_1(\Sigma_g,{\mathbb Z})$ into
it given by $$\sum p_ja_j+\sum q_kb_k\mapsto
(p_1,\ldots ,p_g,q_1,\ldots, q_g, 0).$$

The map
\begin{eqnarray*}
(p,q,k)\mapsto \op\left(e^{\frac{\pi i}{N}k}e^{ 2\pi i(p^Tx+q^Ty)}\right)
\end{eqnarray*}
defines a representation  of $\heis({\mathbb Z}^g)$ on theta functions.
To make this representation faithful, we factor it  by its kernel.

\begin{proposition}\label{kernel}
The set of elements in $\heis ({\mathbb Z}^g)$ that act on theta
functions as identity operators is the normal subgroup consisting of
the  $N$th powers of elements of the form $(p,q,k)$ with $k$ even.
The quotient group is isomorphic to a finite Heisenberg group.
\end{proposition}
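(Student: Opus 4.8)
The plan is to combine Proposition~\ref{weylquantization} with the definition of the representation to obtain a completely explicit formula for the operator attached to $(p,q,k)$, and then read off exactly when that operator is the identity.

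Since $e^{\pi i k/N}$ is a scalar and $\op$ is linear in its symbol, Proposition~\ref{weylquantization} gives that $(p,q,k)$ sends $\theta_\mu^\Pi$ to $e^{\frac{\pi i}{N}(k-p^Tq)}e^{-\frac{2\pi i}{N}\mu^Tq}\,\theta_{\mu+p}^\Pi$. For this to be the identity on $\spacetheta(\Sigma_g)$ we first need $\theta_{\mu+p}^\Pi=\theta_\mu^\Pi$ for every $\mu$; since the $\theta_\mu^\Pi$ with $\mu\in{\mathbb Z}_N^g$ form a basis and $\theta_{\mu+N\mu'}=\theta_\mu$, this forces $p\in N{\mathbb Z}^g$. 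Granting that, the operator is diagonal in the theta basis with eigenvalue $e^{\frac{\pi i}{N}(k-p^Tq)}e^{-\frac{2\pi i}{N}\mu^Tq}$ on $\theta_\mu^\Pi$; requiring this to equal $1$ for all $\mu\in\{0,\dots,N-1\}^g$ gives, by letting $\mu$ run over the standard basis vectors, $q\in N{\mathbb Z}^g$, and then (taking $\mu=0$) $k\equiv p^Tq\pmod{2N}$.

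Here the hypothesis $N=2r$ enters: writing $p=Np'$ and $q=Nq'$ we get $p^Tq=N^2p'^Tq'\in 2N{\mathbb Z}$, so the last condition reduces to $k\in 2N{\mathbb Z}$. Hence the kernel consists exactly of the triples $(Np',Nq',2Nk'')$ with $p',q'\in{\mathbb Z}^g$ and $k''\in{\mathbb Z}$. To recognize this set as the $N$th powers described in the statement I would record the elementary identity $(p,q,k)^n=(np,nq,nk)$, which holds because $\omega((p,q),(jp,jq))=j\,\omega((p,q),(p,q))=0$; thus $(p',q',2k'')^N=(Np',Nq',2Nk'')$ and conversely, so the kernel coincides with the set of $N$th powers of elements $(p,q,k)$ with $k$ even. (That this set is a subgroup is automatic since it is a kernel; one may also check closure directly, again using $N^2\in 2N{\mathbb Z}$.)

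Finally, normality is automatic because the kernel of a homomorphism is normal, and the quotient $\heis({\mathbb Z}^g)/\ker$ is the group of triples $(p,q,k)$ with $p,q$ read modulo $N$ and $k$ read modulo $2N$, under the induced multiplication — a finite Heisenberg group (of order $2N^{2g+1}$, with center ${\mathbb Z}_{2N}$). The only real obstacle is the bookkeeping with the even integer $N$ together with the insistence that the scalar vanish for \emph{every} $\mu$ rather than merely for some; everything else is formal.
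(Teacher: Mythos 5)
Your computation of the kernel is correct and is in substance the paper's own argument: both reduce to the formula $(p,q,k)\theta_\mu^\Pi=e^{\frac{\pi i}{N}(k-p^Tq)}e^{-\frac{2\pi i}{N}\mu^Tq}\theta_{\mu+p}^\Pi$, force $p\in N{\mathbb Z}^g$ from the basis property, deduce $q\in N{\mathbb Z}^g$ by varying $\mu$, and then use the evenness of $N$ (so that $p^Tq=N^2p'^Tq'\in 2N{\mathbb Z}$) to conclude $k\in 2N{\mathbb Z}$; the identification with $N$th powers via $(p,q,k)^n=(np,nq,nk)$ is the same observation the paper makes. The one place you are too quick is the final sentence. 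The assignment $(p,q,k)\mapsto(p\bmod N,\,q\bmod N,\,k\bmod 2N)$ is \emph{not} constant on cosets of the kernel: multiplying $(p,q,k)$ on the right by $(Na,Nb,2Nc)$ shifts the last coordinate by $2Nc+N\omega((p,q),(a,b))$, and $N\omega((p,q),(a,b))$ need not be divisible by $2N$. Correspondingly, the cocycle $\omega((p,q),(p',q'))$ read modulo $2N$ is not well defined on ${\mathbb Z}_N^{2g}$, so ``triples with $p,q$ read mod $N$ and $k$ read mod $2N$ under the induced multiplication'' does not, as literally stated, describe the quotient. This is precisely what the paper's map $F(p,q,k)=(p\bmod N,\,q\bmod N,\,k+pq\bmod 2N)$, onto the group with the modified cocycle $2pq'$, is there to repair. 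Your structural conclusions about the quotient (order $2N^{2g+1}$, center ${\mathbb Z}_{2N}$, hence a finite Heisenberg group) are correct, but to exhibit the isomorphism you need this twisted identification rather than naive coordinate reduction.
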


Recall (cf. \cite{polishchuk}) that a finite Heisenberg group $H$ is a central extension
\begin{eqnarray*}
0\rightarrow {\mathbb Z}_{m}\rightarrow H\rightarrow K\rightarrow 0
\end{eqnarray*}
where $K$ is a finite abelian group such that the commutator pairing
$K\times K\rightarrow {\mathbb Z}_m$, $(k,k')\mapsto [\tilde{k},\tilde{k}']$ ($\tilde{k}$,
and $\tilde{k'}$ being arbitrary lifts of $k$ and $k'$ to $H$)
identifies $K$ with the group of homomorphisms from $K$ to ${\mathbb Z}_m$.

\begin{proof}
By Proposition~\ref{weylquantization},
\begin{eqnarray*}
(p,q,k)\theta_\mu^\Pi(z)=
e^{-\frac{\pi i}{N}p^Tq-\frac{2\pi i}{N}\mu^Tq+\frac{\pi}{N}k}\theta_{\mu+p}^\Pi(z).
\end{eqnarray*}
For $(p,q,k)$ to act as the identity operator, we should have
\begin{eqnarray*}
e^{-\frac{\pi i}{N}p^Tq-\frac{2\pi i}{N}\mu^Tq}\theta_{\mu+p}^\Pi(z)=\theta_{\mu}^\Pi(z)
\end{eqnarray*}
for all $\mu\in \{0,1,\ldots, N-1\}^g$. Consequently, $p$ should be
in $N{\mathbb Z}^g$. Then $p^Tq$ is a multiple of $N$, so the coefficient
  $e^{-\frac{\pi i}{N}p^Tq-\frac{2\pi i}{N}\mu^Tq+\frac{\pi i}{N}k}$
equals $\pm e^{-\frac{2\pi
      i}{N}\mu^Tq+\frac{\pi i}{N}k}$.
This coefficient should be equal to $1$. For $\mu=(0,0,\ldots, 0)$
this implies that $-p^Tq+k$ should be an even multiple of $N$.
But then by varying $\mu$
we conclude that $q$ is a multiple of $N$. Because $N$ is even, it
follows that $p^Tq$ is an even multiple of $N$, and consequently $k$ is
an even multiple of $N$. Thus any element in the kernel of the representation
must belong to $N{\mathbb Z}^{2g}\times (2N){\mathbb Z}$. It is easy to see
that any element of this form is in the kernel. These are precisely
the elements of the form $(p,q,k)^N$ with $k$ even.

The quotient of $\heis ({\mathbb Z}^g)$ by the kernel of the representation
is a ${\mathbb Z}_{2N}$-extension of the finite abelian group
${\mathbb Z}_N^{2g}$, thus is a finite Heisenberg group. This group
is isomorphic to
\begin{eqnarray*}
\{(p,q,k)\,|\, p,q\in {\mathbb Z}_N^g, k\in {\mathbb Z}_{2N}\}
\end{eqnarray*}
with the multiplication rule
\begin{eqnarray*}
(p,q,k)(p',q',k')=(p+p',q+q',k+k'+2pq').
\end{eqnarray*}
The isomorphism is induced by the map $F:\heis({\mathbb Z}^g)\rightarrow
{\mathbb Z}_N^{2g}\times {\mathbb Z}_{2N}$,
\begin{eqnarray*}
F(p,q,k)=(p\mbox{ mod }N, q\mbox{ mod }N, k+pq \mbox{ mod }2N).\qedhere
\end{eqnarray*}
\end{proof}

We denote by $\heis({\mathbb Z}_N^g)$ this finite Heisenberg group and
by $\exp(p^TP+q^TQ+kE)$  the image
of 
$(p,q,k)$ in it. The representation of $\heis({\mathbb Z}_N^g)$
on the space of theta functions is called the {\em Schr\"{o}dinger
representation}. It is an analogue, for the case of the $2g$-dimensional torus,
of the standard Schr\"{o}dinger representation of the Heisenberg
group with real entries on $L^2({\mathbb R})$.
In particular we have
\begin{eqnarray}\label{schroedinger}
\left.
\begin{array}{l}
\exp(p^TP)\theta_\mu^\Pi(z)=\theta_{\mu+p}^\Pi(z)\\
\exp(q^TQ)\theta_\mu^\Pi(z)=e^{-\frac{2\pi i}{N}q^T\mu}\theta_{\mu}^\Pi(z)\\
\exp(kE)\theta_\mu^\Pi(z)=e^{\frac{\pi i}{N}k}\theta_{\mu}^\Pi(z).
\end{array}
\right.
\end{eqnarray}

\begin{theorem}\label{svn}
(Stone-von Neumann)  The Schr\"{o}dinger representation
of $\heis ({\mathbb Z}_N^g)$ is the {\em unique} (up to an isomorphism) irreducible
unitary representation of this group with the property that
$\exp(kE)$ acts as $e^{\frac{\pi i}{N}k}Id$  for all
$k\in {\mathbb Z}$.
\end{theorem}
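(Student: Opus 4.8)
The plan is to run the standard Stone--von Neumann argument in its finite incarnation. Let $\rho$ be an arbitrary irreducible unitary representation of $\heis({\mathbb Z}_N^g)$ on a Hilbert space $V$ with $\rho(\exp(kE))=e^{\pi i k/N}\,\mathrm{Id}$; since $\heis({\mathbb Z}_N^g)$ is finite, $V$ is automatically finite dimensional. The strategy is to diagonalize the commuting family of ``position'' operators $\rho(\exp(q^TQ))$, $q\in{\mathbb Z}_N^g$, use the Heisenberg commutation relations with the ``momentum'' operators $\rho(\exp(p^TP))$ to reconstruct $V$ explicitly on a distinguished orthonormal basis, and then recognize that this forces $\rho$ to be unitarily equivalent to the Schr\"odinger representation given by (\ref{schroedinger}), whose existence is already known.

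First I would decompose $V$ under the finite abelian subgroup $\{\exp(q^TQ):q\in{\mathbb Z}_N^g\}$: acting unitarily, it splits $V$ orthogonally as $V=\bigoplus_\mu V_\mu$ over characters of ${\mathbb Z}_N^g$, which by Pontryagin duality are exactly $q\mapsto e^{-2\pi iq^T\mu/N}$ with $\mu\in{\mathbb Z}_N^g$, so that $V_\mu=\{v:\rho(\exp(q^TQ))v=e^{-2\pi iq^T\mu/N}v\text{ for all }q\}$. The group relation $\exp(p^TP)\exp(q^TQ)=\exp(2p^TqE)\exp(q^TQ)\exp(p^TP)$, immediate from the multiplication rule of $\heis({\mathbb Z}_N^g)$, yields $\rho(\exp(p^TP))\rho(\exp(q^TQ))=e^{2\pi ip^Tq/N}\rho(\exp(q^TQ))\rho(\exp(p^TP))$; from this one reads off in one line that $\rho(\exp(p^TP))$ maps $V_\mu$ into $V_{\mu+p}$, and since every $\rho(\exp(p^TP))$ is invertible, all the spaces $V_\mu$ are nonzero once one of them is.

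Next, picking a unit vector $v_0\in V_{\mu_0}$ for some $\mu_0$ with $V_{\mu_0}\ne 0$ and replacing it by $\rho(\exp((-\mu_0)^TP))v_0$, I may assume $v_0\in V_0$. The span $W$ of $\{\rho(\exp(p^TP))v_0:p\in\{0,\dots,N-1\}^g\}$ is invariant under the momentum operators, under the position operators (each vector $\rho(\exp(p^TP))v_0$ lies in the eigenspace $V_p$), and under the scalars $\rho(\exp(kE))$; irreducibility forces $W=V$, so $\dim V\le N^g$. Projecting any $v\in V_0$, written as $v=\sum_pc_p\,\rho(\exp(p^TP))v_0$, onto $V_0$ gives $v=c_0v_0$, hence $\dim V_0=1$; transporting by the momentum operators, every $V_\mu$ is one dimensional and $\dim V=N^g$. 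Setting $v_\mu:=\rho(\exp(\mu^TP))v_0$ for $\mu\in\{0,\dots,N-1\}^g$, these vectors are orthonormal (distinct eigenspaces of unitary operators, and $\rho$ unitary) and satisfy $\rho(\exp(p^TP))v_\mu=v_{\mu+p}$, $\rho(\exp(q^TQ))v_\mu=e^{-2\pi iq^T\mu/N}v_\mu$, $\rho(\exp(kE))v_\mu=e^{\pi ik/N}v_\mu$; comparing with (\ref{schroedinger}), the map $v_\mu\mapsto\theta_\mu^\Pi$ is the desired unitary intertwiner, which proves the theorem.

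Since $\heis({\mathbb Z}_N^g)$ is finite there is no analytic obstacle, and the work is essentially bookkeeping, but it needs to be done carefully in two places. One must check that $\rho(\exp(p^TP))$ and $\rho(\exp(q^TQ))$ depend only on $p$ and $q$ modulo $N$ (so that the shifted index $\mu+p$ legitimately lies in ${\mathbb Z}_N^g$ and the basis $\{v_\mu\}$ is genuinely indexed by $\{0,\dots,N-1\}^g$), which holds because within $\heis({\mathbb Z}_N^g)$ the elements $\exp(p^TP)$, $p\in{\mathbb Z}_N^g$, form a subgroup isomorphic to ${\mathbb Z}_N^g$ (and likewise for the $\exp(q^TQ)$). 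And one must keep the cocycle $e^{2\pi ip^Tq/N}$ in the commutation relation exactly right, for it is precisely its nondegeneracy that makes the momentum operators permute the $N^g$ distinct position-eigenspaces transitively and thereby forces $\dim V_0=1$; this single exponent, together with the invocation of irreducibility, is the conceptual heart of the argument. As a consistency check, the irreducible unitary representations of $\heis({\mathbb Z}_N^g)$ with central character $\exp(kE)\mapsto e^{\pi ik/N}$ have dimensions whose squares sum to $|\heis({\mathbb Z}_N^g)|/|{\mathbb Z}_{2N}|=N^{2g}$, and the construction above already exhibits one of dimension $N^g$, so there can be no other.
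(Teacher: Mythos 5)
Your proposal is correct and follows essentially the same route as the paper's proof: diagonalize the commuting family $\rho(\exp(q^TQ))$, use the commutation relation to show the $\rho(\exp(p^TP))$ permute the character eigenspaces transitively, normalize a vector into $V_0$, and span by irreducibility to exhibit the Schr\"odinger basis. You simply organize the argument via the isotypic decomposition and supply a few details the paper leaves implicit (orthonormality, $\dim V_0=1$, the explicit intertwiner), but the underlying idea is identical.
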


\begin{proof}
Let $X_j=\exp(P_j)$, $Y_j=\exp(Q_j)$,  $j=1,2,\ldots, g$, $Z=\exp(E)$.
Then $X_jY_j=Z^2Y_jX_j$, $X_jY_k=Y_kX_j$ if $j\neq k$,  $X_jX_k=X_kX_j$,
$Y_jY_k=Y_kY_j$, $ZX_j=X_jZ$, $ZY_j=Y_jZ$,  for
all $i,j$, and $X_j^N=Y_j^N=Z^{2N}=Id$ for all $j$. Because $Y_1,Y_2,\ldots, Y_g$
commute pairwise, they have a common eigenvector $v$. And because
$Y_j^N=Id$ for all $j$, the eigenvalues $\lambda_1,\lambda_2,\ldots,
\lambda_g$ of $v$ with respect to the $Y_1,Y_2,\ldots, Y_g$
are roots of unity. The  equalities
\begin{eqnarray*}
&& Y_jX_jv=e^{-\frac{2\pi i}{N}}X_jY_j=e^{-\frac{2\pi i}{N}}\lambda_jX_jv,\\
&& Y_jX_kv=X_kY_jv=\lambda_jX_kv, \quad \mbox{if }j\neq k
\end{eqnarray*}
show that by applying $X_j$'s repeatedly we can produce an eigenvector
$v_0$ of the commuting system $Y_1,Y_2,\ldots, Y_g$
 whose eigenvalues are all equal to $1$. The irreducible representation
is spanned by the vectors
$X_1^{n_1}X_2^{n_2}\cdots X_g^{n_g}v_0$, $n_i\in \{0,1,\ldots,
N-1\}$.
Any such vector is an eigenvector of the system $Y_1,Y_2,\ldots, Y_g$, with
eigenvalues respectively $e^{\frac{2\pi i}{N}n_1},$ $ e^{\frac{2\pi
    i}{N}n_2},\ldots,$ $e^{\frac{2\pi i}{N}n_g}$. So these vectors are linearly
 independent and form a basis of the irreducible representation. It is
not hard to see that the action of $\heis ({\mathbb Z}_N^g)$ on the vector
space spanned by these vectors is
the Schr\"{o}dinger representation.
\end{proof}

\begin{proposition}\label{allspace}
The operators $\op\left(e^{ 2\pi i(p^Tx+q^Ty)}\right)$, $p,q\in \{0,1,\ldots,
N-1\}^g$  form a basis of the space of linear operators on
$\spacetheta(\Sigma_g)$.
\end{proposition}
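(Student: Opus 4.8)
The plan is a dimension count plus a linear independence argument. The theta series $\theta_\mu^\Pi$, $\mu\in\{0,1,\ldots,N-1\}^g$, form a basis of $\spacetheta(\Sigma_g)$, so this space has dimension $N^g$ and the space of linear operators on it has dimension $N^{2g}$. Since there are exactly $N^{2g}$ pairs $(p,q)$ with $p,q\in\{0,1,\ldots,N-1\}^g$, it is enough to prove that the operators $\op(e^{2\pi i(p^Tx+q^Ty)})$ are linearly independent.

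To do this I would start from the explicit formula of Proposition~\ref{weylquantization},
\[
\op\left(e^{2\pi i(p^Tx+q^Ty)}\right)\theta_\mu^\Pi=e^{-\frac{\pi i}{N}p^Tq-\frac{2\pi i}{N}\mu^Tq}\,\theta_{\mu+p}^\Pi,
\]
and suppose that $\sum_{p,q}c_{p,q}\op(e^{2\pi i(p^Tx+q^Ty)})=0$. Applying this operator to $\theta_\mu^\Pi$ and recalling that for a fixed $\mu$ the assignment $p\mapsto[\mu+p]$ is a bijection of ${\mathbb Z}_N^g$ --- so that the vectors $\theta_{\mu+p}^\Pi$ which occur are pairwise distinct members of our basis --- one concludes that, for every $p\in\{0,1,\ldots,N-1\}^g$ and every $\mu\in\{0,1,\ldots,N-1\}^g$,
\[
\sum_{q}c_{p,q}\,e^{-\frac{\pi i}{N}p^Tq}\,e^{-\frac{2\pi i}{N}\mu^Tq}=0 .
\]

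The last step is to read the left-hand side, for fixed $p$, as a linear combination of the characters $\mu\mapsto e^{-\frac{2\pi i}{N}\mu^Tq}$ of the finite group ${\mathbb Z}_N^g$; these characters are orthogonal in $\ell^2({\mathbb Z}_N^g)$, hence linearly independent, so every coefficient $c_{p,q}e^{-\pi i p^Tq/N}$, and therefore every $c_{p,q}$, vanishes. Combined with the dimension count this proves the proposition. I expect no real obstacle here: the argument is elementary once Proposition~\ref{weylquantization} is in hand, and the only point requiring care is the bookkeeping of indices modulo $N$, namely that for fixed $\mu$ the shifts $\mu+p$ with $p\in\{0,1,\ldots,N-1\}^g$ run over ${\mathbb Z}_N^g$ without repetition, which is what prevents cancellation between distinct values of $p$. (One could also argue abstractly: the quantized exponentials span the image of the group algebra of $\heis({\mathbb Z}_N^g)$ in the endomorphism algebra of $\spacetheta(\Sigma_g)$, and this image is the whole endomorphism algebra because the Schr\"{o}dinger representation is irreducible by Theorem~\ref{svn}; the dimension count then turns spanning into a basis.)
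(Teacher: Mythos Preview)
Your proof is correct and essentially the same as the paper's: both use Proposition~\ref{weylquantization} to see that the matrix of $\op(e^{2\pi i(p^Tx+q^Ty)})$ in the basis $(\theta_\mu^\Pi)$ is supported on the slots $(\mu,\mu+p)$, and then, for fixed $p$, reduce linear independence in $q$ to the nonsingularity of the character matrix $\bigl(e^{-2\pi i\mu^Tq/N}\bigr)_{\mu,q}$ of ${\mathbb Z}_N^g$ --- the paper phrases this last step as a Vandermonde determinant, you as orthogonality of characters, which is the same fact. Your parenthetical alternative via the irreducibility of the Schr\"{o}dinger representation (Theorem~\ref{svn}) together with Burnside's theorem is also valid and is a genuinely different, slicker route that the paper does not take.
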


\begin{proof}
For simplicity, we show that the operators
\begin{eqnarray*}
e^{\frac{\pi i}{N}p^Tq}\op\left(e^{ 2\pi i(p^Tx+q^Ty)}\right), \quad  p,q\in \{0,1,\ldots,
N-1\}^g,
\end{eqnarray*}
 form a basis. Denote by $M_{p,q}$ the respective matrices of
these operators in the basis $(\theta_\mu^\Pi)_\mu$. For a fixed
$p$, the nonzero entries of the matrices $M_{p,q}$, $q\in \{0,1,\ldots, N-1\}^g$
are precisely those in the slots $(m,m+p)$, with $m\in \{0,1,\ldots, N-1\}^g$
(here $m+p$ is taken modulo $N$). If we vary $m$ and $q$ and arrange
these nonzero entries in a matrix, we obtain the $g$th power of a Vandermonde
matrix, which is nonsingular. We conclude that for fixed $p$, the matrices
$M_{p,q}$, $q\in \{0,1,\ldots, N-1\}^g$ form a basis for the vector
space of matrices with nonzero entries in the slots of the form $(m,m+p)$.
Varying $p$, we obtain the desired conclusion.
\end{proof}

\begin{cor}\label{groupalgheis}
The algebra $L(\spacetheta(\Sigma_g))$ of linear operators on
the space of theta functions is isomorphic to the algebra obtained
by factoring ${\mathbb C}[\heis({\mathbb Z}_N^g)]$ by the relation
$(0,0,1)=e^{\frac{i\pi}{N}}$.
\end{cor}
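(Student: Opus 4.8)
The plan is to construct the isomorphism directly from the Schr\"odinger representation and then close the argument with a dimension count; no hard computation is required.

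Write $A$ for the quotient algebra appearing in the statement, and extend the Schr\"odinger representation ${\mathbb C}$-linearly to an algebra homomorphism $\rho\colon{\mathbb C}[\heis({\mathbb Z}_N^g)]\to L(\spacetheta(\Sigma_g))$. By the last line of (\ref{schroedinger}) the central element $(0,0,1)$ acts as $e^{\pi i/N}\,\mathrm{Id}$, so the two-sided ideal generated by $(0,0,1)-e^{\pi i/N}$ lies in $\ker\rho$, and $\rho$ descends to an algebra homomorphism $\bar\rho\colon A\to L(\spacetheta(\Sigma_g))$. This $\bar\rho$ will be the desired isomorphism.

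To see that it is one, I would first bound $\dim_{\mathbb C}A$ from above: since $(0,0,1)$ is central in $\heis({\mathbb Z}_N^g)$ with $(0,0,1)^k=(0,0,k)$, and since $(0,0,1)=e^{\pi i/N}$ in $A$, every group element satisfies $(p,q,k)=e^{\pi i k/N}(p,q,0)$ in $A$, so $A$ is spanned by the $N^{2g}$ classes $(p,q,0)$ with $p,q\in\{0,1,\ldots,N-1\}^g$ and $\dim_{\mathbb C}A\le N^{2g}$. Next I would show $\bar\rho$ is onto. Unwinding the definition of the Schr\"odinger representation (it was obtained from the representation $(p,q,k)\mapsto\op(e^{\pi i k/N}e^{2\pi i(p^Tx+q^Ty)})$ of $\heis({\mathbb Z}^g)$ by passing to the quotient via the isomorphism $F$ of Proposition~\ref{kernel}), one checks that $\bar\rho(p,q,0)$ is a nonzero scalar multiple of $\op(e^{2\pi i(p^Tx+q^Ty)})$; the precise scalar can be read off from Proposition~\ref{weylquantization}, but only its non-vanishing is needed. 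By Proposition~\ref{allspace} the operators $\op(e^{2\pi i(p^Tx+q^Ty)})$, $p,q\in\{0,1,\ldots,N-1\}^g$, form a basis of $L(\spacetheta(\Sigma_g))$, so that space has dimension $N^{2g}$ and $\bar\rho$ is surjective. Combined with $\dim_{\mathbb C}A\le N^{2g}$, this forces $\bar\rho$ to be a linear isomorphism (and, as a byproduct, $\dim_{\mathbb C}A=N^{2g}$ with the classes $(p,q,0)$ a basis of $A$). Being an algebra homomorphism, $\bar\rho$ is the asserted algebra isomorphism.

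The one place that demands care is the identification of $\bar\rho(p,q,0)$ with a multiple of $\op(e^{2\pi i(p^Tx+q^Ty)})$: one must be consistent about which representatives in ${\mathbb Z}_N^g$ and ${\mathbb Z}_{2N}$ are chosen and about the cocycle term $pq$ in $F$. This is routine bookkeeping and causes no genuine difficulty, since only the non-vanishing of the scalar is relevant; the rest of the argument is the soft fact that a surjection between vector spaces with $\dim(\mathrm{source})\le\dim(\mathrm{target})$ is an isomorphism.
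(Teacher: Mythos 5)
Your argument is correct and is essentially the one the paper intends: the corollary is stated without proof as an immediate consequence of Proposition~\ref{allspace}, and your dimension count (the quotient algebra is spanned by the $N^{2g}$ classes $(p,q,0)$, which map onto a basis of $L(\spacetheta(\Sigma_g))$ up to nonzero scalars) is exactly the natural fleshing-out of that deduction.
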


Let us now recall the action of the {\em modular group} on theta functions.
The modular group, known also as the {\em mapping class group},
of a simple closed surface  $\Sigma_g$ is the quotient of the group
of homemorphisms of $\Sigma_g$ by the subgroup of homeomorphisms that
are isotopic to the identity map.
 It is at this point where it is essential that
$N$ is  {\em even}.

The mapping class group acts on the Jacobian in the following way.
An element $h$ of this  group
induces a linear automorphism $h_*$ of
$H_1(\Sigma_g, {\mathbb R})$. The matrix of
$h_*$ has integer entries,  determinant $1$, and satisfies $h_*J_0{h_*}^T=J_0$,
where $J_0=\left(\begin{array}{cc}0&I_g\\I_g&0\end{array}\right)$ is
the intersection form in $H_1(\Sigma_g,{\mathbb R})$. As such, $h_*$
is a {\em symplectic linear automorphism} of $H_1(\Sigma_g,{\mathbb R})$, where
the symplectic form is the intersection form.
Identifying $\jacob$ with $H_1(\Sigma_g,{\mathbb R})/
H_1(\Sigma_g,{\mathbb  Z})$, we see that $h_*$ induces a symplectomorphism
$\tilde{h}$ of $\jacob$. The map $h\rightarrow \tilde{h}$ induces
an action of the mapping class group of $\Sigma_g$ on the Jacobian variety.
This action can be described explicitly as follows.
Decompose $h_*$ into $g\times g$
blocks as \begin{eqnarray*}
h_*=\left(\begin{array}{cc}A&B\\C&D\end{array}\right).
\end{eqnarray*}
Then  $\tilde{h}$ maps the complex torus defined by the lattice
$(I_g,\Pi)$ and complex variable $z$
to the complex torus defined by the lattice $(I_g,\Pi')$ and complex variable
$z'$,
where $\Pi'=(\Pi C+D)^{-1}(\Pi A+B)$ and $z'=(\Pi C+D)^{-1}z$.

This action of the mapping class group of the surface on the Jacobian
induces an action of the mapping class group on the finite Heisenberg
group by
\begin{eqnarray*}
h\cdot \exp(p^TP+q^TQ+kE)=\exp[(Ap+Bq)^TP+(Cp+Dq)^TQ+kE].
\end{eqnarray*}
The nature of this action is as follows: Since $h$ induces a diffeomorphism
on the Jacobian, we can compose $h$ with an exponential and then quantize;
the resulting operator is as above.
We point out that if $N$ were not even, this action would
be defined only for $h_*$  in the subgroup $Sp_{\theta}(2n,{\mathbb Z})$
of the  symplectic group (this is because only for $N$ even is the
kernel of the map $F$ defined in Proposition~\ref{kernel}
 preserved under the action of $h_*$).

As a corollary of Theorem~\ref{svn}, the representation
of the finite Heisenberg group on theta functions
given by $u\cdot \theta_\mu^\Pi=(h\cdot u)\theta_\mu^\Pi$ is
equivalent to the Schr\"{o}dinger representation, hence there
is an automorphism $\rho(h)$ of $\spacetheta(\Sigma_g)$
that satisfies  the {\em exact Egorov identity}:
\begin{eqnarray}\label{egorov}
h\cdot \exp(p^TP+q^TQ+kE)=\rho(h)\exp(p^TP+q^TQ+kE)\rho(h)^{-1}.
\end{eqnarray}
(Compare with \cite{folland}, Theorem 2.15, which is the analogous statement in quantum mechanics in Euclidean space.)
Moreover, by Schur's lemma,  $\rho(h)$ is unique up to multiplication
by a constant. We thus have a projective representation
of the mapping class group of the surface on the space of classical
 theta functions
 that statisfies with the action of the finite
Heisenberg group the exact Egorov identity from (\ref{egorov}). This
is the finite dimensional counterpart of the metaplectic representation,
called the {\em Hermite-Jacobi action}.

\begin{remark}
We emphasize  that the action of the mapping class group
of $\Sigma_g$ on theta functions factors through an action of the
symplectic group $Sp(2n, {\mathbb Z})$.
\end{remark}

Up to multiplication by a constant,
\begin{eqnarray}\label{hermitejacobi}
\rho(h)\theta_\mu^{\Pi}(z)=
\exp[-\pi i
z^TC(\Pi C+D)^{-1}z]\theta_\mu^{\Pi'}(z')
\end{eqnarray}
(cf. (5.6.3) in \cite{polishchuk}).  When the Riemann surface
is the complex torus obtained as
the quotient of the complex plane by the integer lattice,
and $h=S$ is  the map induced by a $90^\circ$ rotation around the origin,
then $\rho(S)$ is the discrete Fourier transform. In general,
like for the metaplectic representation
(see \cite{lionvergne}),  $\rho(h)$ can be written as a composition of partial
discrete Fourier transforms. For this reason, we will  refer
to $\rho(h)$ as a {\em discrete Fourier transform}.

\section{Theta functions in the abstract setting}\label{sec:3}

In this section we apply to the finite Heisenberg group the standard
construction which identifies the Schr\"{o}dinger representation as
a representation induced by an irreducible representation (i.e. character)
of a maximal abelian subgroup (see for example \cite{lionvergne}).

 Start with a Lagrangian subspace
 of $H_1(\Sigma_g,{\mathbb R})$ with respect to the intersection form,
which for our purpose is  spanned by the elements
$b_1,b_2,\ldots, b_g$ of the canonical basis.
Let ${\bf L}$ be the intersection of this space with
$H_1(\Sigma_g,{\mathbb Z})$. Under the standard inclusion
$H_1(\Sigma_g,{\mathbb Z})\subset \heis({\mathbb Z}^g)$,  ${\bf L}$
becomes an abelian subgroup of the Heisenberg group with integer
entries. This factors to an abelian subgroup $\exp({\bf L})$
of $\heis({\mathbb Z}_N^g)$.
Let $\exp({\bf L}+{\mathbb Z}E)$ be the
subgroup of $\heis({\mathbb Z}_N^g)$
containing both $\exp({\bf L})$ and the scalars $\exp({\mathbb Z}E)$.
Then $\exp({\bf L}+{\mathbb Z}E)$ is a maximal abelian subgroup.
Being abelian, it has only 1-dimensional irreducible representations,
which are its  characters.

In view of the Stone-von Neumann Theorem, we consider the
induced representation defined by the  character $\chi_{\bf L}:\exp({\bf L}+{\mathbb Z}E)\rightarrow
{\mathbb C}$,
$\chi_{\bf L}(l+kE)=e^{\frac{\pi i}{N}k}$.
This representation is
\begin{eqnarray*}
\mbox{Ind}_{\exp({\bf L}+{\mathbb Z}E)}^{\heis({\mathbb Z}_N^g)}=C[\heis(
{\mathbb Z}_N^g)]\bigotimes_{{\mathbb C}[\exp({\bf L}+{\mathbb Z}E)]}{\mathbb C}
\end{eqnarray*}
with $\heis({\mathbb Z}_N^g)$ acting on the left in the first factor of
the tensor product. Explicitly, the vector space of the representation is
the quotient of the group algebra ${\mathbb C}[\heis({\mathbb Z}_N^g)]$ by
the vector subspace spanned by all elements of the form
\begin{eqnarray*}
u-\chi_{\bf L}(u')^{-1}uu'
\end{eqnarray*} with $u\in \heis ({\mathbb Z}_N^g)$
and $u'\in\exp({\bf L}+{\mathbb Z}E)$. We denote this quotient by
${\mathcal H}_{N,g}({\bf L})$, and let $\pi_{\bf L}:{\mathbb C}[\heis({\mathbb Z}_N^g)]\rightarrow {\mathcal H}_{N,g}({\bf L})$ be the quotient map.
Let also the inner product be defined such that $\pi_{\bf L}(u)$ has norm
$1$, where $u$ is an element of the finite Heisenberg group seen as an element
of its group algebra.

The left regular action of the  Heisenberg group $\heis({\mathbb Z}_N^g)$
on its group algebra descends to an action on ${\mathcal H}_{N,g}({\bf L})$.

\begin{proposition}\label{abstracttheta}
The map $\theta_\mu^{\Pi}(z)\mapsto \pi_{\bf L}(\exp(\mu^TP))$,
 $\mu\in {\mathbb Z}_N^g$
defines a unitary map  between the space of theta functions
$\spacetheta(\Sigma_g)$
and  ${\mathcal H}_{N,g}({\bf L})$, which intertwines the Schr\"{o}dinger
representation and the left action of the finite Heisenberg group.
\end{proposition}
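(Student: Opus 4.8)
The plan is to verify three things: that the assignment $\theta_\mu^\Pi \mapsto \pi_{\bf L}(\exp(\mu^T P))$ is well-defined and bijective, that it is an isometry, and that it intertwines the two actions of $\heis({\mathbb Z}_N^g)$. First I would compute the image vectors explicitly. Writing out the projector $\pi_{\bf L}$ applied to $\exp(\mu^T P)$ and using the equivariance condition together with the multiplication rule of $\heis({\mathbb Z}_N^g)$ from Proposition~\ref{kernel}, one sees that $\pi_{\bf L}(\exp(\mu^T P))$ depends only on the coset $\exp(\mu^T P)\exp({\bf L}+{\mathbb Z}E)$, and that as $\mu$ ranges over ${\mathbb Z}_N^g$ these $N^g$ cosets are distinct (the ${\bf L}$-direction only moves the $Q$-coordinates). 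Hence the images are $N^g$ linearly independent elements of ${\mathcal H}_{N,g}({\bf L})$, which has dimension $N^g$; so the map extends linearly to a vector-space isomorphism $\spacetheta(\Sigma_g)\to {\mathcal H}_{N,g}({\bf L})$.

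Next I would check the intertwining property, which is the conceptual heart of the statement. On the theta side, the Schr\"odinger action is given by the formulas in (\ref{schroedinger}); on the abstract side, the action is $u_0 \cdot \phi(u) = \phi(u_0^{-1}u)$, equivalently left multiplication $\sum \phi(u) u \mapsto \sum \phi(u) u_0 u$ on the group algebra. So I must show $\exp(p^TP+q^TQ+kE)\cdot \pi_{\bf L}(\exp(\mu^T P)) = e^{-\frac{\pi i}{N}p^Tq - \frac{2\pi i}{N}q^T\mu + \frac{\pi i}{N}k}\,\pi_{\bf L}(\exp((\mu+p)^T P))$. Since $\pi_{\bf L}$ is built from left-equivariant data and commutes with left translation, this reduces to computing $\exp(p^TP+q^TQ+kE)\exp(\mu^TP)$ in $\heis({\mathbb Z}_N^g)$ and then pushing the $Q$- and $E$-parts to the right through $\pi_{\bf L}$, where they get absorbed by $\chi_{\bf L}$ (note $\exp(q^TQ)\notin \exp({\bf L}+{\mathbb Z}E)$ in general, so one first commutes it past $\exp(\mu^TP)$, picking up the cocycle factor, until it stands next to an element of ${\bf L}$ — actually $Q$ lies in the span of the $b_j$, so $\exp(q^TQ)$ already lies in $\exp({\bf L})$, and it is precisely this that produces the eigenvalue $e^{-\frac{2\pi i}{N}q^T\mu}$). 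Matching the resulting scalar against the Schr\"odinger formula is then a direct comparison of the Heisenberg cocycle with the phases in (\ref{schroedinger}).

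Finally, for unitarity, I would either verify directly that $\langle \pi_{\bf L}(\exp(\mu^TP)), \pi_{\bf L}(\exp(\nu^TP))\rangle = \delta_{\mu\nu}$ using the norm formula $\|\phi\|^2 = N^{-g}\sum_{u\in \heis/\exp({\bf L}+{\mathbb Z}E)}|\phi(u)|^2$ (the cosets being orthogonal and the relevant coefficients having modulus one after the $N^{-g}$ normalization), or — more cleanly — invoke Theorem~\ref{svn}: both spaces carry irreducible unitary representations of $\heis({\mathbb Z}_N^g)$ on which $\exp(kE)$ acts as $e^{\frac{\pi i}{N}k}\mathrm{Id}$, the map just constructed intertwines them, and by Schur's lemma an intertwiner of irreducible unitary representations is a scalar multiple of a unitary, which can be normalized to be unitary. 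I expect the main obstacle to be purely bookkeeping: keeping the Heisenberg cocycle, the passage between $(p,q,k)$ and the $\exp(p^TP+q^TQ+kE)$ normalization, and the $2N^{g+1}$ versus $N^g$ normalization constants all consistent, so that the phase $e^{-\frac{\pi i}{N}p^Tq}$ comes out with exactly the sign and coefficient appearing in (\ref{schroedinger}) rather than its inverse or square.
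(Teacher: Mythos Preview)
Your proposal is correct and follows essentially the same route as the paper: show the images form a basis of ${\mathcal H}_{N,g}({\bf L})$ because the cosets $\exp(\mu^TP)\exp({\bf L}+{\mathbb Z}E)$ are distinct, verify the intertwining on the generators $\exp(p^TP)$ and $\exp(q^TQ)$ by commuting $\exp(q^TQ)$ past $\exp(\mu^TP)$ and absorbing it into $\pi_{\bf L}$ via $\chi_{\bf L}$, and check unitarity directly from the norm formula. Your alternative of deducing unitarity from Theorem~\ref{svn} and Schur's lemma is a small extra observation the paper does not make, but otherwise the two arguments coincide.
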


\begin{proof}
It is not hard to see that $\spacetheta(\Sigma_g)$ and
${\mathcal H}_{N,g}({\bf L})$
have the same dimension. Also, for $\mu\neq \mu'\in {\mathbb Z}_N^g$,
$\pi_{\bf L}(\exp(\mu^TP))\neq \pi_{\bf L}(\exp(\mu'^TP))$,
hence the map from the statement
is an isomorphism of finite dimensional spaces. The
norm of $\pi_{\bf L}(\exp(\mu^TP))$ is one, hence this map is unitary.
We have
\begin{eqnarray*}
\exp(p^TP)\exp(\mu^TP)=\exp((p+\mu)^T)P)
\end{eqnarray*}
and
\begin{eqnarray*}
\exp(q^TQ)\exp(\mu^TP)=e^{-\frac{\pi i }{N}q^T\mu}\exp(\mu^TP)\exp(q^TQ).
\end{eqnarray*}
It follows that
\begin{eqnarray*}
&&\exp(p^TP)\pi_{\bf L}(\exp(\mu^TP))=\pi_{\bf L} ((p+\mu)^TP)\\
&&\exp(q^TQ)\pi_{\bf L}(\exp(\mu^TP))=e^{-\frac{\pi i}{N}q^T\mu} \pi_{\bf L}(\exp(\mu^TP))
\end{eqnarray*}
in agreement with the Schr\"{o}dinger representation (\ref{schroedinger}).
\end{proof}

We rephrase the Hermite-Jacobi action
in this setting. To this end, fix an element $h$ of the mapping class group of
the Riemann surface $\Sigma_g$. Let ${\bf L}$  be
the  subgroup of $H_1(\Sigma_g,{\mathbb Z})$  associated to a canonical basis as
explained in the beginning of this section, which determines the
maximal abelian subgroup $\exp({\bf L}+{\mathbb Z}E)$.

The automorphism of $H_1(\Sigma_g,{\mathbb Z})$ defined by $a_j\mapsto h_*(a_j)$, $b_j\mapsto h_*(b_j)$, $j=1,2,\ldots, g$, maps isomorphically
${\bf L}$ to
$h_*({\bf L})$, and thus allows us to identify in a canonical
fashion ${\mathcal H}_{N,g}({\bf L})$ and ${\mathcal H}_{N,g}(h_*({\bf L}))$.
Given this identification, we can view the discrete Fourier tranform
as a map
$\rho(h):{\mathcal H}_{N,g}({\bf L})\rightarrow {\mathcal H}_{N,g}(h_*({\bf L}))$.

The discrete Fourier transform should map an
element ${\bf u}\mbox{ mod }\mbox{ker}(\pi_{\bf L})$ in  the space
$ {\mathbb C}[\heis({\mathbb Z}_N^g)]/\mbox{ker}(\pi_{\bf L})$ to
${\bf u}\mbox{ mod }\mbox{ker}(\pi_{h_*({\bf L})})$ in
${\mathbb C}[\heis ({\mathbb Z}_N^g)] /\mbox{ker}(\pi_{h_*({\bf L})}).$
In this form the map is not well defined, since different representatives
for the class of ${\bf u}$ might yield different images. The idea
is to consider all possible liftings of ${\bf u}$ and average them.
For lifting the element ${\bf u} \mbox{ mod }\mbox{ker}(\pi_{\bf L})$ we
use the section of $\pi_{\bf L}$ defined as
\begin{eqnarray}\label{lift}
s_{\bf L}({\bf u} \mbox{ mod }\mbox{ker}(\pi_{\bf L}))=
\frac{1}{2N^{g+1}}
\sum_{u_1\in\exp ({\bf L}+{\mathbb Z}E)}
\chi_{\bf L}(u_1)^{-1}{\bf u}u_1.
\end{eqnarray}
Then, up to multiplication by a constant
\begin{eqnarray}\label{fouriertransform}
\quad\quad \rho(h)({\bf u} \,\mbox{mod}\,\mbox{ker}(\pi_{\bf L}))
=\frac{1}{2N^{g+1}}\!\!
\sum_{u_1\in\exp ({\bf L}+{\mathbb Z}E)}\!\!
\chi_{\bf L}(u_1)^{-1}{\bf u}u_1\, \mbox{mod}\,\mbox{ker}(\pi_{h_*(\bf L)}).
\end{eqnarray}
This formula identifies $\rho(h)$ as a  Fourier transform. 
 That this map agrees with the one defined
by (\ref{hermitejacobi}) up to multiplication by a constant follows
from Schur's lemma, since both maps satisfy the exact Egorov identity
 (\ref{egorov}).



\section{A topological  model for theta functions}\label{sec:4}

The  finite Heisenberg
group, the equivalence relation defined by the kernel of $\pi_{\bf L}$,
and the Schr\"{o}dinger representation can be given topological
interpretations, which we  explicate below. First, a heuristical discussion.

\medskip

{\em The Heisenberg group}. The group $\heis({\mathbb Z}^{g})$
is a ${\mathbb Z}$-extension of the abelian group
$H_1(\Sigma_g,{\mathbb Z})$. The bilinear form $\omega$ from (\ref{bilinform}),
which defines the cocycle of this extension, is the intersection form
in $H_1(\Sigma_g,{\mathbb Z})$. Cycles in $H_1(\Sigma_g,{\mathbb Z})$
can be represented by families of non-intersecting
 simple closed curves on the surface.
As vector spaces, we can identify
${\mathbb C}[\heis ({\mathbb Z}^g)]$ with ${\mathbb C}[t,t^{-1}]H_1(\Sigma_g,
{\mathbb Z})$, where $t$ is an abstract variable whose exponent equals the
last coordinate in the Heisenberg group.

We start with
an  example on the torus. Here and throughout the paper
we agree  that $(p,q)$ denotes the curve of slope $q/p$ on the torus,
oriented from the origin to the point $(p,q)$ when viewing the torus
as a quotient of the plane by integer translations.
Consider  the multiplication
\begin{eqnarray*}
(1,0)(0,1)=t(1,1),
\end{eqnarray*}
shown  in  Figure~\ref{torusmult}.
The product curve $(1,1)$
can be obtained by cutting open the  curves $(1,0)$ and $(0,1)$
 at the crossing
and joining the ends so that the orientations agree. This
 operation  is called {\em smoothing of the crossing}.
It is easy
to check that this works   for arbitrary surfaces:
whenever  multiplying two
families of curves introduce a coefficient of $t$ raised to the
algebraic intersection number of the two families then smoothen all crossings.
Such algebras of curves, with multiplication related to
polynomial invariants of knots, were first considered in \cite{turaev2}.

\begin{figure}[ht]
\centering
\resizebox{.40\textwidth}{!}{\includegraphics{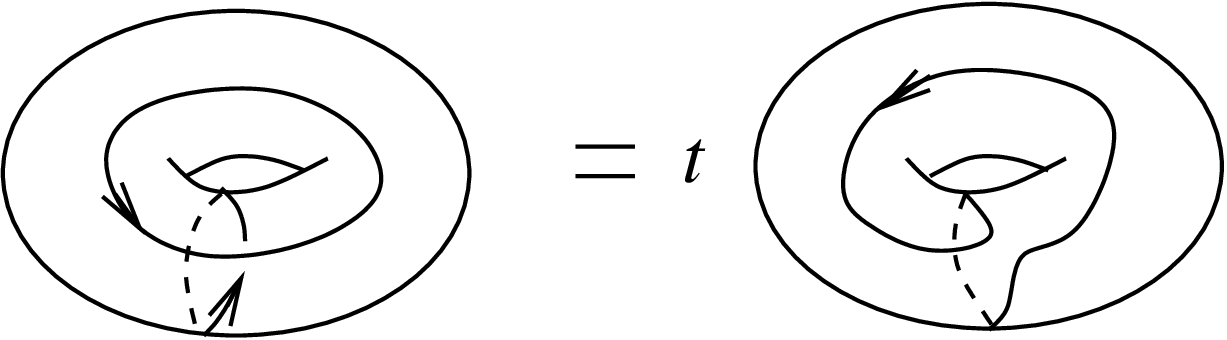}}
\caption{}
\label{torusmult}
\end{figure}

The group $\heis ({\mathbb Z}_N^g)$ is a quotient
of $ \heis ({\mathbb Z}^g)$, but can also be viewed as an extension
of $H_1(\Sigma_g,{\mathbb Z}_N)$. As such, the elements of
${\mathbb C}[\heis({\mathbb Z}_N^g)]$ can be represented by
families of non-intersecting simple closed curves on the surface with
the convention that any $N$ parallel curves can be deleted.
The above observation applies to this case as well, provided that
we set $t=e^{\frac{i\pi}{N}}$.

It follows that the space of linear operators $L(\spacetheta(\Sigma_g))$ can
be represented as an algebra of simple closed curves on the surface
with the convention that any $N$ parallel curves can be deleted. The
multiplication of two families of simple closed curves
is defined by introducing a coefficient of $e^{\frac{i\pi}{N}}$ raised
to the algebraic intersection number of the two families and smoothing
the crossings.

\medskip

{\em Theta functions.}
Next, we examine the space of theta functions, in its abstract
framework from Section~\ref{sec:3}. To better understand the factorization
modulo the kernel of $\pi_{\bf L}$, we look again at
the torus. If the canonical basis is $(1,0)$ and $(0,1)$ with ${\bf L}=
{\mathbb Z}(0,1)$ , then an
equivalence modulo $\mbox{ker}(\pi_{\bf L})$ is shown in
Figure~\ref{equivmodL}. If we map the torus to the boundary of a solid
torus in such a way that ${\bf L}$ becomes null-homologous, then
the first and  last curves from Figure~\ref{equivmodL} are homologous
in the solid torus. To keep track of $t$ we apply a standard method in
topology which consists of framing the curves. A framed curve in a manifold
is an embedding of an annulus. One can think of the curve as being one
of the boundary components of the annulus, and then the annulus itself
keeps track of the number of ways that the curve twists around itself.
Changing the framing by a full twist amounts to multiplying by
$t$ or $t^{-1}$ depending whether the twist is  positive or negative.
Then the equality from Figure~\ref{equivmodL} holds in the solid torus.
It is not hard to check for a general surface $\Sigma_g$ the equivalence
relation modulo $\mbox{ker}(\pi_{\bf L})$ is of this form in the
handlebody bounded by $\Sigma_g$  in such a way that ${\bf L}$ is
null-homologous.

\begin{figure}[ht]
\centering
\resizebox{.55\textwidth}{!}{\includegraphics{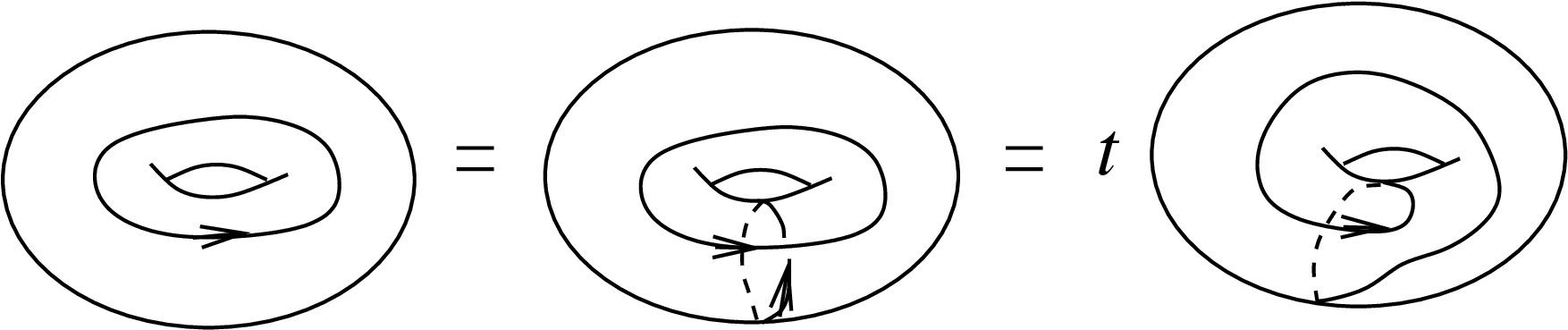}}
\caption{}
\label{equivmodL}
\end{figure}

\medskip

{\em The Schr\"{o}dinger representation.}
One can frame the curves on $\Sigma_g$ by using the blackboard
framing, namely by embedding the annulus in the surface.
As such, the Schr\"{o}dinger representation is the left action of
an algebra of framed curves on a surface on the vector space of
framed curves in the handlebody induced by the inclusion of the surface
in  the handlebody. We will make this precise using
the language of skein modules \cite{przytycki}.

\medskip

Let $M$ be a compact oriented $3$-dimensional manifold. A framed
link in $M$ is a smooth embedding of a disjoint union of finitely many
annuli. The annuli are called link components. We consider
oriented framed links. The orientation of a  link component is an
orientation of one of the
circles that bound the annulus.  When $M$ is the cylinder
over a surface, we  represent framed links as oriented curves
with the blackboard framing, meaning that
the annulus giving the framing is always parallel to the surface.

Let $t$ be a free variable. Consider the free
${\mathbb C}[t,t^{-1}]$-module with basis
the isotopy classes of framed oriented links in $M$ including
the empty link $\emptyset$. Let ${\mathcal S}$ be the
 the submodule spanned by all elements of the form
depicted in  Figure~\ref{skeinrelations}, where the two terms in
each skein relation depict framed links that are identical except in
an embedded ball, in which they look as shown.
The ball containing the crossing can be embedded in any
possible way.
To normalize, we add to ${\mathcal S}$ the element consisting
of the difference between the unknot in $M$ and the empty link
$\emptyset$. Recall that the  unknot is an embedded circle
that bounds an embedded disk in $M$ and whose framing annulus lies inside the
disk.

\begin{definition}
The result of the factorization of the free  ${\mathbb C}[t,t^{-1}]$-module
with basis the isotopy classes of framed oriented links by the
submodule ${\mathcal S}$
is called the  {\em linking number skein module} of $M$, and is
 denoted by ${\mathcal L}(M)$. The elements of ${\mathcal L}(M)$
are called {\em skeins}.
\end{definition}

 In other words, we
are allowed to smoothen each crossing, to change the framing
provided that we multiply by the appropriate power of $t$, and
to identify  the unknot with the empty link.

\begin{figure}[ht]
\centering
\resizebox{.40\textwidth}{!}{\includegraphics{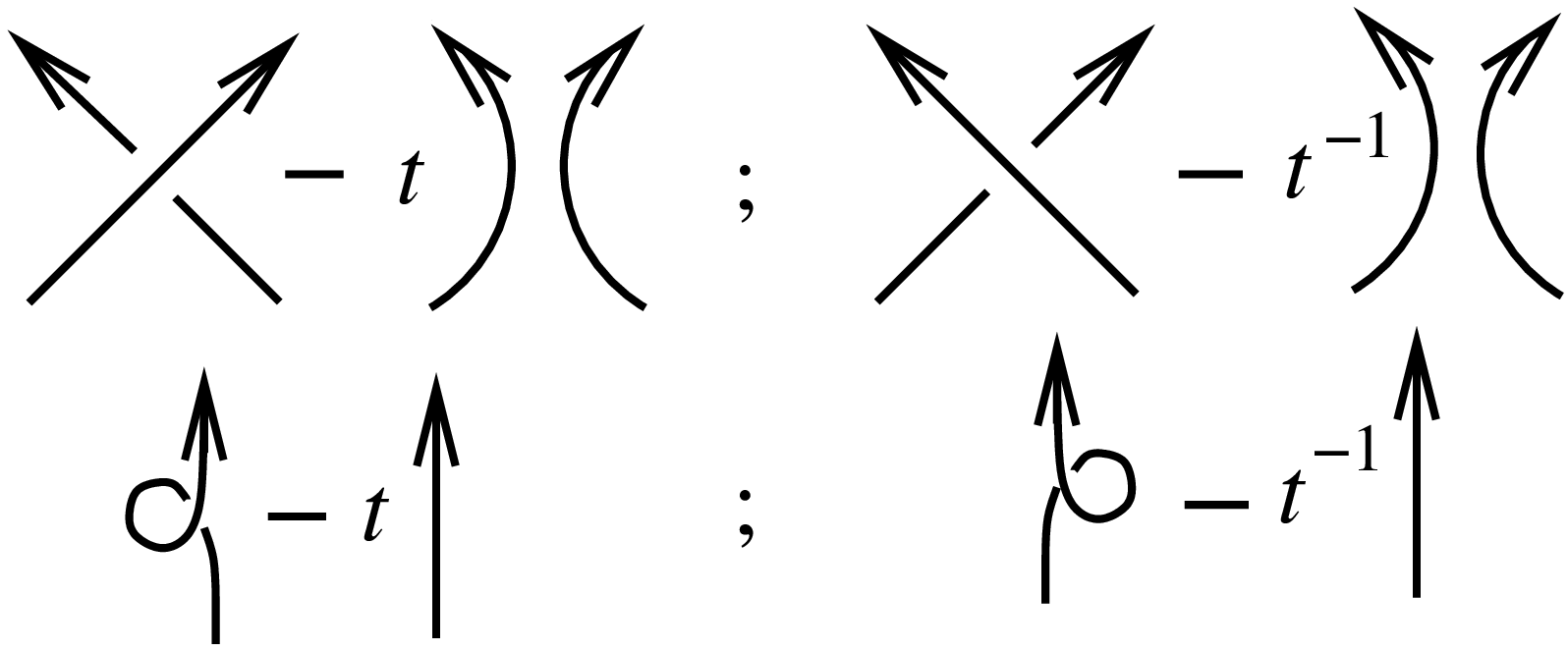}}
\caption{}
\label{skeinrelations}
\end{figure}

The ``linking number'' in the  name is motivated by the fact that
 the skein relations from
Figure~\ref{skeinrelations} are used for computing the linking number.
These skein modules were first introduced by Przytycki in \cite{przytycki2}
as one-parameter deformations of the group algebra of $H_1(M,{\mathbb Z})$.
Przytycki computed them for all $3$-dimensional manifolds.

\begin{lemma}
Any trivial link component, namely
any link component  that bounds a disk disjoint from the rest of the link in
such a way that the framing is an annulus inside the disk, can be deleted.
\end{lemma}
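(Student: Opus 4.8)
The plan is to bootstrap the single normalizing relation ``unknot $=\emptyset$'' up to the general statement, using the framing and smoothing relations to trade the trivial circle for a curl on a nearby strand. Write the link as $L\sqcup U$, where $U$ is the trivial component and $L$ is the rest; by hypothesis $U$ lies in a ball disjoint from $L$. If $L=\emptyset$, then $U$ is isotopic to the unknot and $[U]=[\emptyset]$ by the normalizing relation, so I may assume $L\neq\emptyset$. Moreover, since a skein in $M$ is a product of the skeins in the connected components of $M$, I may also assume $M$ is connected: the component of $M$ containing $U$, in case it misses $L$, is handled by applying the normalizing relation inside that component and tensoring.

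The heart of the argument is a local move near a point $p$ of some component $K$ of $L$. In a small ball $B'$ about $p$ meeting $L$ only in an unknotted arc of $K$, replace that arc by one carrying a single positive curl. On one hand, a unit curl is absorbed into the blackboard framing, so the framing relation gives that the curled link equals $t\,[L]$. On the other hand, the curl has exactly one self-crossing; resolving it by the smoothing relation of Figure~\ref{skeinrelations} gives that the curled link equals $t\,[L\sqcup\gamma]$, where $\gamma$ is the flat, $0$-framed, unknotted circle, supported in $B'$ and unlinked from $L$, that is produced by the oriented resolution of a curl. Comparing the two expressions, the factors of $t$ cancel and $[L\sqcup\gamma]=[L]$: a small $0$-framed trivial circle sitting beside $L$ can be deleted. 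Finally, $M\setminus L$ is connected, so there is an ambient isotopy of $M$ fixing $L$ and carrying $\gamma$ onto $U$ (any two $0$-framed unknots bounding disks disjoint from $L$ are isotopic rel $L$); since isotopic framed links represent the same skein, $[L\sqcup U]=[L\sqcup\gamma]=[L]$, which is what we want.

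The step I expect to require the most care is the curl computation: one must verify that the oriented resolution of a positive curl is ``straightened strand $\sqcup$ trivial circle'' and not the other smoothing, that the resulting circle is indeed $0$-framed in the blackboard framing, and --- crucially --- that the power of $t$ coming from the smoothing relation is the \emph{same} as the one a unit curl contributes through the framing relation, so that the two cancel exactly rather than leaving a residual $t^{\pm 2}$. A short Kauffman-bracket-type bookkeeping check settles this, but it is precisely where the orientation and sign conventions of Figure~\ref{skeinrelations} must be pinned down; the remaining ingredients (the empty-$L$ case, the reduction to connected $M$, and the final isotopy) are routine.
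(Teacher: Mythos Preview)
Your argument is correct and is essentially the same as the paper's: the proof in Figure~\ref{triviallink} is precisely the curl computation you describe---a kink on a strand is evaluated once by the framing relation and once by the smoothing relation, and the split-off circle drops out. You have simply made explicit the edge cases (the $L=\emptyset$ base case handled by the normalization, and the isotopy in $M\setminus L$ moving the freshly created circle onto the given trivial component) that the picture leaves implicit.
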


\begin{proof}
The proof of the lemma is given in Figure~\ref{triviallink}.
\end{proof}

\begin{figure}[h]
\centering
\resizebox{.25\textwidth}{!}{\includegraphics{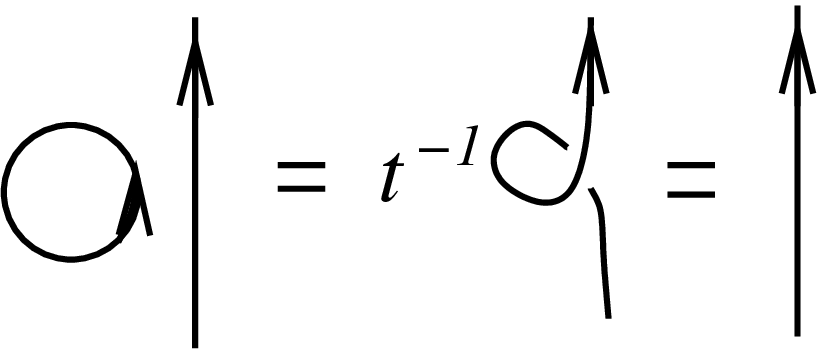}}
\caption{}
\label{triviallink}
\end{figure}

If $M=\Sigma_g\times[0,1]$, the cylinder over a surface, then the
identification
\begin{eqnarray*}
\Sigma_g\times [0,1]\cup\Sigma_g\times [0,1]\approx \Sigma\times [0,1]
\end{eqnarray*}
obtained by gluing the boundary component
$\Sigma_g\times \{0\}$ in the first cylinder to the boundary
component $\Sigma_g\times \{1\}$ in the second cylinder by the
identity map
induces a multiplication on ${\mathcal L}(\Sigma_g\times [0,1])$.
This turns ${\mathcal L}(\Sigma_g\times [0,1])$  into an
algebra, called the {\em linking number skein algebra}.
As such, the product
of two skeins is obtained by placing the first  skein on top of the second.
The $n$th power of an oriented, framed, simple closed curve consists
then of $n$ parallel copies of that curve. We adopt the same terminology
even if the manifold is not a cylinder, so $\gamma^n$ stands
for $n$ parallel copies of $\gamma$. Additionally, $\gamma^{-1}$
is obtained from  $\gamma $ by  reversing orientation, and
$\gamma^{-n}=(\gamma^{-1})^n$.

\begin{definition}
For a fixed positive integer $N$, we define the {\em reduced linking
number skein module}  of the manifold $M$,
denoted by ${\mathcal L}_N(M)$,
to be the quotient of ${\mathcal L}(M)$ obtained by imposing
that $\gamma^N=\emptyset$ for every oriented, framed, simple closed
curve $\gamma$, and by setting $t=e^{\frac{\pi i}{N}}$. As such, $L=L'$ whenever
$L'$ is obtained from $L$ by removing $N$ parallel link components.
\end{definition}

\begin{remark}
As a rule followed
throughout the paper, whenever we talk about skein modules, $t$ is
a free variable, and when we talk about reduced skein modules,
$t$ is a root of unity.
The isomorphisms ${\mathcal L}(S^3)\cong {\mathbb C}[t,t^{-1}]$
and ${\mathcal L}_N(S^3)\cong {\mathbb C}$ allow us to
identify the linking number skein module of $S^3$ with the set of Laurent
polynomials in $t$ and the reduced skein module with ${\mathbb C}$.
\end{remark}

For a closed, oriented,
genus $g$ surface  $\Sigma_g$, consider a canonical basis
of its first homology $a_1,a_2,\ldots, a_g,b_1,b_2,\ldots, b_g$ (see
Section~\ref{sec:1}). The basis elements are oriented simple closed
curves on the surface, which we endow with the blackboard framing.
Let $H_g$ be a genus $g$ handlebody and $h_0:\Sigma_g\rightarrow \partial H_g$
be a homeomorphism that maps $b_1,b_2,\ldots ,b_g$ to null homologous curves.
Then $a_1,a_2,\ldots a_g$ is a basis of the first homology of the
handlebody. Endow these curves in the handlebody with the framing they had on
the surface.

 The linking number skein module of a $3$-manifold $M$ with boundary
is a module over the skein algebra of a boundary component $\Sigma_g$.
The module structure is induced by the identification
\begin{eqnarray*}
\Sigma_g\times [0,1]\cup M\approx M
\end{eqnarray*}
where $\Sigma_g\times [0,1]$ is glued to $M$ along $\Sigma_g\times \{0\}$
by the identity map. This means that the module structure is
induced by identifying $\Sigma_g\times [0,1]$ with a regular
neighborhood of the boundary of $M$.
The product of a skein in a regular neighborhood
of the boundary and a skein in the interior is the union of
the two skeins. This module structure descends to relative skein modules.

In particular ${\mathcal L}(\Sigma_g\times [0,1])$ acts on the left on
${\mathcal L}(H_g)$ with action induced by the homeomorpism
$h_0:\Sigma_g\rightarrow \partial H_g$, and the action descends to relative
skein modules.

\begin{theorem}\label{heisenbergskeins1}
(a) The linking number skein module ${\mathcal L}(\Sigma_g\times [0,1])$
is a free ${\mathbb C}[t,t^{-1}]$-module with basis
\begin{eqnarray*}
a_1^{m_1}a_2^{m_2}\cdots a_g^{m_g}b_1^{n_1}b_2^{n_2}\cdots b_g^{n_g}, \quad
m_1,m_2,\ldots, m_g,n_1,n_2,\ldots, n_g\in {\mathbb Z}.
\end{eqnarray*}
(b) The linking number skein module ${\mathcal L}(H_g)$ is a free
${\mathbb C}[t,t^{-1}]$-module with basis
\begin{eqnarray*}
a_1^{m_1}a_2^{m_2}\cdots a_g^{m_g},\quad m_1,m_2,\ldots, m_g\in {\mathbb Z}.
\end{eqnarray*}
(c) The algebras
${\mathcal L}(\Sigma_g\times [0,1])$ and
${\mathbb C}[\heis({\mathbb Z}^g)]$ are isomorphic, with the
isomorphism defined by the map
\begin{eqnarray*}
t^k\gamma\mapsto ([\gamma],k).
\end{eqnarray*}
where $\gamma$ ranges over all skeins represented by  oriented simple
closed curves  on $\Sigma_g$ (with the blackboard framing) and
  $[\gamma]$ is its
homology class in $H_1(\Sigma_g,{\mathbb Z})={\mathbb Z}^{2g}$.
\end{theorem}

\begin{proof}
Parts (a) and (b) are consequences of a general result in \cite{przytycki2};
we include their proof for sake of completeness. \\
(a) Bring all skeins in the blackboard framing of the surface. A skein
$t^kL$, where $L$ is an oriented framed link in $\Sigma_g\times[0,1]$
is equivalent modulo  skein relations to a skein $t^{k+m}L'$ where
$L'$ is an oriented framed link such that
the projection of $L'$ onto the surface has no crossings, and
$m$ is the difference between the number of positive  and
 negative
crossings of the projection of $L$. Moreover, since
any embedded ball can be isotoped to a cylinder over a disk,
any skein $t^nL''$ that is equivalent to $t^kL$, with $L''$
 a framed link with no crossings, has the property that $n=k+m$.

\begin{figure}[ht]
\centering
\resizebox{.45\textwidth}{!}{\includegraphics{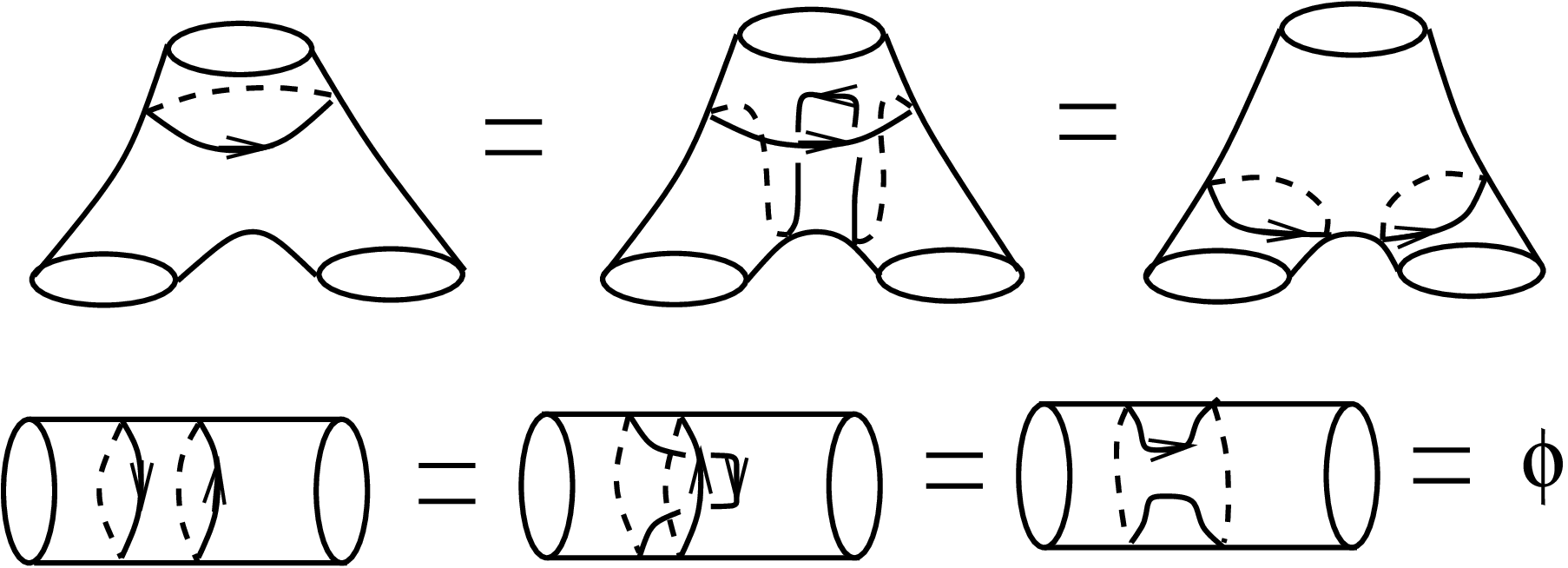}}
\caption{}
\label{nullhomologous}
\end{figure}

If $L$ is an oriented link with blackboard framing  whose projection onto
the surface has no crossings, and if it is null-homologous
in $H_1(\Sigma_g\times [0,1], {\mathbb Z})$, then $L$
is equivalent modulo skein relations to the empty skein. This
follows from the computations in Figure~\ref{nullhomologous}
since  $\Sigma_g$
can be cut into pairs of pants and annuli.

View $\Sigma_g$ as a sphere with $g$ punctured tori attached.
 Then $L$ is equivalent to a link $L'$ consisting of simple
closed curves on the tori, which therefore is of the form
\begin{eqnarray*}
(p_1,q_1)^{k_1}(p_2,q_2)^{k_2}\cdots (p_g,q_g)^{k_g},
\end{eqnarray*}
where $(p_j,q_j)$ denotes the curve of slope $p_j/q_j$ on the $j$th
torus. This last link is equivalent, modulo skein relations, to
\begin{eqnarray}\label{finalskein}
t^{\sum_jk_jp_jq_j}a_1^{k_1p_1}a_2^{k_2p_2}\cdots
a_g^{k_gp_g}b_1^{k_1q_1}b_2^{k_2q_2}
\cdots b_g^{k_gq_g}.
\end{eqnarray}
It is easy to check that if we change the link by a Reidemeister
move, then resolve all crossings, we obtain the same expression
(\ref{finalskein}). So the result  only depends on the link and
not on how it projects to $\Sigma_g$.
This proves (a).

Part (b)  is analogous to (a), given that a genus $g$ handlebody is
 the cylinder over a disk with $g$ punctures.
For (c) recall Corollary~\ref{groupalgheis}.
That the specified map is a linear isomorphism follows
from (a). It is straightforward to check
that the multiplication rule is the same.
\end{proof}

\begin{remark}
Explicitly, the map
\begin{eqnarray*}
t^ka_1^{m_1}a_2^{m_2}\cdots a_g^{m_g}b_1^{n_1}b_2^{n_2}\cdots b_g^{n_g}\mapsto\!
(m_1,m_2,\ldots, m_g,n_1,n_2,\ldots, n_g,k), \, m_j,n_j,k\in {\mathbb Z}
\end{eqnarray*}
defines an  algebra isomorphism between ${\mathcal L}(\Sigma_g\times
[0,1])$ and ${\mathbb C}[\heis({\mathbb Z}^g)]$.
\end{remark}

\begin{theorem}\label{heisenbergskeins2}
(a) The reduced linking number skein module
${\mathcal L}_N(\Sigma_g\times [0,1])$ is a finite dimensional
vector space with basis
\begin{eqnarray*}
a_1^{m_1}a_2^{m_2}\cdots a_g^{m_g}b_1^{n_1}b_2^{n_2}\cdots b_g^{n_g}, \quad
m_1,m_2,\ldots, m_g,n_1,n_2,\ldots, n_g\in {\mathbb Z}_N.
\end{eqnarray*}
(b)  The reduced linking number skein module ${\mathcal L}_N(H_g)$
is a finite dimensional  vector space with basis
\begin{eqnarray*}
a_1^{m_1}a_2^{m_2}\cdots a_g^{m_g},\quad m_1,m_2,\ldots, m_g\in {\mathbb Z}_N.
\end{eqnarray*}
Moreover, there is a linear isomorphism of
 ${\mathcal L}_N(H_g)$
and  $\spacetheta(\Sigma_g)$ given by
\begin{eqnarray*}
\gamma \rightarrow \theta_{[\gamma]},
\end{eqnarray*}
where $\gamma$ ranges among all oriented  simple closed curves in
$B^2_g$ with the blackboard framing and  $[\gamma]$ is the homology class of
$\gamma$ in $H_1(H_g,{\mathbb Z}_N^g)={\mathbb
  Z}_N^g$. \\
(c) The algebra isomorphism defined in Theorem~\ref{heisenbergskeins1}
factors to an algebra isomorphism of ${\mathcal
L}_N(\Sigma_g\times [0,1])$ and $L(\spacetheta(\Sigma_g))$, the algebra of
linear operators on the space of theta functions.
The isomorphism defined in (b)
intertwines the left action of ${\mathcal L}_N(
\Sigma_g\times [0,1])$ on ${\mathcal L}_N(H_g)$
and the Schr\"{o}dinger representation.
\end{theorem}

\begin{proof}
(a) By Theorem~\ref{heisenbergskeins1} we can identify ${\mathcal L}(
\Sigma_g\times [0,1])$ with ${\mathbb C}[\heis ({\mathbb Z}^g)]$. Setting $t=e^{\frac{i\pi}{N}}$
 and deleting any $N$ parallel copies of a link component
are precisely the relations by which we factor the Heisenberg group in
 Proposition~\ref{kernel}. The only question is whether factoring by
this additional relation  before applying the other skein relations
factors any further the skein module. However, we see that when
a curve is crossed by $N$ parallel copies of another curve, there is no
distinction between overcrossings and undercrossings. Hence if a link
contains $N$ parallel copies of a curve, we can move this curve so that
it is inside a cylinder $\Sigma_g\times [0,\epsilon]$ that does not
contain other link components and we can resolve all self-crossings
of this curve without introducing factors of $t$. Then we can delete the
curve without introducing new factoring relations.
 This proves (a).

For (b), notice that we factor ${\mathcal L}_N(\Sigma_g\times [0,1])$
to obtain  ${\mathcal L}_N(H_g)$ by the same relations by which
we factor ${\mathbb C}[\heis ({\mathbb Z}_N^g)]$ to obtain ${\mathcal H}_{N,g}({\bf L})$
in Section 3.

(c) An easy check shows that that the left action
of the skein algebra of the cylinder over the surface
on the skein module of the handlebody is the same as the one from
Propositions~\ref{weylquantization} and \ref{abstracttheta}.
\end{proof}

\begin{remark}
The isomorphism between the reduced skein module of the handlebody
and the space of theta functions is given explicitly by
\begin{eqnarray*}
a_1^{n_1}a_2^{n_2}\cdots a_g^{n_g}\mapsto \theta_{n_1,n_2,\ldots, n_g}^\Pi,
\quad \mbox{for all }n_1,n_2,\ldots, n_g\in {\mathbb Z}_N.
\end{eqnarray*}
\end{remark}

In view of Theorem~\ref{heisenbergskeins2} we endow ${\mathcal L}_N(H_g)$
with the Hilbert space structure of the space of theta functions.

Now we turn our attention to the discrete Fourier transform,
and translate in topological language  formula
(\ref{fouriertransform}).
Let  $h$ be an element of the mapping class
group of $\Sigma_g$.
The action of the mapping class group  on the finite
Heisenberg group from Section~\ref{sec:2} becomes the
action on skeins in $\Sigma_g\times[0,1]$ given by
\begin{eqnarray*}
\sigma\mapsto h(\sigma),
\end{eqnarray*}
where $h(\sigma)$ is obtained by replacing each framed curve of the skein
$\sigma$ by its image through the homeomorphism $h$.

 Consider $h_1$ and $h_2$ two
 homeomorphisms of
$\Sigma_g$ onto the boundary of the handlebody $H_g$ such that
$h_2=h\circ h_1$.
These homeomorphisms extend to embeddings
of ${\Sigma_g}\times [0,1]$ into $H_g$
which we denote by $h_1$ and $h_2$ as well.
The homeomorphisms $h_1$ and $h_2$ define the action
of ${\mathcal L}_N(\Sigma_g\times [0,1])$ on
${\mathcal L}_N(H_g)$  in two different ways,
i.e. they give two different constructions of the Schr\"{o}dinger
representations. By the Stone-von Neumann theorem, these are
unitary equivalent; they are related by the isomorphism $\rho(h)$.
We now give
$\rho(h)$ a topological definition. For this, let
us take a closer look at the lifting map $s_{\bf L}$ defined in (\ref{lift}).
First, it is standard to remark that one should only average
over $\exp({\bf L}+{\mathbb Z}E)/\exp({\mathbb Z}E)=\exp({\bf L})$,
hence
\begin{eqnarray*}
s_{\bf L}({\bf u}\mbox{ mod }\mbox{ker}(\pi_{\bf L}))=
\frac{1}{N^g}\sum_{u_1\in \exp({\bf L})}{\bf u}u_1.
\end{eqnarray*}

If ${\bf u}=u\in \heis({\mathbb Z}_N^g)$, then, as a skein,
 $u$ is of the form $\gamma^k$ where $\gamma$
is a framed oriented curve on $\Sigma_g=\partial H_g$ and  $k$ is
an integer. The equivalence class $\hat{u}=u\mbox{ mod }\mbox{ker}(\pi_{\bf L}(u))$  is
just this skein viewed as lying inside the handlebody; it consists
of $k$ parallel framed oriented curves in $H_g$.

On the other hand, as a skein, $u_1$ is of the form $b_1^{n_1}b_2^{n_2}
\ldots b_g^{n_g}$, and as such, the product $uu_1$ becomes, after
smoothing all crossings, another lift of the skein
$\hat{u}$ to the boundary obtained by lifting $\gamma$ to the boundary
and then taking $k$ parallel copies.
Such a lift is obtained by
pushing $\hat{u}$ inside a regular neighborhood of the boundary
and then viewing it as an element in ${\mathcal L}_N(\Sigma_g\times
[0,1])$. When $u_1$ ranges over all $\exp({\bf L})$ we obtain
all possible lifts of $\hat{u}$ to the boundary obtained by pushing
$\gamma $ to the boundary and then taking $k$ parallel copies.



\begin{theorem}\label{fourier}
For a skein of the form $\gamma^k$ in ${\mathcal L}_N(H_g)$,
where $\gamma $ is a curve in $H_g$ and $k$ a positive integer, consider
all  possible  liftings
to ${\mathcal L}_N(
\Sigma_g\times [0,1])$ using $h_1$, obtained by pushing the curve $\gamma$ to
the boundary and then taking $k$ parallel copies.
Take the average of these liftings
and map the average by $h_2$
to ${\mathcal L}_N(H_g) $.
 This defines a  linear endomorphism of $\widetilde{\mathcal L}_N(
H_g)$ which  is, up
to multiplication by a constant, the discrete Fourier transform $\rho(h)$.
\end{theorem}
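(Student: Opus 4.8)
The plan is to translate the three-step topological recipe of the statement---lift a skein from $H_g$ to a boundary collar via $h_1$, average all such lifts, and push the average back into $H_g$ via $h_2$---into the algebraic language of Section~\ref{sec:3}, and to check that the outcome is exactly the right-hand side of (\ref{fouriertransform}). Since that formula was already shown to define $\rho(h)$ up to a multiplicative constant, via Schur's lemma together with the exact Egorov identity (\ref{egorov}), the theorem will follow immediately. By linearity I would first reduce to skeins of the form $\gamma^k$, as in the statement, since these span $\widetilde{\mathcal L}_t(H_g)$.

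First I would fix the dictionary. By Theorem~\ref{linkingheisenberg} one has $\widetilde{\mathcal L}_t(\Sigma_g\times[0,1])={\mathbb C}[\heis({\mathbb Z}_N^g)]$, and by Theorem~\ref{thetaskein}, applied with $h_1$ in the role of $h_0$, the operation of pushing a skein from a collar of $\partial H_g$ into $H_g$ is the quotient map onto ${\mathbb C}[\heis({\mathbb Z}_N^g)]/\mbox{ker}(\pi_{\bf L})$, where ${\bf L}=\mbox{ker}(h_{1*})$ is spanned by the curves that $h_1$ renders null-homologous. Performing the same construction with $h_2$ gives the quotient onto ${\mathbb C}[\heis({\mathbb Z}_N^g)]/\mbox{ker}(\pi_{h_*({\bf L})})$, since $h_2=h\circ h_1$ makes exactly the curves of $h_*({\bf L})$ null-homologous. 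Under this dictionary a skein $\gamma^k\in\widetilde{\mathcal L}_t(H_g)$ is a class $\hat u={\bf u}\ \mbox{mod}\ \mbox{ker}(\pi_{\bf L})$, with ${\bf u}\in{\mathbb C}[\heis({\mathbb Z}_N^g)]$ the group-algebra element represented by the same framed curve pushed to $\partial H_g$.

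Next I would identify the averaging step with the section $s_{\bf L}$ of (\ref{lift}). Using the discussion that precedes the theorem, the lifts of $\hat u$ obtained by pushing $\gamma$ to the boundary and taking $k$ parallel copies are precisely the products ${\bf u}u_1$ with $u_1$ ranging over $\exp({\bf L})$, so their average is $\frac{1}{N^g}\sum_{u_1\in\exp({\bf L})}{\bf u}u_1$, which coincides with $s_{\bf L}(\hat u)$ after the standard reduction from an average over $\exp({\bf L}+{\mathbb Z}E)$ weighted by $\chi_{\bf L}^{-1}$ to an average over $\exp({\bf L})$. Because $s_{\bf L}$ is a genuine section of $\pi_{\bf L}$, this is independent of the choice of representative ${\bf u}$, which is what makes the recipe well defined and linear. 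Pushing the average into $H_g$ via $h_2$ then applies the quotient $\mbox{mod}\ \mbox{ker}(\pi_{h_*({\bf L})})$, yielding $s_{\bf L}(\hat u)\ \mbox{mod}\ \mbox{ker}(\pi_{h_*({\bf L})})$, which is exactly the right-hand side of (\ref{fouriertransform}); hence the topologically defined endomorphism equals that Fourier--Mukai transform, and therefore $\rho(h)$ up to a constant.

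The step that needs genuine care---and the reason the paragraphs before the theorem are devoted to it---is the bookkeeping of framings and of the scalar $t=e^{\pi i/N}$: one must verify that pushing $\gamma$ to the boundary, taking $k$ parallel copies, and smoothing the crossings against $u_1=b_1^{n_1}\cdots b_g^{n_g}$ really produces the Heisenberg product ${\bf u}u_1$ with exactly the algebraic-intersection power of $t$ and no stray framing correction, and that the set of such lifts is a full $\exp({\bf L})$-torsor of size $N^g$, so that ``average'' legitimately means $\frac{1}{N^g}\sum$. Once those points are secured, the rest is a matter of matching definitions and invoking the uniqueness built into (\ref{fouriertransform}).
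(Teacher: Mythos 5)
Your proof is correct and follows essentially the same route as the paper: the paragraphs preceding the theorem already identify the liftings with the products ${\bf u}u_1$, $u_1\in\exp({\bf L})$, and hence the averaged map with formula (\ref{fouriertransform}). The paper's formal proof then simply observes that the resulting map intertwines the Schr\"odinger representations defined by $h_1$ and $h_2$ and invokes the Stone--von Neumann theorem, which is the same uniqueness fact you invoke via Schur's lemma and the exact Egorov identity.
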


\begin{proof}
The map defined this way intertwines
the Schr\"{o}dinger representations defined by $h_1$ and $h_2$, so
the theorem is a consequence of the Stone-von Neumann theorem.
\end{proof}

\noindent {\em Example:} We will exemplify this by showing how
the $S$-map on the torus  acts on the theta series
\begin{eqnarray*}
\theta_1^\Pi(z)=\sum_{n\in {\mathbb Z}}e^{2\pi i N\left[\frac{\Pi}{2}\left(\frac{1}{N}+n\right)^2 +
z\left(\frac{1}{N}+n\right)\right]}
\end{eqnarray*}
(in this case $\Pi$ is a just a complex number with positive imaginary
part).
This theta series is represented in the solid torus by the curve shown in
Figure~\ref{thetaS}.
\begin{figure}[h]
\centering
\scalebox{.30}{\includegraphics{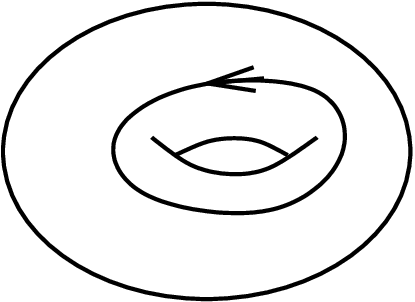}}
\caption{}
\label{thetaS}
\end{figure}
The  $N$ linearly independent
liftings of this curve to the boundary are shown in Figure~\ref{fourierS}.
 \begin{figure}[h]
\centering
\scalebox{.30}{\includegraphics{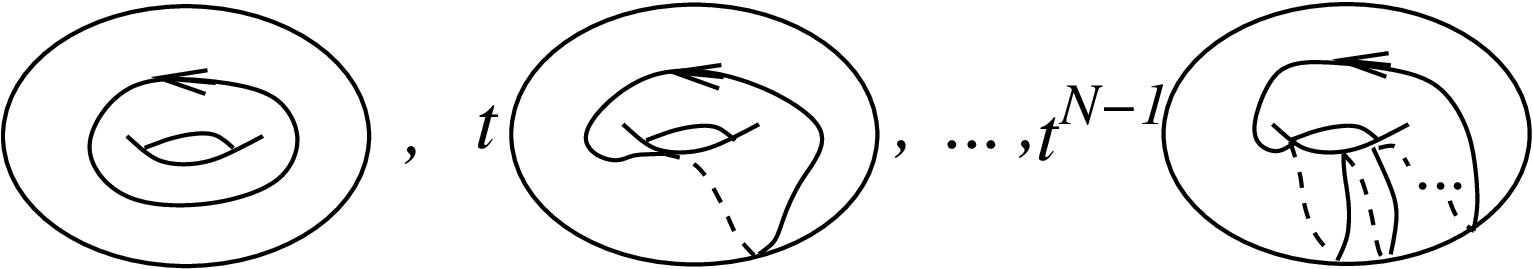}}
\caption{}
\label{fourierS}
\end{figure}

The $S$-map sends these to those in Figure~\ref{fourier2S},
which, after being pushed inside the solid torus, become the skeins
from Figure~\ref{fourier3S}.
\begin{figure}[h]
\centering
\scalebox{.30}{\includegraphics{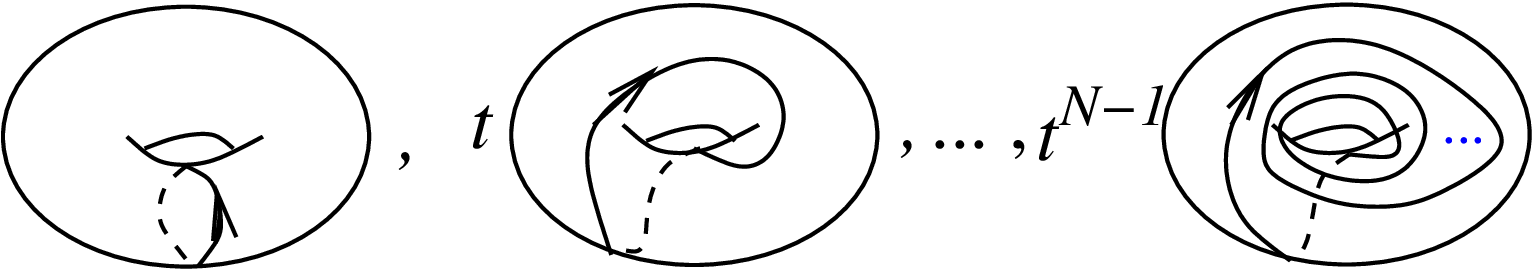}}
\caption{}
\label{fourier2S}
\end{figure}
\begin{figure}[h]
\centering
\scalebox{.30}{\includegraphics{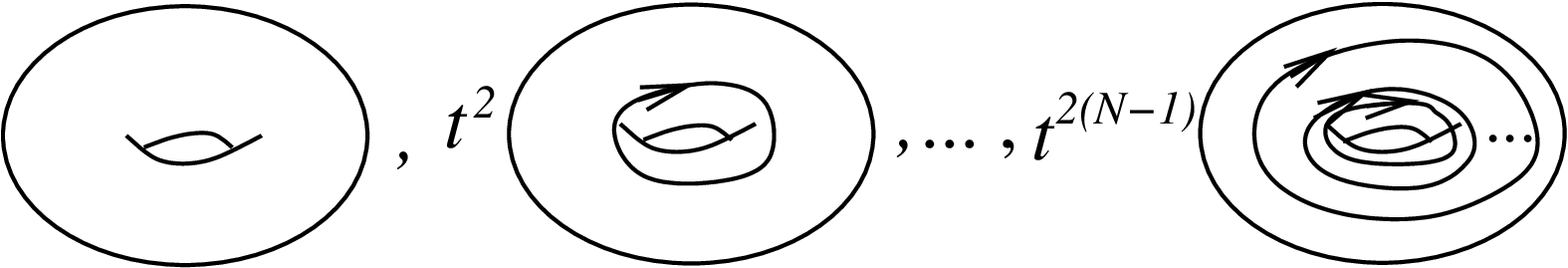}}
\caption{}
\label{fourier3S}
\end{figure}

Note that in each skein the arrow points  the opposite way as for
$\theta_1(z)$. Using the identity $\gamma^N=\emptyset$, we can replace
  $j$ parallel strands by $N-j$ parallel strands with opposite orientation.
Hence   these skeins are $t^{2j}\theta_{N-j}$, $j=1,\ldots, N$ (note also that
$\theta_{0}(z)=\theta_N(z)$). Taking the average we obtain
\begin{eqnarray*}
\rho(S)\theta_1(z)=\frac{1}{{N}}\sum_{j=0}^{N-1}
e^{\frac{2\pi i j}{N}}\theta_{N-j}(z)=
\frac{1}{{N}}\sum_{j=0}^{N-1}e^{-\frac{2\pi i j}{N}}\theta_j(z),
\end{eqnarray*}
which is, up to a multiplication by a constant, the standard
discrete Fourier transform of $\theta_1(z)$.

\section{The discrete Fourier transform as a skein}\label{sec:5}

As a consequence of Proposition~\ref{allspace},  $\rho(h)$
can be represented as an element in ${\mathbb C}[\heis({\mathbb Z}_N^g)]$.
Furthermore, Theorem~\ref{heisenbergskeins2} implies that $\rho(h)$
can be represented as left multiplication by a skein ${\mathcal F}(h)$ in
$\widetilde{\mathcal L}_t(\Sigma_g\times [0,1])$. The skein
${\mathcal F}(h)$ is unique up to a multiplication by a constant.
We wish to find an explicit formula for it.

Theorem~\ref{heisenbergskeins2} implies that the action of the group algebra of the
finite Heisenberg group can be represented as left multiplication by skeins.
Using this fact,  the exact Egorov identity (\ref{egorov}) translates to
\begin{eqnarray}\label{skeinegorov}
h(\sigma){\mathcal F}(h)={\mathcal F}(h)\sigma\mbox{ for all }\sigma\in {\mathcal L}_N(\Sigma_g\times [0,1])
\end{eqnarray}

By the Lickorish twist theorem (Chapter 9 in \cite{rolfsen}), every
homeomorphism of  $\Sigma_g$ is isotopic to a product of Dehn
twists along the $3g-1$ curves depicted in Figure~\ref{lickorish}. Recall
that a Dehn twist is the homemorphism obtained by cutting the surface
along the curve, applying a full rotation on one side, then gluing
back.

\begin{figure}[ht]
\centering
\scalebox{.28}{\includegraphics{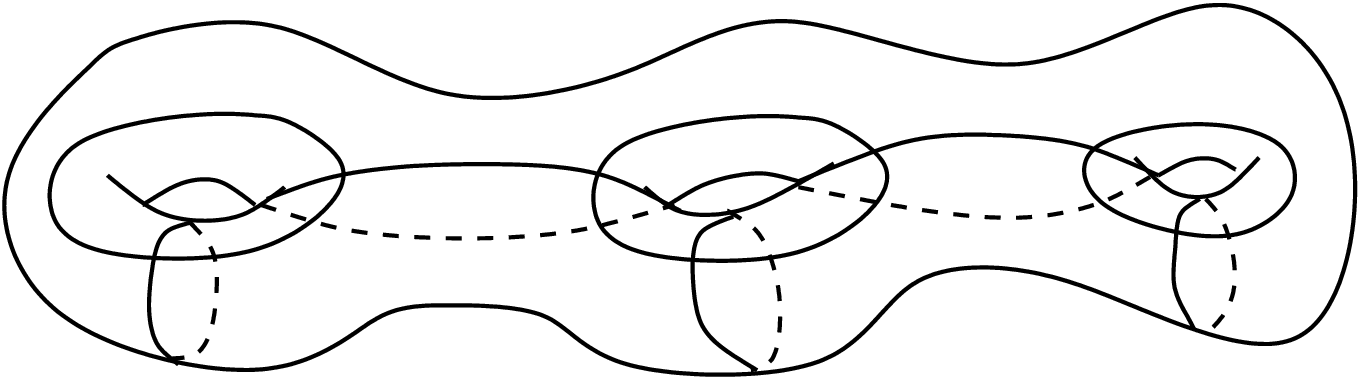}}
\caption{}
\label{lickorish}
\end{figure}

The curves from Figure~\ref{lickorish}  are nonseparating, and
any two can be mapped into one another by a homeomorphism of the surface.
Thus, to
understand ${\mathcal F}(h)$ in general it suffices to consider the case
$h=T$, the positive Dehn twist along the curve $b_1$ from Figure~\ref{homology}.
The word positive means that after we cut the surface along $b_1$ we perform
a full rotation of the part on the left in the direction of the arrow.
 Because
$T(\sigma)=\sigma$ for all skeins that do not contain curves that
intersect $b_1$, it follows that $\rho(T)$ commutes with all
such skeins. It also commutes with the multiples of $b_1$ (viewed as a skein
with the blackboard framing).
Hence $\rho(T)$ commutes with
all operators of the form $\exp(pP+qQ+kE)$ with $p_1$, the first
entry of $p$, equal to $0$. This implies that
\begin{eqnarray*}
\rho(T)=\sum_{j=0}^{N-1}c_j\exp(jQ_1).
\end{eqnarray*}
To determine the coefficients $c_j$, we write the exact Egorov identity
(\ref{egorov}) for $\exp(P_1)$. Since  $T\cdot \exp(P_1)=\exp(P_1+Q_1)$
this identity reads
\begin{eqnarray*}
\exp(P_1+Q_1)\sum_{j=0}^{N-1}c_j\exp(jQ_1)=\sum_{j=0}^{N-1}c_j\exp(jQ_1)\exp(P_1).
\end{eqnarray*}
We transform this further into
\begin{eqnarray*}
\sum_{j=0}^{N-1}c_je^{\frac{\pi i}{N}j}\exp[P_1+(j+1)Q_1]=
\sum_{j=0}^{N-1}c_je^{-\frac{\pi i}{N}j}\exp(P_1+jQ_1),
\end{eqnarray*}
or, taking into account that $\exp(P_1)=\exp(P_1+NQ_1)$,
\begin{eqnarray*}
\sum_{j=0}^{N-1}c_{j-1}e^{\frac{\pi i}{N}(j-1)}\exp(P_1+jQ_1)=
\sum_{j=0}^{N-1}c_je^{-\frac{\pi i}{N}j}\exp(P_1+jQ_1),
\end{eqnarray*}
where $c_{-1}=c_{N-1}$. It follows that $c_j=e^{\frac{\pi i}{N}(2j-1)}c_{j-1}$
for all $j$. Normalizing so that $\rho(T)$ is a unitary map
and $c_0>0$ we obtain $c_j=N^{-1/2}e^{\frac{\pi i}{N}j^2}$, and hence
\begin{eqnarray*}
{\mathcal F}(T)=N^{-1/2}\sum_{j=0}^{N-1}e^{\frac{\pi i}{N}j^2}\exp(jQ_1).
\end{eqnarray*}
Turning to the language of skein modules, and taking into account
that any Dehn twist is conjugate to the above twist by an element of the
mapping class group, we conclude that if $T$ is a positive
 Dehn twist along the
simple closed curve $\gamma $ on $\Sigma_g$, then
\begin{eqnarray*}
{\mathcal F}(T)=N^{-1/2}\sum_{j=0}^{N-1}t^{j^2}\gamma^j.
\end{eqnarray*}
This is the same as the skein
\begin{eqnarray*}
{\mathcal F}(T)=N^{-1/2}\sum_{j=0}^{N-1}(\gamma^+)^j
\end{eqnarray*}
where $\gamma^+$ is obtained by adding one full positive twist to
the framing of $\gamma $ (the twist is positive in the sense
that, as skeins, $\gamma^+=t\gamma$).

This skein has an interpretation in terms of surgery.
Consider the curve $\gamma^+\times\{1/2\}\subset \Sigma_g\times[0,1]$
with framing defined by the blackboard framing of $\gamma^+$ on $\Sigma_g$.
Take a solid torus which is a
regular neighborhood of the curve on whose boundary
the framing determines two simple closed curves.
Remove it from $\Sigma_g\times [0,1]$,
then glue it back in by a homeomorphism that identifies its meridian (the
curve that is null-homologous) to one of the curves determined by the
framing. This operation, called surgery, yields a manifold that
is homeomorphic to $\Sigma_g\times [0,1]$, such that the restriction of
the homeomorphism to $\Sigma_g\times \{0\}$ is the identity map,
and the restriction to $\Sigma_g\times \{1\}$ is the Dehn twist $T$.

The reduced linking number skein module of
the solid torus $H_1$ is,  by Theorem~\ref{heisenbergskeins2},
an $N$-dimensional vector space with basis $\emptyset,a_1,\ldots,a_1^{N-1}$.
Alternately, it is the vector space of 1-dimensional theta functions
with basis $\theta_0^\Pi (z), \theta_1^\Pi(z),\ldots, \theta_{N-1}^\Pi(z)$,
where $\Pi$ in this case is a complex number with positive imaginary part.
We introduce the element
\begin{eqnarray}\label{omega}
\Omega=N^{-1/2}\sum_{j=0}^{N-1}a_1^j=N^{-1/2}\sum_{j=0}^{N-1}\theta_j^\Pi(z)
\end{eqnarray}
in ${\mathcal L}_N(H_1)=\spacetheta(\Sigma_1)$.
As a diagram,  $\Omega$ is the skein depicted in Figure~\ref{omegafigure}
multiplied by $N^{-1/2}$.
\begin{figure}[ht]
\centering
\scalebox{.30}{\includegraphics{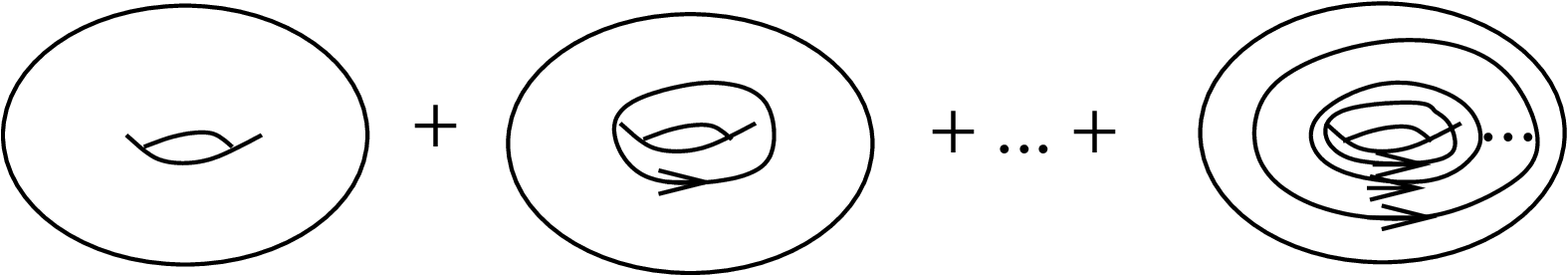}}
\caption{}
\label{omegafigure}
\end{figure}
If $S$ is the homemorphism on the torus
induced by the $90^\circ$ rotation of the plane when viewing the torus
as the quotient of the plane by the integer lattice, then
$\Omega =\rho(S)\emptyset$.  So $\Omega$ is the
(standard) discrete Fourier transform of $\theta_0^\Pi(z)$.

For an arbitrary framed link $L$ we denote by $\Omega(L)$ the skein
obtained by replacing each link component by $\Omega$. In
other words, $\Omega(L)$ is the sum of framed links obtained from $L$ by
replacing its components, in all possible ways, by $0,1,\ldots, N-1$
parallel copies. The skein $\Omega$ is called the coloring of $L$ by
$\Omega$.


\begin{proposition}\label{omegaproperties}
a) The skein $\Omega(L)$ is independent of the orientations of the components
of $L$.\\
b) The skein relation from Figure~\ref{omegalink} holds, where the $n$
parallel strands point in the same direction.
\end{proposition}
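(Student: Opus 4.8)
The plan is to prove the two assertions separately, reducing each to a short computation inside a solid torus. For part (a), note that the orientation of a component $K$ of $L$ enters the definition of $\Omega(L)$ only through the orientation of the core curve of the copy of $\Omega$ inserted in a regular neighborhood of $K$; since this neighborhood is a solid torus and its inclusion induces a map of reduced linking number skein modules carrying $\Omega$ to the corresponding factor of $\Omega(L)$, it suffices to show that $\overline{\Omega}=\Omega$ in $\widetilde{\mathcal L}_t(H_1)$, where $\overline{\Omega}$ is obtained from $\Omega$ by reversing the orientation of its core, and then to apply this once for each component. To prove $\overline{\Omega}=\Omega$, write $\overline{\Omega}=N^{-1/2}\sum_{j=0}^{N-1}a_1^{-j}$. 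By Proposition~\ref{skeinmodules}, $\widetilde{\mathcal L}_t(H_1)$ is spanned by $\emptyset,a_1,\dots,a_1^{N-1}$ with $a_1^N=\emptyset$, and stacking parallel copies gives $a_1^{j}a_1^{j'}=a_1^{j+j'}$. A parallel pair of oppositely oriented copies of $a_1$ reduces to the empty skein (isotope it to create two crossings, smooth them, and delete the resulting trivial circle as in Figure~\ref{triviallink}), so $a_1^{-1}a_1=\emptyset$, whence $a_1^{-1}=(a_1^{-1}a_1)a_1^{N-1}=a_1^{N-1}$ and, more generally, $a_1^{-j}=a_1^{N-j}$. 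Reindexing the sum by $j\mapsto N-j$ and using $a_1^N=a_1^0=\emptyset$ turns $\overline{\Omega}$ back into $\Omega$.

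For part (b) I would resolve the $\Omega$-colored encircling circle $C$ as $\Omega(C)=N^{-1/2}\sum_{j=0}^{N-1}C^j$, where $C^j$ denotes $j$ parallel copies of $C$, each an unknot bounding a disk. Each such copy links the bundle of $n$ strands with linking number $\pm n$, and it is precisely here that the hypothesis that the $n$ strands point the same way is used, so that the individual linking numbers add rather than partially cancel. Using the oriented skein relations of Figure~\ref{skeinrelations} one pulls each copy of $C$ off the bundle: undoing a single clasp of $C$ with one strand contributes a factor $t^{2}$ (this is the computation that gives the value $t^{2}$ of the Hopf link in $S^3$), so clearing one copy of $C$ contributes $t^{2n}$ and clearing all $j$ copies contributes $t^{2nj}$; the $j$ trivial circles left over are then deleted by the lemma of Figure~\ref{triviallink}. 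Hence $\Omega(C)$ encircling the $n$ strands equals $N^{-1/2}\bigl(\sum_{j=0}^{N-1}t^{2nj}\bigr)$ times the skein consisting of the bare $n$ strands. Since $t=e^{\pi i/N}$ in the reduced skein module, this sum is $\sum_{j=0}^{N-1}e^{2\pi i nj/N}$, which is $N$ when $N\mid n$ and $0$ otherwise --- exactly the relation of Figure~\ref{omegalink}.

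The step I expect to be the main obstacle is the bookkeeping of the power of $t$ in part (b): one must check carefully, from the precise form of the skein relations of Figure~\ref{skeinrelations} and the orientation conventions of Figure~\ref{omegalink}, that clearing the encircling circle multiplies the skein by $t^{2n}$ (and not $t^{-2n}$, nor $t^{n}$), so that the coherent-orientation hypothesis genuinely yields a clean geometric sum of $N$th roots of unity; fortunately the overall sign of this exponent is immaterial, since $\sum_{j}e^{2\pi i nj/N}$ and $\sum_{j}e^{-2\pi i nj/N}$ coincide. Part (a) presents no comparable difficulty: it rests only on the orientation-independence of writhe together with the fact that the linking number of an ordered pair of components changes sign when one of them is reversed, which is what makes a parallel oppositely oriented pair trivial.
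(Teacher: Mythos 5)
Your proof is correct and takes essentially the same route as the paper's: for (a) the paper likewise reduces to the fact that reversing orientation turns $j$ parallel copies of the core into $N-j$ copies (the cancellation of an antiparallel pair being exactly the computation of Figure~\ref{nullhomologous}), and then reindexes the sum; for (b) the paper resolves the crossings to get the same geometric sum $N^{-1/2}\sum_{j=0}^{N-1}t^{\pm 2nj}$, which vanishes because $t^2$ is a primitive $N$th root of unity.
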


\begin{figure}[ht]
\centering
\scalebox{.27}{\includegraphics{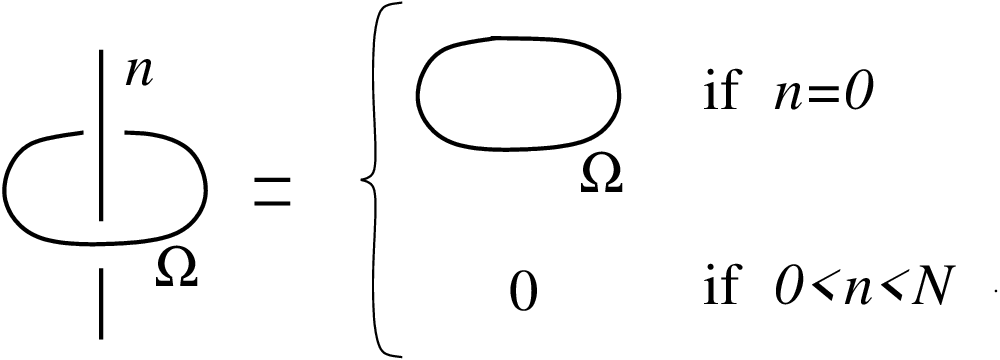}}
\caption{}
\label{omegalink}
\end{figure}

\begin{proof}
a)
The computation in Figure~\ref{nullhomologous} implies that if we switch
the orientation on the $j$ parallel curves that represent $\theta_j^\Pi(z)$
we obtain $\theta_{N-j}^\Pi(z)$. Hence by changing the orientation on all  curves
that make up  $\Omega$ we obtain the skein
\begin{eqnarray*}
N^{-1/2}[\theta_0^\Pi(z)+\theta_{N-1}^\Pi(z)+\theta_{N-2}^\Pi(z)+\cdots +
\theta_1^\Pi(z)],
\end{eqnarray*}
which is, again, $\Omega$.

b) When $n=0$ there is nothing to prove. If $n\neq 0$, then by
resolving all crossings in the diagram we obtain $n$ vertical parallel
strands with the coefficient
\begin{eqnarray*}
N^{-1/2}\sum_{j=0}^{N-1}t^{\pm 2nj}=N^{-1/2}\cdot \frac{t^{2Nn}-1}{t^{2n}-1}.
\end{eqnarray*}
where the signs in the exponents are either all positive, or all negative.
Since $t^2$ is a primitive $N$th root of unity, this is equal to zero.
Hence the conclusion.
\end{proof}

Up to this point we have proved the following result:

\begin{lemma}
For a Dehn twist $T$,  ${\mathcal F}(T)$ is the skein obtained
by coloring the surgery framed curve $\gamma^+$ of $T$ by  $\Omega$.
\end{lemma}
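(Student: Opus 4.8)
The plan is to observe that the statement simply repackages the formula for ${\mathcal F}(T)$ already derived above together with the surgery picture and the definition of $\Omega$-coloring; no new input is needed. Recall that solving the recursion $c_j=e^{\frac{\pi i}{N}(2j-1)}c_{j-1}$, which comes from the exact Egorov identity (\ref{skeinegorov}) applied to $\exp(P_1)$, and normalizing so that $\rho(T)$ is unitary with $c_0>0$, gave
\begin{eqnarray*}
{\mathcal F}(T)=N^{-1/2}\sum_{j=0}^{N-1}t^{j^2}\gamma^j=N^{-1/2}\sum_{j=0}^{N-1}(\gamma^+)^j,
\end{eqnarray*}
where $\gamma^+$ is $\gamma$ with one full positive twist added to its framing and $(\gamma^+)^j$ denotes $j$ parallel copies of $\gamma^+$. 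So the content of the lemma is to check that this right-hand side equals $\Omega(\gamma^+)$, the skein obtained by coloring the surgery framed curve of $T$ by $\Omega$.

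First I would recall the surgery description given just after the computation of ${\mathcal F}(T)$: removing a regular neighborhood of $\gamma^+\times\{1/2\}\subset\Sigma_g\times[0,1]$, with framing the blackboard framing of $\gamma^+$ on $\Sigma_g$, and regluing it along the framing produces a copy of $\Sigma_g\times[0,1]$ on which the induced homeomorphism of $\Sigma_g\times\{1\}$ is the positive Dehn twist $T$. This is exactly what ``the surgery framed curve $\gamma^+$ of $T$'' means, so there is nothing to prove at this stage; the curve, placed at any level of the cylinder, represents the same skein in $\widetilde{\mathcal L}_t(\Sigma_g\times[0,1])$ by isotopy invariance.

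Next I would unwind the definition of $\Omega(\gamma^+)$. A regular neighborhood of $\gamma^+$ is a solid torus, and by Proposition~\ref{skeinmodules}(d) its reduced linking number skein module has basis $\emptyset,\gamma^+,(\gamma^+)^2,\dots,(\gamma^+)^{N-1}$; identifying this solid torus with $H_1$ so that $\gamma^+$ is carried to the core $a_1$, the element $\Omega=N^{-1/2}\sum_{j=0}^{N-1}a_1^j$ of (\ref{omega}) corresponds to $N^{-1/2}\sum_{j=0}^{N-1}(\gamma^+)^j$. By the definition of coloring a link component by $\Omega$ --- replace that component, in all possible ways, by $0,1,\dots,N-1$ parallel copies, weighted by $N^{-1/2}$ --- this sum, read back into $\widetilde{\mathcal L}_t(\Sigma_g\times[0,1])$, is precisely $\Omega(\gamma^+)$. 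Comparing with the displayed formula for ${\mathcal F}(T)$ completes the proof.

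The only step that genuinely requires care --- and which I would present as the substantive point rather than a real obstacle --- is reconciling the two descriptions of $\gamma^+$: the one produced by the recursion (add one positive framing twist to $\gamma$, equivalently insert the factor $t^{j^2}$ when forming $j$ parallel copies) and the topological one (the framed curve along which surgery realizes $T$). These give the same curve with the same framing, since a $+1$ change of framing is exactly what distinguishes the surgery yielding $T$ from a trivial one. Once this is stated, the identity ${\mathcal F}(T)=\Omega(\gamma^+)$ is a matter of reading off definitions, and the lemma follows.
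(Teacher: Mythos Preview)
Your proposal is correct and follows exactly the paper's approach: the lemma is explicitly presented there as a summary of the preceding discussion (``Up to this point we have proved the following result''), and your write-up faithfully unpacks that discussion by matching the derived formula $N^{-1/2}\sum_{j=0}^{N-1}(\gamma^+)^j$ with the definition of $\Omega(\gamma^+)$. The one minor point you could make more explicit is the paper's remark that the formula for general $\gamma$ follows from the $b_1$ case by conjugation in the mapping class group, but since you are quoting the already-general formula this is harmless.
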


Since by the Lickorish twist theorem every element $h$ of the mapping
class group is a product of twists, 
we obtain the following skein
theoretic description of the discrete Fourier transform induced by the
map $h$.

\begin{proposition}\label{skeinfourier}
Let $h$ be an element of the mapping class group of $\Sigma_g$ obtained
as a composition of Dehn twists $h=T_1T_2\cdots T_n$.
Express each Dehn twist $T_j$ by surgery
on a curve $\gamma_j$ as above, and consider the link
 $L_h=\gamma_1\cup \gamma_2\cup
\cdots \cup \gamma_n$ which expresses $h$ as
 surgery on the framed link $L_h$ in $\Sigma_g\times [0,1]$.
Then  the discrete Fourier transform $\rho(h):\widetilde{\mathcal L}_t(H_g)
\rightarrow \widetilde{\mathcal L}_t(H_g)$ is given by
\begin{eqnarray*}
\rho(h)\beta=\Omega(L_h)\beta.
\end{eqnarray*}
\end{proposition}

\section{The Egorov identity and handle slides}\label{sec:6}

Next, we give the Egorov identity a topological  interpretation in
terms of handle slides. For this we look at its skein theoretical version
(\ref{skeinegorov}). We start again with an example on the torus.

\medskip

\noindent {\em Example:} For the positive twist $T$ and the operator
represented by the curve $(1,0)$ the exact Egorov identity reads
\begin{eqnarray*}
\rho(T)(1,0)=(1,1)\rho(T),
\end{eqnarray*}
which is shown in Figure~\ref{exegorov}.
\begin{figure}[ht]
\centering
\scalebox{.28}{\includegraphics{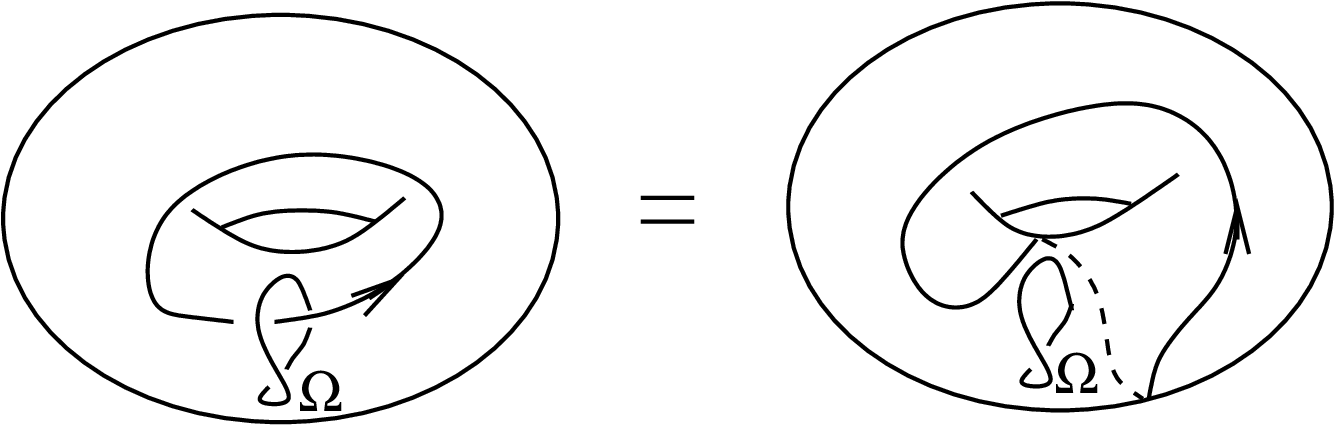}}
\caption{}
\label{exegorov}
\end{figure}
The diagram on the right is the same as the one in
 Figure~\ref{egorovslide}.
\begin{figure}[ht]
\centering
\scalebox{.28}{\includegraphics{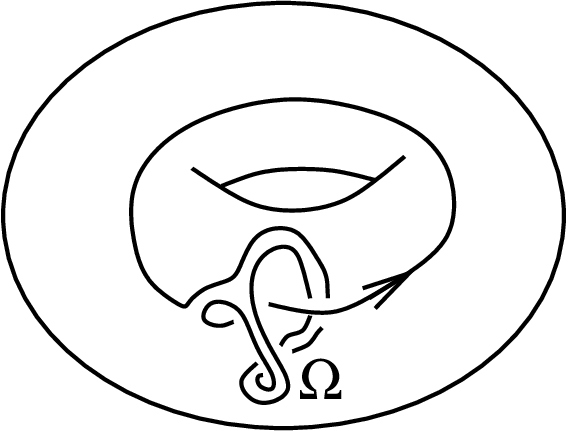}}
\caption{}
\label{egorovslide}
\end{figure}
As such, the curve $(1,1)$ is obtained by sliding the curve $(1,0)$
along the surgery curve of the positive twist. Here is the
detailed description of the
operation of sliding a framed knot along another using a Kirby
band-sum move.

The {\em slide} of
a framed knot $K_0$ along the framed knot $K$,
denoted by $K_0\#K$, is obtained
as follows. Let $K_1$ be a copy of $K$ obtained by
pushing $K$ in the direction of its framing.
Take an embedded $[0,1]^3$ that is disjoint from
$K,K_0$, and $K_1$ except for the opposite faces
$F_i=[0,1]^2\times \{i\}$, $i=0,1$ and  which are embedded in
$\partial K_0$
respectively $\partial K_1$. $F_i$ is embedded
in the annulus $K_i$ such that $[0,1]\times \{j\}\times\{i\}$ is
embedded in $\partial K_i$. Delete from $K_0\cup K_1$
the faces $F_i$ and add the faces $\{j\}\times[0,1]\times [0,1]$.
The framed knot obtained this way is $K_0\#K$.
Saying it less rigorously but more intuitively, we cut
the knots $K_0$ and $K_1$ and join together the two open strands
by pulling them along the sides of an embedded rectangle (band) which
does not intersect the knots.    Figure~\ref{slide}
shows the slide of a trefoil knot over a figure-eight knot, both
with the blackboard framing. When the knots are oriented, we perform the
slide so that the orientations match.
One should point out that there are many ways in which one
can slide one knot along the other, since the band that
connects the two knots is not unique.

\begin{figure}[ht]
\centering
\scalebox{.25}{\includegraphics{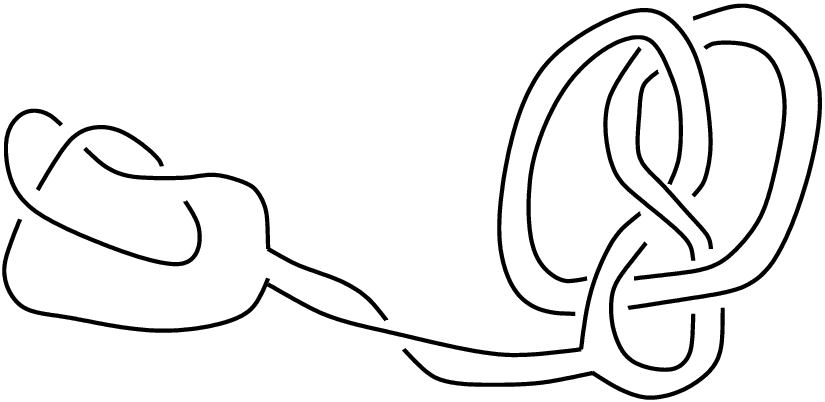}}
\caption{}
\label{slide}
\end{figure}

For a closed curve $\alpha$ in $\Sigma_g=\Sigma_g\times \{0\}$,
the curve $h(\alpha)$ is obtained from $\alpha$ by slides over
the components of the surgery link of $h$. Indeed, if $h$ is the twist along
the curve $\gamma$, with surgery
curve $\gamma^+$, and if $\alpha$ and $\gamma$ intersect on
$\Sigma_g$ at only one point, then $h(\alpha)=\alpha\#\gamma^+$. 
 If the algebraic
intersecton number of $\alpha$ and $\gamma$ is $\pm k$, then
$h(\alpha)$ is obtained from $\alpha$ by performing $k$ consecutive
slides along $\gamma^+$. The general case follows from the fact that
$h$ is a product of twists.


It follows that the exact Egorov identity is a particular case of
slides of  framed knots along components of the  surgery link. In fact, the
exact Egorov identity covers all cases of slides of one knot along
another knot colored by $\Omega$, and we have

\begin{theorem}\label{handleslide}
Let $M$ be a $3$-manifold, $\sigma$ a
skein in ${\mathcal L}_N(M)$ and $K_0$ and $K$ two oriented
framed knots in $M$ disjoint from $\sigma$. Then, in
${\mathcal L}_N(M)$, one has
\begin{eqnarray*}
\sigma\cup K_0\cup \Omega(K)=\sigma\cup (K_0\#K)\cup \Omega(K),
\end{eqnarray*}
however one does the band-sum $K_0\#K$.
\end{theorem}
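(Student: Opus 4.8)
The plan is to reduce the statement to the exact Egorov identity in its skein form \eqref{skeinegorov}, which was established for the algebra of curves on a surface, by localizing the picture near the band that realizes the slide $K_0\#K$. The key observation is that the band sum, together with a regular neighborhood of $K$, takes place inside an embedded copy of $\Sigma_1\times[0,1]$ (a neighborhood of a torus) sitting in $M$; outside this piece nothing changes, so $\sigma$ and all the other link components play no role and can be carried along unchanged.

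First I would fix an embedded solid torus $V$ that is a regular neighborhood of $K$ in $M$, chosen so that the framing of $K$ determines a longitude $K_1$ on $\partial V$, and so that the band used to form $K_0\#K$ meets $V$ only in a small sub-band near $K_1$. Let $W=\partial V\times[0,1]\subset M$ be a collar of $\partial V$ pushed slightly outward, so that $K_1$ and the arc of $K_0$ involved in the band sum lie on $\partial V\times\{0\}$ and $K_0$, $\Omega(K)$, $\sigma$ all lie in the complement. Inside $W\cong\Sigma_1\times[0,1]$, performing the band sum of the arc of $K_0$ over $K_1$ (i.e.\ over the longitude carrying the framing of $K$) is exactly the operation of applying the homeomorphism $h=T$, the positive (or negative) Dehn twist along the core of $\partial V$, to the arc of $K_0$ — this is precisely the discussion preceding the theorem, where $h(\alpha)=\alpha\#\gamma^+$ when $\alpha$ meets $\gamma$ once. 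Then I would invoke the skein Egorov identity \eqref{skeinegorov}, $h(\tau)\,{\mathcal F}(h)={\mathcal F}(h)\,\tau$, applied with $\tau$ the arc of $K_0$ regarded as an element of $\widetilde{\mathcal L}_t(W)$, together with the Lemma identifying ${\mathcal F}(T)$ with the surgery curve $\gamma^+$ colored by $\Omega$ — that is, ${\mathcal F}(T)=\Omega(K)$ sitting in $W$. This gives, inside $W$, the equality $(K_0\#K)\cup\Omega(K)=K_0\cup\Omega(K)$, and re-attaching the unchanged exterior (containing $\sigma$) via the module structure of Section~5 yields the claim. By Proposition~\ref{omegaproperties}(a) the result is independent of the orientations, so no orientation hypotheses are needed on $K$.

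To handle the case where $\alpha=K_0$ meets the core of $\partial V$ more than once, or where the band is complicated, I would note that both sides of the desired equality are unchanged under isotopy, and that any band sum $K_0\#K$ is, after isotopy, a composition of elementary slides each meeting $K_1$ once; then one applies the previous paragraph repeatedly. Alternatively, and more cleanly, one works directly: the difference between $K_0$ and $K_0\#K$, as skeins in a neighborhood of the band union a neighborhood of $K$, is captured entirely by sliding a single strand across the disk spanned by a meridian of $V$, and Proposition~\ref{omegaproperties}(b) — the "handle-slide" skein relation for $\Omega$ colored strands passing through — is exactly the local identity that forces $K_0\cup\Omega(K)=(K_0\#K)\cup\Omega(K)$ regardless of how many strands of $K_0$ pass through or how the band is routed.

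The main obstacle I expect is bookkeeping the framings and the ambient orientation conventions carefully enough that the local model $W$ genuinely sits inside $M$ compatibly, i.e.\ that pushing $K$ off itself along its framing really does produce the longitude that the Dehn twist description requires, and that the "positive twist" sign matches the $t^{j^2}$ convention in $\Omega$. Once the local identification $W\cong\Sigma_1\times[0,1]$ with $\Omega(K)={\mathcal F}(T)$ is pinned down, the rest is a direct appeal to \eqref{skeinegorov} and Proposition~\ref{omegaproperties}; the phrase "however one does the band sum" is then exactly the statement that the answer does not depend on which ${\mathcal F}(T)$-equivalent lift one chooses, which is the content of the Egorov identity holding for \emph{all} skeins $\tau$, not just for the specific arc.
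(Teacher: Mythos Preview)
Your localization idea is the right instinct, but there is a genuine gap in the execution. The Egorov identity \eqref{skeinegorov} is an equality of \emph{closed} skeins in $\widetilde{\mathcal L}_t(\Sigma_g\times[0,1])$; it is not a local tangle relation, and the paper never defines skein modules relative to boundary points. So the step ``let $\tau$ be the arc of $K_0$ regarded as an element of $\widetilde{\mathcal L}_t(W)$'' is not meaningful: an arc with endpoints on $\partial W$ is not an element of that module, and there is no mechanism in the paper to close it up outside $W$ while still invoking \eqref{skeinegorov} inside. Working in a collar $\Sigma_1\times[0,1]$ of $\partial V$ is simply too small to contain $K_0$ as a closed curve.

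The paper's proof repairs exactly this: instead of a torus neighborhood of $K$ alone, one first isotopes $K_0$ along the band to a knot $K_0'$ meeting $K$ in a single point, observes that a regular neighborhood of $K_0'\cup K$ is a once-punctured torus $\Sigma_{1,1}$, and thickens to $\Sigma_{1,1}\times[0,1]$, whose boundary is a closed genus-$2$ surface $\Sigma_2$. Now both $K_0$ and $K$ sit as \emph{closed} curves in an embedded $\Sigma_2\times[0,1]\subset M$, with $K_0$ on one level and $K$ on another; after adjusting the framing of $K$ by a twist inside $\Sigma_{1,1}\times[0,1]$, the surgery curve of the relevant Dehn twist of $\Sigma_2$ is literally $K$, and \eqref{skeinegorov} applies verbatim. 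This genus-$2$ embedding is the missing idea.

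Your ``alternative'' via Proposition~\ref{omegaproperties}(b) does not work either: that proposition is not a handle-slide relation but the \emph{killing} property --- an $\Omega$-colored meridian encircling $n\neq 0$ parallel strands gives zero. It says nothing directly about sliding a strand over an $\Omega$-colored knot with arbitrary framing inside $M$; deducing handle slides from it would require an additional argument (essentially a fusion/encirclement trick) that the paper neither states nor needs.
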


\begin{remark}
The knots from the statement of the theorem should be understood as
representing elements in ${\mathcal L}_N(M)$.
\end{remark}

\begin{proof}
Isotope $K_0$ along the embedded $[0,1]^3$ that defines $K_0\#K$ to a knot
$K_0'$ that  intersects $K$. There is an embedded punctured
torus $\Sigma_{1,1}$ in $M$,
disjoint from $\sigma$, which contains $K_0'\cup K$ on
its boundary, as shown in Figure~\ref{puncturedtorus} a). In fact,
by looking at  a neighborhood of this torus, we can find an embedded
$\Sigma_{1,1}\times [0,1]$ such that $K_0'\cup K\subset \Sigma_{1,1}\times
\{0\}$. The boundary of this cylinder is a genus $2$ surface $\Sigma_2$,
and $K_0'$ and $K_1$ lie in a punctured torus of this surface and intersect
at exactly one point. By pushing off $K_0'$ to a knot isotopic to $K_0$ (which
we identify with $K_0$), we see that we can  place $K_0$ and $K$ in an embedded
$\Sigma_2\times [0,1]$ such that $K_0\in \Sigma_2\times \{0\}$
and $K\in \Sigma_2\times \{1/2\}$.

\begin{figure}[ht]
\centering
\scalebox{.20}{\includegraphics{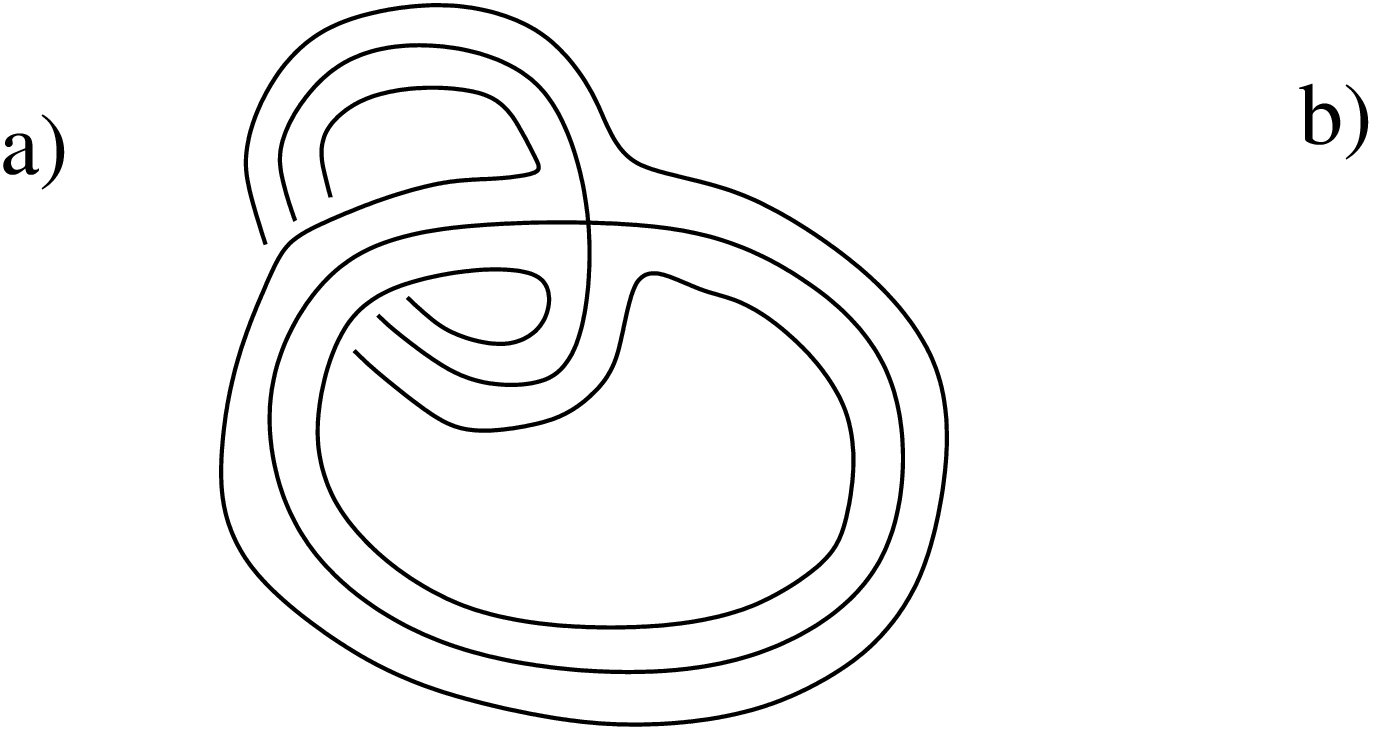}}
\scalebox{.25}{\includegraphics{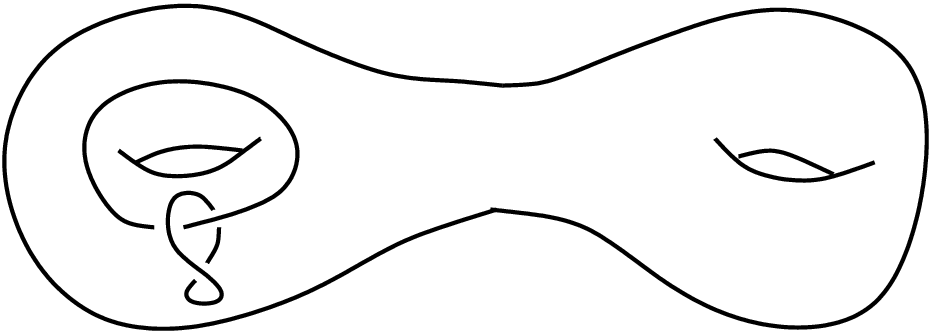}}
\caption{}
\label{puncturedtorus}
\end{figure}

By performing a twist in $\Sigma_{1,1}\times [0,1]$ we can change
the framing of $K$ in such a way that $K_0$ and $K$ look inside
$\Sigma_2\times [0,1]$ like  in Figure~\ref{puncturedtorus} b).
Then
$K_0$ is mapped to $K_0\#K$ in $\Sigma_2\times \{1\}$ by the Dehn twist
of $\Sigma_2$ with surgery diagram $K$.
Hence the equality
\begin{eqnarray*}
K_0\cup \Omega(K)=(K_0\#K)\cup \Omega(K)
\end{eqnarray*}
in $\Sigma_2\times [0,1]$ is just the exact Egorov identity, which we
know is true . By embedding $\Sigma_2\times [0,1]$ in $\Sigma_{1,1}\times
[0,1]$ we conclude that
this equality holds in $\Sigma_{1,1}\times [0,1]$. By applying the inverse
of the twist, embedding $\Sigma_{1,1}\times [0,1]$ in $M$, and adding
$\sigma$,  we conclude that the identity from the statement holds as well.
 \end{proof}

The operation of sliding one knot along another is related
to the surgery description of 3-manifolds
(see \cite{rolfsen}). We  recall the
basic facts.

We use the standard notation
$B^n$ for  an $n$-dimensional (unit) ball and
$S^n$ for the $n$-dimensional sphere.
Every oriented, closed, $3$-dimensional
manifold is the boundary of a $4$-dimensional manifold
obtained by adding $2$-handles ${B}^2\times {B}^2$  to $B^4$ along the
solid tori $B^2\times S^1$ \cite{lickorish}.
 On the boundary $S^3$ of $B^4$, when adding a handle we remove
a solid torus from $S^3$ (the one identified with $B^2\times S^1$)
and glue back the solid torus $S^1\times B^2$. The curve
$\{1\}\times S^1$ in the solid torus $B^2\times S^1$ that
was removed becomes the null-homologous curve on
the boundary of $S^1\times B^2$.

This procedure of constructing
$3$-manifolds is called Dehn surgery with integer
coefficients. The curve $\{1\}\times S^1$ together with the
core of  $B^2\times S^1$ bound an embedded annulus
which defines a framed link component in $S^3$. So the information
for Dehn surgery
with integer coefficients is encoded in a framed link
in $S^3$.

If $K_0$ is a knot inside a 3-dimensional manifold $M$
obtained by surgery on $S^3$ and if the framed knot $K$ is a
component of the surgery link, then $K_0\#K$ is the slide of
$K_0$ over the $2$-handle corresponding to $K$.
Indeed, when we slide $K_0$ along the handle we push one arc close
to $K$, then move it to the other side of the handle by pushing
it through the meridinal disk of the surgery solid torus. The
meridian of this solid torus is parallel to the knot $K$ (when viewed in
$S^3)$, so by sliding $K_0$ across the handle we obtain $K_0\#K$.
In particular, the operation of sliding one 2-handle over another
corresponds to sliding one link component of the surgery link along
another.

In conclusion, we can say that the Egorov identity allows handle-slides
along surgery link components colored by $\Omega$. We will make use of
 this fact
in Section~7.

\section{The topological quantum field theory
 associated to theta functions}\label{sec:7}

Theorem~\ref{handleslide} has two direct consequences:
\begin{itemize}
\item the definition
of a topological invariant for closed $3$-dimensional manifolds,
\item the existence of an isomorphism between the reduced linking number
skein modules of $3$-dimensional manifolds with homeomorphic boundaries.
\end{itemize}

We have seen  in the previous section that handle-slides correspond to changing
the presentation of a $3$-dimensional manifold as surgery on a framed link.
Kirby's theorem \cite{kirby1} states that two framed link diagrams
represent the same $3$-dimensional manifold if they can be transformed
into one another by a sequence of isotopies, handle slides, and
 additions/deletions
of the trivial link components $U_+$ and $U_{-}$  described
in Figure~\ref{trivialhandle}. A  trivial link component
corresponds to adding a  $2$-handle to $B^4$
in a trivial way, and on the boundary, to taking
the connected sum of the original $3$-dimensional manifold and  $S^3$.

\begin{figure}[ht]
\centering
\scalebox{.30}{\includegraphics{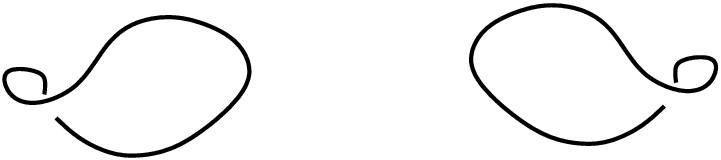}}
\caption{}
\label{trivialhandle}
\end{figure}

Theorem~\ref{handleslide} implies that, given a framed link $L$
in $S^3$, the element
$\Omega(L)\in {\mathcal L}_N(S^3)={\mathbb C}$ is an invariant of
the $3$-dimensional manifold obtained by performing surgery on
$L$, modulo addition and subtraction of trivial $2$-handles. This ambiguity
can be removed by using the linking matrix of $L$ as follows.

Recall that the  linking matrix of an oriented framed link $L$ has the
$(i,j)$ entry equal to the linking number of the $i$th and $j$th
components for $i\neq j$ and the $(i,i)$ entry equal to the
writhe of the $i$th component,
namely to the linking number of the $i$th component with a push-out of
this component in the direction of the framing.
The signature $\mbox{sign}(L)$ of the linking matrix does not depend
on the orientations of the components of $L$, and  is equal to
the signature $\mbox{sign}(W)$ of the 4-dimensional manifold $W$ obtained
by adding 2-handles to $B^4$ as specified by $L$. Here
 the signature of $\mbox{sign}(W)$ is the signature of the intersection
form in $H_2(W,{\mathbb R})$.

When adding a trivial handle via $U_+$ respectively $U_-$, the signature
of the linking matrix, and hence of the 4-dimensional manifold, changes
by $+1$ respectively $-1$.

\begin{proposition}\label{omegatrivial}
In any $3$-dimensional manifold, the following equalities hold
\begin{eqnarray*}
\Omega(U_+)=e^{\frac{\pi i}{4}}\emptyset,\quad \Omega(U_-)=
e^{-\frac{\pi i}{4}}\emptyset.
\end{eqnarray*}
Consequently $\Omega(U_+)\cup\Omega(U_-)=\emptyset$.
\end{proposition}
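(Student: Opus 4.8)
The plan is to reduce the statement to the classical evaluation of a quadratic Gauss sum. First I would unwind the definition of $\Omega(U_+)$. Since $U_+$ is a single unknotted circle with framing $+1$ bounding an embedded disk, coloring it by $\Omega$ gives
\[
\Omega(U_+)=N^{-1/2}\sum_{j=0}^{N-1}(U_+)^j,
\]
where $(U_+)^j$ denotes $j$ parallel copies of $U_+$ taken along its framing, a trivial $j$-component link sitting inside a ball $B^3$. Viewing this ball inside $S^3$ and using only the (local) skein relations of Figure~\ref{skeinrelations}, the link $(U_+)^j$ is the $j$-cable of a $(+1)$-framed unknot, whose linking matrix is the $j\times j$ all-ones matrix; by the computation recalled for $S^3$ in the remark following the definition of the linking number skein module, it therefore equals $t^{\,j^2}$ times a $0$-framed $j$-component unlink, and the preceding lemma (deletion of trivial link components) collapses the latter to $\emptyset$. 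Since $t=e^{\pi i/N}$ this gives
\[
\Omega(U_+)=N^{-1/2}\Bigl(\sum_{j=0}^{N-1}e^{\pi i j^2/N}\Bigr)\,\emptyset .
\]

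The only real content is then the identity $\sum_{j=0}^{N-1}e^{\pi i j^2/N}=\sqrt N\,e^{\pi i/4}$, which holds precisely because $N$ is even; this is the point at which the standing parity hypothesis on $N$ reenters. One may deduce it from the Landsberg--Schaar reciprocity formula, or, since $N$ is even, from the observation that $e^{\pi i(j+N)^2/N}=e^{\pi i j^2/N}$ so that $\sum_{j=0}^{2N-1}e^{2\pi i j^2/(2N)}=2\sum_{j=0}^{N-1}e^{\pi i j^2/N}$, together with the classical Gauss sum $\sum_{j=0}^{2N-1}e^{2\pi i j^2/(2N)}=(1+i)\sqrt{2N}$ valid for $2N\equiv 0\pmod 4$. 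Substituting yields $\Omega(U_+)=e^{\pi i/4}\,\emptyset$.

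For $U_-$ the argument is identical except that the $(-1)$-framed circle produces $t^{-j^2}$ in place of $t^{j^2}$, so $\Omega(U_-)=N^{-1/2}\bigl(\sum_{j=0}^{N-1}e^{-\pi i j^2/N}\bigr)\emptyset$; this sum is the complex conjugate of the previous one, equal to $\sqrt N\,e^{-\pi i/4}$, whence $\Omega(U_-)=e^{-\pi i/4}\,\emptyset$. Finally, $U_+\sqcup U_-$ is a split union of two trivial circles lying in disjoint balls, so one may color and reduce the two components one at a time, obtaining $\Omega(U_+)\cup\Omega(U_-)=e^{\pi i/4}\cdot e^{-\pi i/4}\,\emptyset=\emptyset$. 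I expect the skein-theoretic bookkeeping (reading off the exponent $j^2$ as the sum of the entries of the all-ones linking matrix) to be routine given the earlier results, and the split-union multiplicativity to be immediate; the Gauss sum evaluation, and its reliance on $N$ being even, is the one step worth stating with care.
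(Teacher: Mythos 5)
Your proof is correct and follows essentially the same route as the paper's: reduce $\Omega(U_\pm)$ to $N^{-1/2}\sum_{j=0}^{N-1}t^{\pm j^2}\,\emptyset$, use the evenness of $N$ to double the range and identify the sum as the classical quadratic Gauss sum $\frac{1}{2}\sum_{j=0}^{2N-1}e^{2\pi i j^2/(2N)}=e^{\pi i/4}\sqrt{N}$, and obtain $\Omega(U_-)$ by complex conjugation. The only difference is that you spell out the skein-theoretic step $(U_+)^j=t^{j^2}\emptyset$ via the all-ones linking matrix, which the paper leaves implicit.
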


\begin{proof}
We have
\begin{eqnarray*}
\Omega(U_+)=N^{-1/2}\sum_{j=0}^{N-1}t^{j^2}\emptyset.
\end{eqnarray*}
Because $N$ is even,
\begin{eqnarray*}
\sum_{j=0}^{N-1}t^{j^2}=\sum_{j=0}^{N-1}e^{\frac{\pi i}{N}j^2}=\sum_{j=0}^{N-1}
e^{\frac{\pi i}{N}(N+j)^2}=\sum_{j=N}^{2N-1}t^{j^2}.
\end{eqnarray*}
Hence
\begin{eqnarray*}
\sum_{j=0}^{N-1}t^{j^2}=\frac{1}{2}\sum_{j=0}^{2N-1}e^{\frac{2\pi i}{2N}j^2}.
\end{eqnarray*}
The last expression is a Gauss sum, which is
equal to $e^{\frac{\pi i}{4}}N^{1/2}$ (see
\cite{lang} page 87). This proves the first formula.

On the other hand,
\begin{eqnarray*}
\Omega(U_-)=N^{-1/2}\sum_{j=0}^{N-1}e^{-\frac{\pi i}{N}j^2}\emptyset
\end{eqnarray*}
which is the complex conjugate of $\Omega(U_+)$. Hence, the
second formula.
\end{proof}

\begin{theorem}\label{manifoldinvariant}
Given a closed, oriented, $3$-dimensional manifold $M$ obtained
as surgery on the framed link $L$ in $S^3$, the number
\begin{eqnarray*}
Z(M)=e^{-\frac{\pi i}{4}\mbox{sign}(L)}\Omega(L)
\end{eqnarray*}
is a topological invariant of the manifold $M$.
\end{theorem}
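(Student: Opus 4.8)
By Kirby's theorem \cite{kirby1}, two framed links in $S^3$ yield homeomorphic (oriented) $3$-manifolds precisely when they are related by a finite sequence of isotopies, handle slides, and additions/deletions of the trivial components $U_+$ and $U_-$. Hence it suffices to check that the quantity $e^{-\frac{\pi i}{4}\mbox{sign}(L)}\Omega(L)\in {\mathcal L}_t(S^3)\cong {\mathbb C}$ is unchanged under each of these three moves. The first would be immediate: isotopic framed links represent the same element of ${\mathcal L}_t(S^3)$, so $\Omega(L)$ is unaffected, and the linking matrix (hence its signature) is manifestly an isotopy invariant. Orientations of the components play no role in either factor, as noted before the statement of the theorem.

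For handle slides the key input is Theorem~\ref{handleslide}. Writing $L=K_1\cup\cdots\cup K_n$ and performing the slide $K_i\mapsto K_i\#K_j$ ($j\neq i$), I would apply Theorem~\ref{handleslide} with $\sigma$ the union of the remaining components each coloured by $\Omega$, $K_0=K_i$ and $K=K_j$; this gives $\Omega(L)=\Omega(L')$ where $L'$ is the slid link, regardless of the choice of band. It remains to see that $\mbox{sign}(L')=\mbox{sign}(L)$: sliding $K_i$ over $K_j$ (with matching orientations) replaces the linking matrix $A$ by $E\,A\,E^T$, where $E$ is the elementary unimodular matrix adding the $j$th row to the $i$th; since this is a congruence, the signature is preserved. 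Thus both factors of $Z$ are unchanged.

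Finally, for stabilization: $U_\pm$ is a $\pm1$-framed unknot split from $L$, so the colouring factors, $\Omega(L\sqcup U_\pm)=\Omega(L)\cup\Omega(U_\pm)$, and by Proposition~\ref{omegatrivial} this equals $e^{\pm\frac{\pi i}{4}}\Omega(L)$. On the other hand adding $U_\pm$ changes the signature of the linking matrix (equivalently, of the associated $4$-manifold $W$) by $\pm1$. Therefore
\[
e^{-\frac{\pi i}{4}\mbox{sign}(L\sqcup U_\pm)}\Omega(L\sqcup U_\pm)
= e^{-\frac{\pi i}{4}(\mbox{sign}(L)\pm1)}\,e^{\pm\frac{\pi i}{4}}\,\Omega(L)
= e^{-\frac{\pi i}{4}\mbox{sign}(L)}\Omega(L),
\]
so $Z$ is invariant under this move as well. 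Combining the three cases with Kirby's theorem shows $Z(M)$ depends only on $M$.

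I expect the only genuinely delicate point to be the behaviour of the signature under a handle slide — i.e.\ checking that the linking matrix changes by a congruence with a unimodular matrix (taking proper care of how the writhe/framing of the slid component changes) — together with confirming that Theorem~\ref{handleslide}, stated for an arbitrary band sum in an arbitrary $M$, indeed covers every handle slide permitted in Kirby's calculus. Everything else is bookkeeping with the scalars supplied by Proposition~\ref{omegatrivial}.
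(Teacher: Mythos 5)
Your proposal is correct and follows essentially the same route as the paper: invoke Kirby's theorem and check invariance of $e^{-\frac{\pi i}{4}\mbox{sign}(L)}\Omega(L)$ under handle slides via Theorem~\ref{handleslide} and under stabilization via Proposition~\ref{omegatrivial}. You actually supply one detail the paper leaves implicit, namely that a handle slide changes the linking matrix by a unimodular congruence and hence preserves its signature.
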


\begin{proof}
Using Proposition~\ref{omegatrivial} we can rewrite
\begin{eqnarray*}
Z(M)=\Omega(U_+)^{-b_+}\Omega(U_-)^{-b_-}\Omega(L)
\end{eqnarray*}
where $b_+$ and $b_-$ are the number of positive, respectively
negative eigenvalues of the linking matrix. This quantity is
invariant under  addition of trivial handles, and also under
handleslides, by Theorem~\ref{handleslide}, so it is a topological
invariant of $M$.
\end{proof}

\begin{remark}
This is the Murakami-Ohtsuki-Okada invariant \cite{moo}! Here we
derived its existence directly from the theory of theta functions.
\end{remark}

The second application
of the exact Egorov identity is the construction of
 a Sikora isomorphism, which identifies
the reduced linking number skein modules of two  manifolds with
homeomorphic boundaries.
Let us point out that such an isomorphism was constructed for reduced
Kauffman bracket skein modules in \cite{sikora}.

\begin{theorem}\label{sikora}
Let $M_1$ and $M_2$ be two $3$-dimensional manifold with
homeomorphic boundaries. Then
\begin{eqnarray*}
{\mathcal L}_N(M_1)\cong {\mathcal L}_N(M_2).
\end{eqnarray*}
\end{theorem}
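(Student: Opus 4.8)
The strategy is to reduce the claim to the invariance statement of Theorem~\ref{manifoldinvariant} combined with the handle-slide freedom of Theorem~\ref{handleslide}. First I would fix a homeomorphism $\varphi:\partial M_1\to\partial M_2$. Present both manifolds by surgery on framed links in a handlebody neighborhood: more precisely, write $M_i = N\cup_{\partial M_i} W_i$, where $N\cong \partial M_1\times[0,1]$ is a collar and $W_i$ is obtained from a standard handlebody (or from $B^4$ relative to the boundary) by surgery on a framed link $L_i$ lying in the interior. Since $M_1$ and $M_2$ have homeomorphic boundaries, any two such presentations are related, after stabilization, by a sequence of Kirby moves supported away from the collar: isotopies, handle slides of one surgery component over another, and additions/deletions of the trivial components $U_\pm$ of Figure~\ref{trivialhandle}. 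This is the relative version of Kirby's theorem \cite{kirby1}; I would invoke it rather than reprove it.

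The map itself is defined by ``coloring the surgery link by $\Omega$.'' Concretely, given a skein $\sigma\in\widetilde{\mathcal L}_t(M_1)$, represent $\sigma$ by a framed link in the interior of $M_1$ disjoint from $L_1$, regard it as a skein in $S^3$ (or in the chosen handlebody) together with $\Omega(L_1)$, and then reinterpret the resulting link, via $\varphi$ on the collar, as a skein in $M_2$; formally
\[
\Phi(\sigma) = \sigma\cup\Omega(L_2)\cup\Omega(L_1)^{\,*},
\]
the notation meaning: push $\sigma$ through the identification $\Sigma\times[0,1]\cup W_1\approx\Sigma\times[0,1]\cup W_2$ and normalize by the framing-correction scalar $e^{-\frac{\pi i}{4}(\operatorname{sign}(L_2)-\operatorname{sign}(L_1))}$ exactly as in Theorem~\ref{manifoldinvariant}. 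The content is that this is well defined, i.e. independent of the presentation $L_1$ of $M_1$ and of the presentation $L_2$ of $M_2$. Independence under handle slides of surgery components is precisely Theorem~\ref{handleslide} (the sliding knot may be a component of $\sigma$, of $L_i$, or an auxiliary band-sum curve, and in every case one side equals the other in $\widetilde{\mathcal L}_t$). Independence under adding or deleting a trivial $U_\pm$ is Proposition~\ref{omegatrivial}, together with the signature correction which absorbs the $\pm 1$ change in $\operatorname{sign}(L_i)$; isotopy invariance is automatic. Symmetrically, coloring by $\Omega$ going from $M_2$ to $M_1$ produces a map $\Psi$, and $\Psi\circ\Phi = \mathrm{id}$ because inserting $\Omega(L_2)$ and then $\Omega(L_2)$ again, followed by the reverse surgery, returns the original presentation up to Kirby moves — so the composite is the identity by the same well-definedness argument. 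Hence $\Phi$ is an isomorphism.

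The main obstacle is bookkeeping rather than conceptual: making the ``relative Kirby calculus'' precise so that all moves used to pass between presentations of $M_1$ can genuinely be taken disjoint from the collar (equivalently, disjoint from an arbitrary prescribed skein $\sigma$), so that Theorem~\ref{handleslide} applies verbatim. One must also check that the scalar normalization $e^{-\frac{\pi i}{4}\operatorname{sign}(L_i)}$ behaves additively under stabilization by $U_\pm$ and is unchanged under slides, which is exactly the computation already carried out in Proposition~\ref{omegatrivial} and the proof of Theorem~\ref{manifoldinvariant}; I would reuse it rather than redo it. Finally, one should remark that the isomorphism depends on the choice of boundary homeomorphism $\varphi$ (different $\varphi$ give isomorphisms differing by the action of the mapping class group through $\rho$), but for the bare statement of the theorem — that the two modules are abstractly isomorphic — any single choice suffices.
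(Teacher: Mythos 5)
Your overall strategy --- present both manifolds by surgery on framed links in a common standard piece, color the surgery links by $\Omega$, and appeal to Kirby calculus for well-definedness --- can in principle be made to work, but as written it has two genuine gaps, and it is much heavier than what is needed. First, the map is never actually pinned down: the expression $\Phi(\sigma)=\sigma\cup\Omega(L_2)\cup\Omega(L_1)^{*}$ does not say in which $3$-manifold this skein lives, and the symbol $\Omega(L_1)^{*}$ (``reverse surgery'') is undefined. To send a skein of $M_1$ to a skein of $M_2$ you need a surgery presentation of $M_2$ \emph{on a link inside $M_1$} (equivalently, the combined link ``undo $L_1$, then do $L_2$'' together with a specified lift of $\sigma$ to the complement of that link); coloring $L_1$ itself by $\Omega$ inside $M_2$ does not undo the $L_1$-surgery --- the dual link does. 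Second, both the well-definedness of $\Phi$ and the identity $\Psi\circ\Phi=\mathrm{id}$ are delegated to a \emph{relative} Kirby theorem, for manifolds with boundary and in the presence of an auxiliary skein $\sigma$ that the moves must respect. That statement is true but is not the closed-manifold theorem of \cite{kirby1} used elsewhere in the paper; it is the crux of your argument rather than ``bookkeeping,'' and without it the composite is not identified with the identity.

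The paper's proof avoids all of this. It uses only the elementary surgery duality between the two manifolds: since $\partial M_1\cong\partial M_2$, there is a framed link $L_1\subset M_1$ such that $M_2$ is obtained by surgery on $L_1$, and the cores $L_2$ of the reglued tori form a link in $M_2$ whose surgery returns $M_1$. The two maps are $\sigma\mapsto\phi(\sigma)\cup\Omega(L_1)$ and $\sigma\mapsto\phi^{-1}(\sigma)\cup\Omega(L_2)$, where $\phi$ is the homeomorphism between the complements of the surgery tori; well-definedness is exactly Theorem~\ref{handleslide}, because pushing a skein through a surgery torus is a slide over the $\Omega$-colored curve, and the composition is the identity by the single computation $\Omega(L_1)\cup\Omega(\phi^{-1}(L_2))=\emptyset$ from Proposition~\ref{omegaproperties}~b), since each component of $\phi^{-1}(L_2)$ is a meridian encircling the corresponding component of $L_1$ exactly once. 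No Kirby theorem, no common standard piece, and no signature normalization (which is irrelevant to the bare existence of an isomorphism) are required. If you wish to keep your global-presentation approach, you must either prove the relative Kirby theorem with an auxiliary skein or replace that appeal by the meridian computation just described.
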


\begin{proof}
Because the manifolds $M_1$ and $M_2$ have homeomorphic boundaries,
there is a framed  link $L_1\subset M_1$  such that
$M_2$ is obtained by performing surgery on $L_1$ in $M_1$. Let $N_1$
be a regular neighborhood of $L_1$ in $M$, which is the union of
several solid tori, and let  $N_2$ be the union of the
surgery tori in $M_2$. The cores of these tori form a framed link
$L_2\subset M_2$, and $M_1$ is obtained by performing surgery on $L_2$
in $M_2$. Every skein in $M_1$, respectively $M_2$ can be isotoped to
one that misses $N_1$, respectively $N_2$. The homeomorphism $M_1\backslash
N_1\cong M_2\backslash N_2$ yields an isomorphism
\begin{eqnarray*}
\phi:{\mathcal L}_N(M_1\backslash N_1)\rightarrow
{\mathcal L}_N(M_1\backslash N_1).
\end{eqnarray*}
However, this does not induce a well defined map between
${\mathcal L}_N(M_1)$ and ${\mathcal L}_N(M_2)$ because
a skein can be pushed through the $N_i$'s. To make it
well defined, the skein should not change when pushed through these
regular neighborhoods. We use Theorem~\ref{handleslide}
and define
\begin{eqnarray*}
&&F_1:{\mathcal L}_N(M_1)\rightarrow
{\mathcal L}_N(M_1),\quad
F_1(\sigma)=\phi(\sigma)\cup \Omega(L_1)\\
&&F_2:{\mathcal L}_N(M_2)\rightarrow
{\mathcal L}_N(M_1),
\quad
F_2(\sigma)=\phi^{-1}(\sigma)\cup \Omega(L_2).
\end{eqnarray*}
By Proposition~\ref{omegaproperties} b) we have
\begin{eqnarray*}
\Omega(L_1)\cup \Omega(\phi^{-1}(L_2))=
\emptyset \in\widetilde{\mathcal L}_t(M_1),
\end{eqnarray*}
since each of the components of $\phi^{-1}(L_2)$ is a meridian in
the surgery torus, hence it surrounds exactly once the corresponding
component in $L_1$. This implies that $F_2\circ F_1=Id$. A similar
argument shows that $F_1\circ F_2=Id$.
\end{proof}

One should note that the Sikora isomorphism depends on the surgery diagram.
Now it is easy to describe the reduced linking number skein module
of any manifold.

\begin{proposition}\label{vectorspace}
For every oriented $3$-dimensional manifold $M$ having the boundary
components $\Sigma_{g_i}$, $i=1,2,\ldots, n$, one has
\begin{eqnarray*}
{\mathcal L}_N(M)\cong \bigotimes_{i=1}^n{\mathbb C}^{Ng_i}.
\end{eqnarray*}
\end{proposition}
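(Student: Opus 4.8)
The plan is to reduce the general case to two building blocks: the handlebody $H_g$, whose reduced linking number skein module we already know is $N^g$-dimensional by Proposition~\ref{skeinmodules}(d), and the connecting pieces that handle several boundary components. First I would invoke Theorem~\ref{sikora}: the reduced linking number skein module depends only on the homeomorphism type of $\partial M$. Since $\partial M=\bigsqcup_{i=1}^n\Sigma_{g_i}$, we may replace $M$ by any convenient $3$-manifold with the same boundary. The natural choice is a \emph{standard} model, e.g. the boundary connected sum $H_{g_1}\natural H_{g_2}\natural\cdots\natural H_{g_n}$, or equivalently the complement in $S^3$ of $n$ disjoint unknotted handlebodies; any such $M'$ has $\partial M'\cong\bigsqcup\Sigma_{g_i}$, so $\widetilde{\mathcal L}_t(M)\cong\widetilde{\mathcal L}_t(M')$.

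Next I would compute $\widetilde{\mathcal L}_t(M')$ for this standard model directly. The argument of Proposition~\ref{skeinmodules}(c)--(d) generalizes: view $M'$ as the cylinder over a planar surface with the appropriate number of holes (so that each boundary component $\Sigma_{g_i}$ contributes $g_i$ ``handles''), bring every skein into blackboard-framed position, resolve all crossings, and use the skein relations plus the factorization relation $\gamma^N=\emptyset$. One obtains a spanning set indexed by the homology classes carried by the $a$-type curves associated with each boundary component, i.e.\ by elements of $\bigoplus_{i=1}^n{\mathbb Z}_N^{g_i}$, giving at most $\prod_i N^{g_i}$ generators. Linear independence follows, exactly as in Proposition~\ref{skeinmodules}, because the skein relations and the relation $\gamma^N=\emptyset$ preserve the homology class in $H_1(M';{\mathbb Z}_N)$ (the factorization only introduces scalar relations of the form $t^{kN}=1$, which hold since $t=e^{\pi i/N}$). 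Hence $\dim\widetilde{\mathcal L}_t(M')=\prod_{i=1}^n N^{g_i}=\dim\bigotimes_{i=1}^n{\mathbb C}^{Ng_i}$, and a basis is given by products of the $a_j$-curves over all boundary components, which makes the tensor product decomposition manifest.

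The step I expect to be the main obstacle is verifying that the standard model genuinely has the right skein module \emph{with the tensor factorization}, rather than merely the right dimension. Getting the dimension count is routine once one has the planar-surface picture, but one must check that no ``long'' skein threading between different boundary components survives the reduction — that is, that every skein is equivalent to a disjoint union of local skeins, one near each boundary component. This is where the decomposition of the planar surface into pairs of pants and annuli (as in Figure~\ref{nullhomologous}) is used: any curve not homologous into the union of the boundary-collar handles is null-homologous and hence killed, so only the per-component $a_j$-curves remain. Once this is established, the isomorphism $\widetilde{\mathcal L}_t(M)\cong\bigotimes_{i=1}^n{\mathbb C}^{Ng_i}$ follows by combining Theorem~\ref{sikora} with the explicit basis, and one may note in passing that for $n=1$ this recovers Theorem~\ref{thetaskein}, identifying $\widetilde{\mathcal L}_t(H_g)$ with $\spacetheta(\Sigma_g)$.
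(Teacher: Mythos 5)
Your overall strategy --- use Theorem~\ref{sikora} to replace $M$ by a standard model with the same boundary, then compute that model explicitly --- is the same as the paper's, but the execution has a genuine gap in the choice and treatment of the model. The boundary connected sum $H_{g_1}\natural\cdots\natural H_{g_n}$ is a single handlebody of genus $\sum_i g_i$ with \emph{connected} boundary $\Sigma_{\sum_i g_i}$, not $\Sigma_{g_1}\cup\cdots\cup\Sigma_{g_n}$; the same is true of any cylinder over a planar surface. So Theorem~\ref{sikora} does not let you replace $M$ by that manifold, and the fact that its skein module happens to have the right dimension $N^{\sum_i g_i}=\prod_i N^{g_i}$ proves nothing about $M$. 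The correct standard model is the interior connected sum $H_{g_1}\#\cdots\# H_{g_n}$ (this is what your other suggestion, the complement of $n$ disjoint unknotted handlebodies in $S^3$, is homeomorphic to --- it is not ``equivalent'' to the boundary connected sum). But the connected sum is not a cylinder over a planar surface, so the crossing-resolution argument of Proposition~\ref{skeinmodules} does not apply to it directly.

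The missing step is exactly the one you flag as the main obstacle, and your proposed resolution does not work: a skein component that threads through a separating sphere $S^2_{sep}$ between two summands (say, running once around $a_1$ in each of two solid tori) is \emph{not} null-homologous, so it is not killed by the computation of Figure~\ref{nullhomologous}. What the paper proves instead is a connected-sum lemma: the union map $\widetilde{\mathcal L}_t(M_1)\otimes\widetilde{\mathcal L}_t(M_2)\rightarrow\widetilde{\mathcal L}_t(M_1\# M_2)$ is an isomorphism. Surjectivity is obtained by writing a skein as $\sum_j\sigma_j$, where $\sigma_j$ meets $S^2_{sep}$ in $j$ coherently oriented strands, and sliding a trivial circle colored by $\Omega$ (a nonzero multiple of the empty skein) across the sphere so that it encircles those strands; Proposition~\ref{omegaproperties} b) then annihilates every $\sigma_j$ with $j\not\equiv 0 \ (\mbox{mod } N)$. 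This argument rests on the handle-slide property of $\Omega$ (Theorem~\ref{handleslide}), not on homology. Injectivity uses that the skein module of a neighborhood of $S^2_{sep}$ is ${\mathbb C}$. With this lemma supplied, your argument closes up and coincides with the paper's.
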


\begin{proof}
If $M$ has no boundary  then ${\mathcal L}_N(M)=
{\mathcal L}_N(S^3)={\mathbb C}$, and if $M$ is bounded by a
sphere, then ${\mathcal L}_N(M)={\mathcal L}_N(B^3)=
{\mathbb C}$, where $B^3$ is the $3$-dimensional ball.
If $M$ has one genus $g$ boundary component with $g\geq 1$, then
${\mathcal L}_N(M)={\mathcal L}_N(H_g)={\mathbb C}^{Ng}$
by Theorem~\ref{heisenbergskeins2}.

To tackle the case of  more boundary components we need the
following result:

\begin{lemma}
Given two oriented $3$-dimensional manifolds $M_1$
and $M_2$,  let $M_1\#M_2$ be their connected sum. The map
 \begin{eqnarray*}
{\mathcal L}_N(M_1)\otimes
{\mathcal L}_N(M_2)\rightarrow {\mathcal L}_N(M_1\#M_2)
\end{eqnarray*}
defined by $(\sigma,\sigma')\mapsto \sigma\cup \sigma'$ is an
isomorphism.
\end{lemma}

\begin{proof}
In $M_1\#M_2$, the manifolds $M_1$ and $M_2$ are separated by a $2$-dimensional
sphere $S^2_{sep}$.  Every skein in $M_1\#M_2$ can be written as
$\sum_{j=0}^{N-1}\sigma_j$, where each $\sigma_j$ intersects $S^2_{sep}$
in $j$ strands pointing in the same direction. A trivial skein
colored by $\Omega$ is equal to the empty link. But when we slide
it over $S^2_{sep}$ it turns $\sum_{j=0}^{N-1}\sigma_j$ into $\sigma_0$. 
This shows that the map from the statement is onto.

On the other
hand, the reduced linking number skein module of a regular
neighborhood of $S^2_{sep}$ is ${\mathbb C}$ since every skein can be
resolved to the empty link. This means that, in $M_1\#M_2$,
 if a skein that lies entirely in  $M_1$ can
be isotoped to a skein that lies entirely in  $M_2$, then this skein is
a scalar multiple of the empty skein.
So if $\sigma_1\cup\sigma_1'=\sigma_2\cup \sigma_2'$, then
$\sigma_1=\sigma_2$ in ${\mathcal L}_N(M_1)$ and
$\sigma_1'=\sigma_2'$ in ${\mathcal L}_N(M_2)$. Hence the map
is one-to-one, and we are done.
\end{proof}

Returning to the theorem, an oriented $3$-manifold with
$n$ boundary components
can be obtained as surgery on a connected sum of $n$ handlebodies.
The conclusion  follows by applying the lemma.
\end{proof}

If $M$ is a $3$-manifold without boundary, then Theorem~\ref{sikora} shows that
\begin{eqnarray*}
{\mathcal L}_N(M)\cong {\mathcal L}_N(S^3)= {\mathbb C}.
\end{eqnarray*}
If we describe $M$ as surgery on a framed link $L$ with signature zero, which
is always possible by adding trivial link components with framing
$\pm 1$, then the Sikora isomorphism maps the empty link
in $M$ to  the vector
\begin{eqnarray*}
Z(M)=\Omega(L)\in {\mathcal L}_N(S^3)={\mathbb C}.
\end{eqnarray*}

More generally, $M$ can be endowed with a framing defined by the signature
of the $4$-dimensional manifold $W$ that it bounds constructed as
explained before. If $L$ is the surgery link that gives rise to $M$
and $W$, then to the framed manifold $(M,\mbox{sign}(W))=(M,m)$ we can
associate the invariant
\begin{eqnarray*}
Z(M,m)=\Omega(L)\in {\mathcal L}_N(S^3).
\end{eqnarray*}
The Sikora isomorphism associated to $L$
identifies this invariant with the empty link
in $M$.

All these can be generalized to manifolds with boundary. A $3$-dimensional
manifold $M$ with boundary  can be obtained by performing surgery on
a framed link $L$ in the  complement $M_0$ of $n$ handlebodies
embedded in $S^3$, $n\geq 1$. Endow  $M$ with
 a framing by filling in
the missing handlebodies in $S^3$ and constructing the $4$-dimensional
manifold $W$ with the surgery instructions from $L$.
To the  manifold $(M,\mbox{sign}(W))=(M,m)$ we can associate
the skein $\emptyset\in
{\mathcal L}_N(M)$. A Sikora isomorphism allows us to identify
this vector with $\Omega(L)\in {\mathcal L}_N(M_0)$. Another
Sikora isomorphism allows us to identify ${\mathcal L}_t(M_0)$
with the reduced linking number
skein module of the connected sum of handlebodies, namely with
$\otimes_{i=1}^n{\mathbb C}^{Ng_i}$, where $g_i$, $i=1,2,\ldots, n$ are the
genera of  the boundary components
of $M$.
Via Proposition~\ref{vectorspace}, $\Omega(L)$ can be identified
 with a vector
\begin{eqnarray*}
Z(M,m)\in  \bigotimes_{i=1}^n{\mathbb C}^{Ng_i}.
\end{eqnarray*}

There is another way to see this identification of the linking number
skein module with $\otimes_{i=1}^n{\mathbb C}^{Ng_i}$, done in the spirit
of \cite{reshetikhinturaev}. In this alternative formalism, $\Omega(L)$
acts as a linear functional on
\begin{eqnarray*}
\Omega(L):\bigotimes_{i=1}^n{L}_N(H_{g_i})\rightarrow {\mathbb C},
\end{eqnarray*}
by gluing handlebodies to $M_0$ as to obtain $S^3$. In this setting,
the formalism developed in \cite{turaev} Chapter IV applies to show that
the vector
\begin{eqnarray*}
Z(M,m)\in \left(\oplus_{i=1}^N{\mathbb C}^{Ng_i}\right)^*=\oplus_{i=1}^n{\mathbb C}^{Ng_i}
\end{eqnarray*}
is well defined once the framing $m$ is fixed. Let us point out
that this can be modeled with the quantum group of abelian Chern-Simons theory
in the framework of \cite{gelcahamilton}.

The construction fits Atiyah's formalism of a topological quantum field
theory (TQFT) \cite{atiyah} with anomaly \cite{turaev}. In this formalism
\begin{itemize}
\item to each surface $\Sigma=\Sigma_{g_1}\cup\Sigma_{g_2}\cup\cdots \cup
\Sigma_{g_n}$ we associate the vector space
\begin{eqnarray*}
V(\Sigma)= \bigotimes_{i=1}^n{\mathbb C}^{Ng_i}
\end{eqnarray*}
which is isomorphic to the reduced linking number skein module of any
$3$-dimensional manifold that $\Sigma$ bounds.
\item to each framed $3$-dimensional manifold $(M,m)$ we associate
the empty link in ${\mathcal L}_N(M)$.
As a vector in $V(\partial M)$, this is $Z(M,m)$.
\end{itemize}

Atiyah's axioms are easy to check. {\em Functoriality} is obvious.
The fact that $Z$ is {\em involutory} namely that $Z(\Sigma^*)=
Z(\Sigma)^*$ where $\Sigma^*$ denotes $\Sigma$ with opposite orientation
follows by gluing a manifold $M$ bounded by $\Sigma$ to a
manifold $M^*$ bounded by $\Sigma^*$ and using the standard pairing
\begin{eqnarray*}
{\mathcal L}_N(M)\times {\mathcal L}_N(M^*)\rightarrow
{\mathcal L}_N(M\cup M^*)={\mathbb C}.
\end{eqnarray*}
Let us check the {\em multiplicativity} of $Z$ for disjoint union.
 If $\Sigma$ and
$\Sigma'$ are two surfaces, we can consider disjoint
$3$-dimensional manifolds $M$ and $M'$ such that $\partial M=\Sigma$
and $\partial M'=\Sigma'$. Then
\begin{eqnarray*}
Z(\Sigma\cup \Sigma')={\mathcal L}_N(M\cup M')=
{\mathcal L}_N(M)\otimes{\mathcal L}_N(M')=
Z(\Sigma)\otimes Z(\Sigma').
\end{eqnarray*}
Also $\emptyset \in {\mathcal L}_N(M\cup M')$ equals
$\emptyset \otimes \emptyset \in {\mathcal L}_N(M)\otimes
{\mathcal L}_N(M')$. If we now endow $M$ and $M'$ with framings
$m$, respectively $m'$, then
\begin{eqnarray*}
Z(M\cup M',m+m')=Z(M,m)\otimes Z(M',m'),
\end{eqnarray*}
since the Sikora isomorphism acts separately on the skein modules of $M$
and $M'$.

What about {\em multiplicativity} for manifolds glued along a surface?
Let  $M_1$ and $M_2$ be  $3$-dimensional manifolds with
$\partial M_1=\Sigma\cup \Sigma '$ and $M_2=\Sigma^*\cup \Sigma ''$, and
assume that $M_1$ is glued to $M_2$ along $\Sigma$. Then the
empty link in $M\cup M'$ is obtained as the union of the
empty link in $M$ with the empty link in $M'$.  It follows that
\begin{eqnarray*}
Z(M_1\cup M_2,m_1+m_2)=e^{-\frac{i\pi}{4}\tau}\left<Z(M_1,m_1),Z(M_2, m_2)\right>,
\end{eqnarray*}
where $\left<,\right>$ is the contraction
\begin{eqnarray*}
V(\Sigma')\otimes V(\Sigma)\otimes V(\Sigma)^*\otimes V(\Sigma'')\rightarrow
V(\Sigma')\otimes V(\Sigma''),
\end{eqnarray*}
and $\tau$ expresses the anomaly of the TQFT and depends on how
the signature of the surgery link changes under gluing (or equivalently,
on how the signatures of the $4$-dimensional manifolds bounded by the given
$3$-dimensional manifolds change under the gluing, see Section~\ref{sec:8}).

Finally, $Z(\emptyset)={\mathbb C}$, because the only link in the
void manifold is the empty link.
Also, if $M=\Sigma\times [0,1]$, then the empty link in $M$ is the
surgery diagram of the identity homeomorphism of $\Sigma$
and hence $Z(M,0)$ can be viewed as the identity map
in $\mbox{End}(V(\Sigma))$.

This TQFT is {\em hermitian} because $V(\partial M)$, being
a space of theta functions, has the inner product introduced
in Section~\ref{sec:2}. And if $M$ is a $3$-dimensional
manifold and $M^*$ is the same manifold but with reversed orientation,
then the surgery link $L^*$ of $M^*$ is the mirror image of
the surgery link $L$ of $M$. The invariant of $M$ is computed by
smoothing the crossings in $L$ while the invariant of $M^*$ is
computed by smoothing the crossings in $L^*$, whatever was a positive
crossing in $L$ becomes a negative crossing in $L^*$ and vice-versa.
Hence $\mbox{sign}(L^*)=-\mbox{sign}(L)$. Also, because
for $t=e^{\frac{i\pi}{N}}$ one has $t^{-1}=\bar{t}$, and hence
$\Omega(L^*)=\overline{\Omega(L)}$. It follows that
\begin{eqnarray*}
Z(M^*,-m)=\overline{Z(M,m)}
\end{eqnarray*}
as desired.

\section{The Hermite-Jacobi action and the non-additivity of the
signature of 4-dimensional  manifolds}\label{sec:8}

An interesting coincidence in mathematics is the fact that the
Segal-Shale-Weil cocycle of the metaplectic representation
\cite{lionvergne} and the
non-additivity of the signature of 4-dimensional
manifolds under gluings \cite{wall}
are both described in terms of the Maslov index. We explain this
coincidence by showing how to resolve the projectivity of the
Hermite-Jacobi action using 4-dimensional manifolds. Note that
the theory of theta functions explains the coincidence of the  cocycle of the
metaplectic representation and the cocycle of the Hermite-Jacobi action.

Each element $h$ of the mapping class group of $\Sigma_g$ can be
represented by surgery on a link $L_h\in \Sigma_g\times [0,1]$, meaning
that  surgery along $L_h$ yields a manifold that   is homeomorphic to
$\Sigma_g \times[0,1]$ is such a way that the homeomorphism is the identity map
on $\Sigma_g\times \{0\}$ and $h$ on $\Sigma_g\times\{1\}$.
The  link $L_h$ is not necessarily obtained from writing $h$ as a composition
of Dehn twists, as in Proposition~\ref{skeinfourier}.

\begin{theorem}
Let $h$ be an element of the mapping class group of $\Sigma_g$ obtained
by surgery on the framed link $L_h$ in $\Sigma_g\times [0,1]$.
Then  the discrete Fourier transform $\rho(h):{\mathcal L}_N(H_g)
\rightarrow {\mathcal L}_N(H_g)$ is given by
\begin{eqnarray*}
\rho(h)\beta=\Omega(L_h)\beta.
\end{eqnarray*}
\end{theorem}

\begin{proof}
Write $h=T_1T_2\cdots T_n$, where $T_j$ are Dehn twists obtained as
surgeries along the curves $\gamma_j$, $j=1,2,\ldots ,n$.
If we glue two handlebodies so as to obtain $S^3$, the gluing
defines a nondegenerate pairing $[\cdot ,\cdot ]
:{\mathcal L}_N(H_g)\times
{\mathcal L}_N(H_g)\rightarrow {\mathbb C}$. If we consider
basis elements $e_j,e_k$ in $\widetilde{\mathcal L}_t(H_g)$,
then $[\Omega(L_h)e_j,e_k]$ completely determines the operator
defined by $\Omega(L_h)$. Because of  invariance under handle slides,
we can transform $\Omega(L_h)$ into $\Omega(\gamma_1)\Omega(\gamma_2)
\cdots \Omega(\gamma_n)$ such that
\begin{eqnarray*}
[\Omega(L_h)e_j,e_k]=[\Omega(\gamma_1)\Omega(\gamma_2)\cdots \Omega(\gamma_n)e_j,e_k].
\end{eqnarray*}
Hence the operators defined by $\Omega(L_h)$ and $\Omega(\gamma_1)\Omega(\gamma_2)
\cdots \Omega(\gamma_n)$  are equal. The conclusion follows from Proposition~\ref{skeinfourier}.
\end{proof}

Fix
a Lagrangian subspace ${\bf L}$ of $H_1(\Sigma_g, {\mathbb R})$
and consider the  closed 3-manifold $M$ obtained
by gluing to the surgery of $\Sigma_g\times [0,1]$ along $L$ the
handlebodies $H_g^0$ and $H_g^1$ such that $\partial H_g^0=\Sigma_g\times \{0\}$,
$\partial H_g^1=\Sigma_g\times \{1\}$, and ${\bf L}$ respectively
$h_*({\bf L})$ are the kernels of the inclusion of $\Sigma_g$
into $H_g^0$ respectively $H_g^1$.
$M$ is the boundary
of a $4$-manifold $W$ obtained by adding $2$-handles
to the ball $B^4$ as prescribed by $L$.

The discrete Fourier transform $\rho(h)$ is a skein in
${\mathcal L}_N(\Sigma_g\times [0,1])$ which is uniquely
determined once we fix the signature of $W$.
Hence to the pair $(h,n)$ where $h$ is an element of the mapping
class group and $n\in {\mathbb Z}$, we  associate uniquely
a skein ${\mathcal F}(h,n)$, its discrete Fourier transform. We identify
the pair $(h,n)$ with $(h,\sign(W))$ where $W$ is a $4$-dimensional
manifold defined as above. Note that by adding trivial 2-handles
we can enforce $\sign(W)$ to be any integer.

Consider the ${\mathbb Z}$-extension of the mapping class
group defined by the multiplication rule
\begin{eqnarray*}
(h',\sign(W'))\circ(h,\sign(W))=(h'\circ h, \sign(W'\cup W))
\end{eqnarray*}
where $W'$ and $W$ are glued in such a way that $H_g^0\in W'$ is
identified with $H_g^1$ in $W$. If ${\bf L}'$ is the Lagrangian
subspace of $H_1(\Sigma_g,{\mathbb R})$
 used for defining $W'$, then necessarily ${\bf L}'=h_*({\bf L})$.
Recall Wall's formula for the non-additivity of the signature
of 4-dimensional manifolds
\begin{eqnarray*}
\sign(W'\cup W)=\sign(W')+\sign(W)-\tau({\bf L},h_*({\bf L}), h'_*\circ
h_*({\bf L})),
\end{eqnarray*}
where $\tau $ is the Maslov index.
By using this formula we obtain
\begin{alignat*}{1}
{\mathcal F}&(h'\circ h,\sign(W'\cup W))\\
&={\mathcal F}(h'\circ h,
\sign(W')+\sign(W)-\tau({\bf L}, h_*({\bf L}), h'_*\circ h_*({\bf L}))\\
&=e^{-\frac{i\pi}{4}\tau({\bf L},h_*({\bf L}),h'_*\circ h_*({\bf L}))}{\mathcal
  F}(h'\circ h, \sign(W')+\sign(W))\\
&=e^{-\frac{i\pi}{4}\tau({\bf L},h_*({\bf L}),h'_*\circ h_*({\bf L}))}{\mathcal
  F}(h',\sign(W'))
{\mathcal F}(h,\sign(W)),
\end{alignat*}
where for the second step we changed the signature of the 4-dimensional
manifold associated to $h\circ h'$ by adding trivial handles and
used Proposition~\ref{omegatrivial}.

Or equivalently, if we let $\rho(h,\sign(W))$ be discrete
the Fourier transform associated to $h$, normalized by the (signature of)
the manifold $W$, then
\begin{eqnarray*}
\rho(h'\circ h,\sign(W'\cup W))=e^{-\frac{i\pi}{4}\tau({\bf L},h_*({\bf
    L}),h'_*\circ h_*({\bf L}))}
\rho(h',\sign(W'))\rho(h,\sign(W)).
\end{eqnarray*}
In this formula we recognize the  Segal-Shale-Weil cocycle
\begin{eqnarray*}
c(h,h')=e^{-\frac{i\pi}{4}\tau({\bf L},h_*({\bf L}),h'_*\circ h_*({\bf L}))}
\end{eqnarray*}
used for resolving the projective ambiguity of the Hermite-Jacobi action,
as well as for resolving the projective ambiguity of the metaplectic
representation.

\section{Theta functions and abelian Chern-Simons theory}\label{sec:9}

By Jacobi's inversion theorem and Abel's theorem \cite{farkaskra},
the Jacobian of a surface $\Sigma_g$
pa\-ram\-etrizes the set of divisors of degree zero
modulo principal divisors. This is the moduli space of
stable line bundles, which is the same as the moduli space
${\mathcal M}_g(U(1))$
of flat $u(1)$-connections on the surface (in the trivial $U(1)$-bundle).

The moduli space ${\mathcal M}_g(U(1))$
has a complex structure defined as follows (see for
example \cite{hitchin}). The tangent space to ${\mathcal M}_g(U(1))$
at an arbitrary point is $H^{1}(\Sigma_g,{\mathbb R})$, which, by Hodge
theory, can be identified with
the space of real-valued harmonic $1$-forms on $\Sigma_g$. The complex
structure is given by
\begin{eqnarray*}
J\alpha=-*\alpha,
\end{eqnarray*}
where $\alpha$ is a harmonic form. In local coordinates, if
$\alpha=udx+vdy$, then $J(udx+vdy)=vdx-udy$.

If we identify the space of real-valued harmonic $1$-forms with the
space of holomorphic $1$-forms $H^{(1,0)}(\Sigma_g)$ by the map
$\Phi$ given in local coordinates by $\Phi(udx+vdy)=(u-iv)dz$, then
the complex structure becomes multiplication by $i$ in $H^{(1,0)}(\Sigma_g)$.

The moduli space is a torus obtained
by exponentiation
\begin{eqnarray*}
{\mathcal M}_{U(1)}=H^1(\Sigma_g,{\mathbb R})/{\mathbb Z}^{2g}.
\end{eqnarray*}
If we choose a basis of the space of real-valued
harmonic forms $\alpha_1,\alpha_2,\ldots, \alpha_g,$ $\beta_1,
\ldots, \beta_g$ such that
\begin{eqnarray}\label{rightperiods}
\int_{a_j}\alpha_k=\delta_{jk}, \quad \int_{b_j}\alpha_k=0, \quad
\int_{a_j}\beta_k=0,\quad \int_{b_j}\beta_k=\delta_{jk},
\end{eqnarray}
then the above ${\mathbb Z}^{2g}$ is the period matrix of this basis.

On the other hand, if $\zeta_1,\zeta_2,\ldots, \zeta_g$ are
the holomorphic forms introduced in Section~\ref{sec:2}, and
if $\alpha_j'=\Phi^{-1}(\zeta_j)$ and $\beta_{j}'=\Phi^{-1}(-i\zeta_j)$,
$j=1,2,\ldots, g$ then one can compute that
 \begin{eqnarray*}
\int_{a_j}\alpha_k'=\delta_{jk},\, \int_{b_j}\alpha_k'=\mbox{Re}\ \pi _{jk},\,
\int_{a_j}\beta_k'=0, \, \int_{b_j}\beta_k'=\mbox{Im} \ \pi_{jk}.
\end{eqnarray*}
The basis $\alpha_1',\ldots, \alpha_g',\beta_1',\ldots, \beta_g'$ determines
coordinates $(X',Y')$ in the tangent space to ${\mathcal M}_{U(1)}$.
If we consider the change of coordinates
$X'+iY'=X+\Pi Y$,  then the moduli space is the quotient of
${\mathbb C}^g$ by the integer lattice ${\mathbb Z}^{2g}$.
This is exactly what has been
done in Section~\ref{sec:2} to obtain the   jacobian variety.
This shows that the complex structure on the Jacobian variety coincides with
 the standard complex structure on the moduli space of flat $u(1)$-connections
on the surface.

The moduli space ${\mathcal M}_g(\Sigma_g)$ has a symplectic
structure defined by the Atiyah-Bott form \cite{atiyahbott}. This
form is  given by
\begin{eqnarray*}
\omega(\alpha,\beta)=-\int_{\Sigma_g}\alpha\wedge \beta,
\end{eqnarray*}
where $\alpha,\beta$ are real valued harmonic $1$-forms,
i.e. vectors in the tangent
space to ${\mathcal M}_g(\Sigma_g)$. If $\alpha_j,\beta_j$, $j=1,2,\ldots, g$
are as in (\ref{rightperiods}), then $\omega(\alpha_j,\alpha_k)=
\omega(\beta_j,\beta_j)=0$ and $\omega(\alpha_j,\beta_k)=\delta_{jk}$
(which can be seen by identifying the space of real-valued harmonic $1$-forms
 with $H^1(\Sigma_g,{\mathbb R})$ and using the topological definition of
the cup product). This shows that the Atiyah-Bott
form coincides with the symplectic form introduced in Section~\ref{sec:2}.

For a $u(1)$-connection $A$ and curve $\gamma$ on the surface,
we denote by $\mbox{hol}_\gamma (A)$ the holonomy of $A$ along $\gamma$.
The map $A\mapsto \mbox{trace}(\mbox{hol}_\gamma(A))$
 induces a function
 on the Jacobian  variety called Wilson line\footnote{Since we work with
the group $U(1)$,
the holonomy is just a complex number of absolute value $1$, and the
trace is the number itself.}. If
$[\gamma]=(p,q)\in H_1(\Sigma_g,{\mathbb Z})$,
then the Wilson line associated to $\gamma$ is the function
$(x,y)\mapsto \exp 2\pi i(p^Tx+q^Ty)$. These are the functions on
the Jacobian variety of interest to us.

The goal is to quantize the moduli space of flat $u(1)$-connections
on the closed Riemann surface $\Sigma_g$ endowed with the
Atiyah-Bott symplectic form. One procedure has been
outlined in Section~\ref{sec:2}; it is Weyl quantization on the
$2g$-dimensional torus in the holomorphic polarization.

Another quantization procedure has been introduced by Witten
in \cite{witten} using Feynman path integrals.
In his approach, states and observables are defined by path integrals
of the form
\begin{eqnarray*}
\int_{A}e^{\frac{i}{\hh}{
    L}(A)}
\mbox{trace}(\mbox{hol}_{\gamma}(A)){\mathcal D}A,
\end{eqnarray*}
where $L(A)$ is the Chern-Simons lagrangian
\begin{eqnarray*}
{ L}(A)=\frac{1}{4\pi}\int_{\Sigma_g\times [0,1]}
\mbox{tr}\left(A\wedge dA+\frac{2}{3}A\wedge A\wedge A\right).
\end{eqnarray*}
 According to
Witten, states and observables should be representable as
skeins in the skein modules of the linking number
 discussed in Section~\ref{sec:4}.

Witten's quantization model is symmetric with respect to the action
of the mapping class group of the surface, a property shared by
Weyl quantization in the guise of the exact Egorov identity
(\ref{egorov}). As we have  seen, the two quantization
models coincide.

It was Andersen \cite{andersen}
 who pointed out that the quantization of the Jacobian
that arises in Chern-Simons theory coincides with Weyl quantization.
For non-abelian Chern-Simons theory, this phenomenon was first observed
by the authors in \cite{gelcauribe1}.

In the sequel of this paper, \cite{gelcauribe}, we will conclude that,
for non-abelian Chern-Simons, the algebra of quantum group quantizations
of Wilson lines on a surface
and the Reshetikhin-Turaev representation of the mapping
class group of the surface are analogues of the group algebra of the
finite Heisenberg group and of the Hermite-Jacobi action. We will show
 how the element $\Omega$ corresponding to  the group $SU(2)$ can be derived
by studying the discrete sine transform.

\bibliographystyle{amsplain}

\end{document}